\documentclass[11pt]{article}
\usepackage{times,fullpage,amssymb,amsmath,amsthm,hyperref,xcolor,graphicx}

\hypersetup{pageanchor=false}

\newcommand{\R}{\mathbb{R}}
\newcommand{\eps}{\varepsilon}

\newcommand{\RRR}{\mathcal{R}}
\newcommand{\FFF}{\mathcal{F}}
\newcommand{\AAA}{\mathcal{A}}
\newcommand{\vdoor}{\text{v-door}}
\newcommand{\hdoor}{\text{h-door}}
\newcommand{\vreach}{\text{v-reach}}
\newcommand{\hreach}{\text{h-reach}}
\newcommand{\vDoors}{\text{v-Doors}}
\newcommand{\hDoors}{\text{h-Doors}}
\newcommand{\vDoorsRep}{\text{v-Doors-ID}}
\newcommand{\hDoorsRep}{\text{h-Doors-ID}}
\newcommand{\vreachrep}{\text{v-reach-ID}}
\newcommand{\hreachrep}{\text{h-reach-ID}}
\newcommand{\vreachinrep}{\text{v-reach-in-ID}}
\newcommand{\hreachinrep}{\text{h-reach-in-ID}}
\newcommand{\vreachoutrep}{\text{v-reach-out-ID}}
\newcommand{\hreachoutrep}{\text{h-reach-out-ID}}
\newcommand{\ID}{\text{ID}}
\newcommand{\RANK}{\text{rank}}
\newcommand{\change}[1]{\widehat{#1}}
\newcommand{\DD}{\change{D}}

\newtheorem{lemma}{Lemma}[section]
\newtheorem{theorem}[lemma]{Theorem}

\newtheorem{operation}{Operation}

\title{Shaving Two Logs for Discrete and Continuous Fr\'echet Distance}

\begin{document}

\author{Timothy M. Chan\thanks{
Siebel School of Computing and Data Science, University of Illinois at Urbana-Champaign, USA (tmc@illinois.edu).
Work supported in part by NSF Grant CCF-2224271.
} \and
Isaac M. Hair\thanks{
Department of Computer Science, University of California, Santa Barbara, USA (hair@ucsb.edu).
}
}

\maketitle
\thispagestyle{empty}
\setcounter{page}{0}

\begin{abstract}
We revisit the classical computational geometry problem of computing
the exact (discrete or continuous) Fr\'echet distance between two given polygonal curves
with $n$ vertices in a constant dimension~$d$.

\begin{enumerate}
\item We present an $O((n^2/\log^2n)\log^2\log n)$-time deterministic algorithm
for the discrete Fr\'echet distance problem, improving the previous
$O((n^2/\log n)\log\log n)$ result by Agarwal, Ben Avraham, Kaplan, and Sharir [SODA'13].

\item We present an $O((n^2/\log^2n)\log^2\log n)$-time deterministic algorithm
for the decision version of the continuous Fr\'echet distance problem,
improving the previous
$O((n^2/\log n)\log^2\log n)$ result by Buchin, Buchin, Meulemans, and Mulzer [SODA'14]
and the recent $O(n^2/\log^{1+\mu}n)$ randomized result by Cheng and Huang [SODA'25] for
a small unspecified constant $\mu>0$ (which converges to 0 as $d$ increases).

\item Finally, we present an $O((n^2/\log^2 n)\log^{O(1)}\log n)$-time Las Vegas randomized algorithm
for the continuous Fr\'echet distance problem, improving the previous 
$O(n^2\log^2\log n)$ randomized result by Buchin, Buchin, Meulemans, and Mulzer [SODA'14]
and the recent $O(n^2/\log^{\mu}n)$-time randomized result claimed by Cheng and Huang [SODA'25].
\end{enumerate}

Since these problems have near-quadratic OV-based conditional lower bounds and shaving of two log factors is the best that is known for similar, simpler problems such as longest common subsequence and edit distance,
our results reach the natural limit of the current state of the art (up to $\log\log$ factors).
\end{abstract}

\newpage
\tableofcontents
\thispagestyle{empty}
\newpage

\setcounter{page}{1}
\hypersetup{pageanchor=true}

\section{Introduction}

\newcommand{\Ohat}{\widehat{O}}

The Fr\'echet distance, a standard measure of similarity between two polygonal curves, is a very popular topic of study in computational geometry.
Given two sequences of points $P=\langle p_0,\ldots,p_n\rangle$ and $Q=\langle q_0,\ldots,q_n\rangle$ in Euclidean space $\R^d$ of constant dimension~$d$,
the \emph{discrete Fr\'echet distance} between $P$ and $Q$ is defined as the minimum
of $\max_{i=0}^N \|p_{\rho(i)}-q_{\sigma(i)}\|$ over all 
functions $\rho,\sigma: \{0,\ldots,N\}\rightarrow\{0,\ldots,n\}$ and all $N$ with $\rho(0)=\sigma(0)=0$,
$\rho(N)=\sigma(N)=n$, and $\rho(i\!+\!1)-\rho(i),\sigma(i\!+\!1)-\sigma(i)\in\{0,1\}$ for all $i\in\{0,\ldots,N-1\}$.
The \emph{continuous Fr\'echet distance} between $P$ and $Q$ is defined as the minimum
of $\max_{t\in [0,1]} \|P(\rho(t))-Q(\sigma(t))\|$ over all nondecreasing, continuous functions $\rho,\sigma: [0,1]\rightarrow [0,n]$ with $\rho(0)=\sigma(0)=0$ and $\rho(1)=\sigma(1)=n$, where $P(i+\alpha) := p_i + \alpha(p_{i+1}-p_i)$
and $Q(i+\alpha) := q_i + \alpha(q_{i+1}-q_i)$ for $i\in\{0,\ldots,n-1\}$ and $\alpha\in [0,1]$.
(The analogy with ``walking the dog'' with the shortest leash is often mentioned in prior papers on this topic.)

The discrete Fr\'echet distance problem has a straightforward $O(n^2)$-time algorithm via dynamic programming (DP), similar to textbook algorithms for the longest common subsequence (LCS) or edit distance problem.  In their seminal paper from SoCG'92 (which won the SoCG'22 test-of-time award),
Alt and Godau~\cite{AltG95} obtained an $O(n^2\log n)$-time algorithm for the continuous Fr\'echet
distance problem, by first solving the decision problem (deciding whether the Fr\'echet distance
is at most a given value $r$) in $O(n^2)$ time via DP, and then applying some form of parametric search.

Although Fr\'echet-distance-related papers have regularly appeared in past SODAs, SoCGs, and other algorithm conferences (for example,
on faster approximation algorithms~\cite{BringmannM16,ChanR18,ColombeF21,HorstKOS23,ChengHZ25}, faster exact algorithms for realistic input~\cite{DriemelHW12,ConradiHHO26} or special cases~\cite{Blank26}, different variants of Fr\'echet distance~\cite{DriemelH13,Har-PeledNSS16,AvrahamFKKS15,Har-PeledR14,BuchinOS17,FiltserGMP24,HorstKOS25},
 more complicated optimization problems under the Fr\'echet distance such as finding a translation to
minimize the distance~\cite{BringmannKN21}, data structures to find a curve close to a query curve under the Fr\'echet distance~\cite{AfshaniD18,BringmannDNP22}, algorithm engineering~\cite{BringmannKN19,Har-PeledRR25,ConradiHR26}, etc.), in this paper we are interested in the \emph{original} problem---finding the best theoretical worst-case algorithm to compute the exact (discrete or continuous) Fr\'echet distance for two given polygonal curves.

\paragraph{Previous Results.}
Throughout this paper, we assume the standard real RAM model, where a word can store either an input real number or a $(\log n)$-bit integer, and standard operations on words take constant time (and operations on reals are limited to constant-degree algebraic predicates). This includes bitwise operations and table lookup on integer words, as in earlier log-shaving algorithms for 3SUM and APSP~\cite{Chan20,Chan10}. The main prior algorithms can be summarized as follows:
\begin{itemize}
\item For discrete Fr\'echet distance:
\begin{itemize}
\item
In SODA'13, Agarwal, Ben Avraham, Kaplan, and Sharir~\cite{AgarwalAKS14} presented a slightly subquadratic algorithm for the decision problem running in $\Ohat(n^2/\log^2n)$ time;%
\footnote{Throughout the introduction, we use the $\Ohat$ notation to hide (tiny) $\log\log n$ factors, to make time bounds look cleaner. We will spell out the number of $\log\log n$ factors in our theorem statements later.  Also, for simplicity, we focus on just the balanced case when $P$ and $Q$ have $O(n)$ vertices.  In the imbalanced case when $P$ has $m$ vertices and $Q$ has $n$ vertices, $n^2$ becomes $mn$ in the time bounds in all previous results as well as our new result (with some of the $\log n$ factors in the denominators replaced by $\log\min\{m,n\}$).} 
they then obtained an algorithm for the original optimization problem 
running in $\Ohat(n^2/\log n)$ time via binary search (and a known distance selection algorithm).
\end{itemize}
\item
For continuous Fr\'echet distance: 
\begin{itemize}
\item In SODA'14, Buchin, Buchin, Meulemans, and Mulzer~\cite{BuchinBMM17}  
gave a slightly subquadratic algorithm for the decision problem running in $\Ohat(n^2/\log n)$ time; they then obtained,
in the $d=2$ case, a (randomized) algorithm for the optimization problem running in $\Ohat(n^2)$ time.
\item 
Recently, in SODA'25, Cheng and Huang~\cite{ChengH25} presented a slightly improved (randomized) decision algorithm running in $O(n^2/\log^{1+\mu}n)$ time for some unspecified small constant
$\mu>0$ (which converges to 0 as a function of $d$); they then claimed that this implies a slightly subquadratic (randomized) algorithm for the optimization problem running in $O(n^2/\log^\mu n)$ time.\footnote{
Cheng and Huang's paper~\cite{ChengH25} did not provide details on how to reduce the optimization problem to the decision problem, however.  They cited Alt and Godau's paper~\cite{AltG95}, but as the authors in~\cite{BuchinBMM17} pointed out earlier, Alt and Godau's reduction has an extra overhead cost of $O(n^2\log n)$ and so cannot be directly used if we aim for a subquadratic algorithm.
See Section~\ref{sec:reduce} for more information.
}
\end{itemize}
\end{itemize}

\paragraph{Why Shaving Logs?}
All the preceding algorithms are about shaving logarithmic factors, for good reasons:
in a landmark paper in fine-grained complexity, Bringmann~\cite{Bringmann14} (see also \cite{BuchinOS19}) proved that
if an $O(n^{2-\eps})$-worst-case-time algorithm were to exist for the discrete or continuous Fr\'echet
distance problem for some constant $\eps>0$, then the Orthogonal Vectors (OV) Hypothesis, and consequently the Strong Exponential Time Hypothesis (SETH), would be false (this result paved
the way for subsequent conditional lower bounds for LCS, edit distance, and other related problems~\cite{BackursI18,AbboudBW15,BringmannK15}).  In fact, Abboud, Hansen, Vassilevska Williams, and Williams~\cite{AbboudHWW16} later showed for a sufficiently
large constant $c$ that the existence of an $O(n^2/\log^c n)$-time algorithm would imply a new lower bound about circuit complexity.
These hardness results give more urgency to the question of exactly how many logs can be shaved
for Fr\'echet distance. In particular, shaving a second log brings this fundamental geometric problem to the same bound, up to $\log\log n$ factors, as LCS, edit distance, and 3SUM.

Ever since the appearance of the ``four Russians'' method~\cite{ArlazarovDKF70}, there has been much interest in shaving logs for fundamental problems in fine-grained
complexity, e.g., \cite{Fredman76,FederM95,Chan10,Chan12,Chan15,BarbaCILOS19,Yu18,GronlundP18,Chan20}.  For LCS and edit distance, $\Ohat(n^2/\log^2n)$-time algorithms
were known~\cite{MasekP80,BilleF08,Grabowski16}.  Current state of the art suggests that two log factors
are the most we could hope to shave.  While the authors in~\cite{AgarwalAKS14} had already succeeded in 
shaving two logs for the decision version of discrete Fr\'echet distance,
Cheng and Huang's work shaved only a $\log^{1+\mu}n$ factor for the decision version of continuous
Fr\'echet distance, and so leaves room for further improvement.
For the original optimization version of discrete or continuous Fr\'echet distance,
current approaches all use some form of parametric or binary search and inherently 
lose one log factor,\footnote{
There are some specialized techniques for reducing optimization to decision problems
that do not lose log factors~\cite{Chan99}, but the Fr\'echet distance problem does not fit in these
settings.
} 
and so $n^2/\log n$ time seems the best we could hope for, one might think\ldots

\paragraph{New Results.}
In this paper, we not only obtain an $\Ohat(n^2/\log^2n)$-time algorithm for the decision version of the
continuous Fr\'echet distance problem, improving Cheng and Huang's work,
but we also obtain $\Ohat(n^2/\log^2n)$-time algorithms for the optimization versions
of both the discrete and continuous Fr\'echet distance problems, surprisingly, without 
losing any logarithmic factor from parametric or binary search!
(The algorithm for continuous Fr\'echet distance is Las Vegas.) 
With time bounds matching the state of the art for LCS and edit distance, our work
essentially closes the decade-old line of research from the authors in~\cite{AgarwalAKS14} and the authors in~\cite{BuchinBMM17} to Cheng and Huang~\cite{ChengH25} (ignoring tiny $\log\log$ factors).

\paragraph{Techniques.}
Not surprisingly, the previous algorithms~\cite{AgarwalAKS14,BuchinBMM17,ChengH25} and our algorithms all share some common ingredients (division into slabs and blocks and/or mini-blocks, constructing geometric arrangements, encoding certain order information to enable bit packing and
table lookups, etc.), but in addition we use several new
interesting ideas. The main new contribution is a conversion from decision to optimization with only a $\log^{O(1)}\log n$ overhead, based on a simple decomposition of the inputs. Our general approach may be useful for other geometric optimization problems where the usual decision to optimization conversions lose a log factor.
\begin{itemize}
\item For discrete Fr\'echet distance (see Section~\ref{sec:discrete}), the key new idea is conceptually simple: We first invoke the decision algorithm only $O(\log\log n)$ times (instead of $O(\log n)$),
which yields an interval that contains the answer and has at most $n^2/\log^C n$
critical values for an arbitrarily large constant $C$ (for the discrete problem, the critical values are pairwise distances).
We then classify blocks that contain such critical values as ``bad'', and blocks that avoid such values as ``good''. On one hand, good blocks can be handled more efficiently; on the other hand,
there are not too many bad blocks (at most $n^2/\log^C n$), so they can be handled naively.
The entire algorithm is describable in around 2 pages (assuming standard geometric data structures and the decision algorithm in~\cite{AgarwalAKS14} as a black box).
\item For the decision version of continuous Fr\'echet distance (see Section~\ref{sec:cont:decis}),
we not only improve but simplify
Cheng and Huang's previous solution~\cite{ChengH25}.  For example, they used point location data structures for hyperplane arrangements in nonconstant dimensions as a subroutine~\cite{Meiser93,EzraHKS20} (which account for the occurrence of $\mu$ in their time bound).
Our solution also considers higher-dimensional arrangements of surfaces but only in the running-time analysis; the algorithm itself avoids explicitly building such higher-dimensional arrangements, and so is simpler.  (In addition, we do not need to ``linearize'' and turn surfaces into hyperplanes.)
\item For the optimization version of continuous Fr\'echet distance (see Section~\ref{sec:cont}),
we again rely on the ``good'' vs. ``bad'' block idea, but the problem gets much tougher (for one thing, critical values are defined by triples instead of pairs), and we need an intricate combination of a slew of new technical ideas.  Still, the heart of the solution is based on some simple geometric observations: as is well-known~\cite{AltG95}, Fr\'echet distance computation can be restated in terms of finding $xy$-monotone paths in a certain space, and we exploit the fact that 
two such paths that cross yield more $xy$-monotone paths
(see Lemmas~\ref{lem:intervalspan},~\ref{lem:sourcedominance}, and~\ref{lem:alivesuccessor}). Such crossing-path properties have been used in earlier Fr\'echet distance algorithms~\cite{AltG95,BenAvrahamKS15,BringmannKN21}; here we use them to preprocess good blocks for fast reachability queries.
\end{itemize}

\section{Discrete Fr\'echet Distance Algorithm}
\label{sec:discrete}
\label{sec:discreteoptimize}

We are given points $p_0,\ldots,p_n,q_0,\ldots,q_n\in \R^d$, with $d=O(1)$.
The discrete Fr\'echet distance problem is equivalent to computing $r^*=D_{n,n}$, where
\[ D_{i,j} = \max\{ \min\{ D_{i-1,j}, D_{i,j-1}, D_{i-1,j-1}\},\ \|p_i-q_j\| \}\] for all $i,j \geq 0$ such that $(i,j)\neq(0,0)$,
with the base cases $D_{0,0}=\|p_0-q_0\|$ and $D_{-1,j}=D_{i,-1}=\infty$.
Naively evaluating the above recursive formula by
dynamic programming (DP) yields an $O(n^2)$-time algorithm, which we would like to improve.

We assume the availability of a decision algorithm which can determine whether $r^*\le r$ for 
any given value $r$ in $T_D(n)$ time. By Agarwal, Ben Avraham, Kaplan, and Sharir's decision algorithm~\cite{AgarwalAKS14},
$T_D(n)=O((n^2/\log^2n)\log\log n)$.

Let $U=\{\|p_i-q_j\|: 0\le i,j\le n\}$ be the multiset of all pairwise distances.
We know that $r^*\in U$.
Although we cannot afford to explicitly enumerate all distances in $U$, we can select the $k$th smallest distance in $U$ in $n^{2-1/O(d)}$ time by known algorithms (using standard
range searching techniques)~\cite{AgarwalASS93,AgarwalE98,ChanZ24}.
As a first step, we either find $r^*$ or find an open interval $(r^-,r^+)$ containing $r^*$, which contains at most $n^2/\log^{3c} n$ distances in $U$ for some sufficiently large constant $c$.  This can be done via a binary search, which requires $O(\log\log n)$ calls to the decision algorithm and selection algorithm, and takes
$O((n^{2-1/O(d)} + T_D(n))\log\log n) = O((n^2/\log^2n)\log^2\log n)$ time. Our plan from here is to solve the optimization problem directly, with no further calls to a decision oracle.\footnote{To detect equality in this and subsequent searches, we also query infinitesimally below a tested value. This costs only constant overhead and also applies to the continuous decision algorithm in the next section.}

Values outside the interval $(r^-,r^+)$ are no longer important.
So, define $\change{x}$ to be $x$ if $x\in (r^-,r^+)$,
or $\infty$ if $x\ge r^+$, or $-\infty$ if $x\le r^-$.
We will compute $\DD_{i,j}$ via DP, sped up by blocking and bit packing techniques. 
Because $r^* \in (r^-, r^+)$, we have $D_{n, n} = \DD_{n, n}$, so this is sufficient to compute the discrete Fr\'echet distance.

We will use two block sizes:
let $B\approx \log^c n$ and $b\approx \eps \log n/\log\log n$ for 
a sufficiently small constant $\eps>0$.  
A \emph{block} refers to a square $[I_0,I_0+B]\times [J_0,J_0+B]$ for some $I_0,J_0$ divisible by $B$.  A \emph{mini-block} refers to a square $[i_0,i_0+b]\times [j_0,j_0+b]$ for some $i_0,j_0$ divisible by $b$.
A \emph{vertical slab} refers to a rectangle $[I_0,I_0+B]\times [0,n]$ for some $I_0$ divisible by $B$. 
(See Figure~\ref{fig:blocks}(a).)
Assume without loss of generality that $n$, $B$, and $b$ are all powers of 2. This implies that each block contains exactly $B^2/b^2$ mini-blocks, and the entire DP table contains $n^2/B^2$ blocks. Observe that each pair of adjacent blocks overlaps along a boundary, and similarly for each pair of adjacent mini-blocks.

\begin{figure}
    \centering
    \includegraphics[scale=0.35]{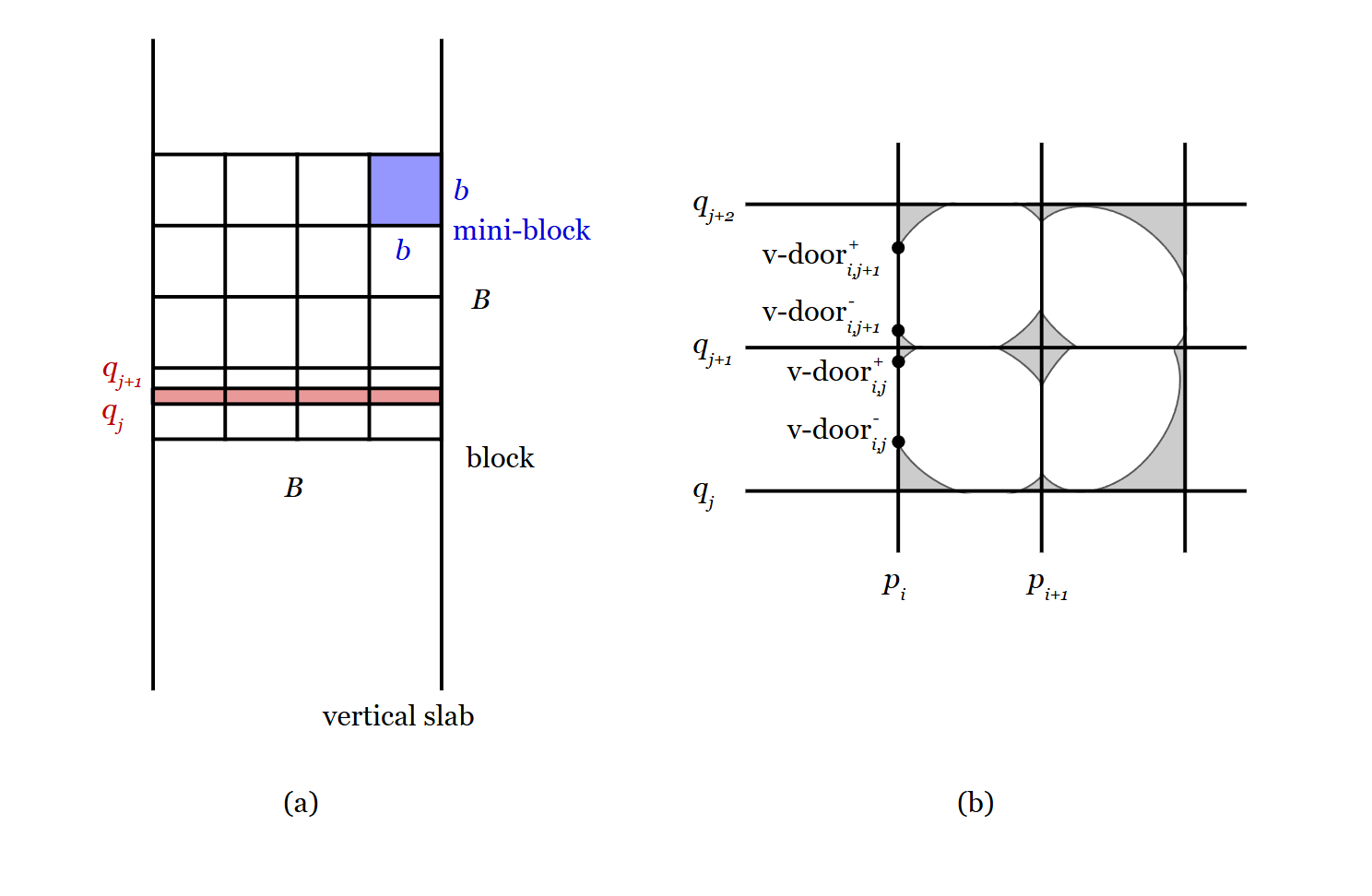}
    \caption{(a) Vertical slabs, blocks, and mini-blocks.  Each row in the slab corresponds to a pair of consecutive points $(q_j, q_{j+1})$.  (b)~Doors at grid edges for the continuous problem.}\label{fig:blocks}
\end{figure}

For each vertical slab $\sigma=[I_0,I_0+B]\times [0,n]$, 
consider the arrangement $\AAA^{(\sigma)}$ of the $B+1$ balls centered
at $p_{I_0},\ldots,p_{I_0+B}$ of radius $r^-$.
Since $d = O(1)$, we can construct in time $B^{O(d)}$ a data structure that allows us to locate any point with respect to the arrangement in $O(\log B)$ time~\cite{AgarwalS00a,ChazelleEGS91}.
For each $j\in\{0,\ldots,n\}$, query this data structure to
find the cell of $\AAA^{(\sigma)}$ containing $q_j$, and define
$\ID^{(\sigma)}_j$ to be a label of this cell. The arrangement has $B^{O(d)}$ cells, so the label can be stored in $O(\log (B^{O(d)}))=O(\log B)$ bits. 
The total time over all vertical slabs $\sigma$ is $O((n/B) (B^{O(d)} + n\log B))= o(n^2/\log^2n)$.
Note that given $\sigma$ and $\ID^{(\sigma)}_j$, we have enough information to deduce whether $\|p_i-q_j\|\le r^-$ or $\|p_i-q_j\|> r^-$ for every $i\in [I_0,I_0+B]$.

For each block $\beta=[I_0,I_0+B]\times [J_0,J_0+B]$, let $U_{\beta} = \{\|p_i - q_j\| : (i, j) \in \beta\}$. We classify $\beta$ as \emph{bad} if $U_{\beta} \cap (r^-, r^+) \neq \emptyset$, or \emph{good} otherwise.
The number of bad blocks is at most $O(n^2/\log^{3c}n)$, because every pair $(i, j)$ is contained within at most 4 blocks.
As with the set $U=\{\|p_i-q_j\|: 0\le i,j\le n\}$, we can select the $k$th smallest distance in $U_{\beta}$ for a given block $\beta$ in time $B^{2-1/O(d)}$. Thus we can test whether a block is bad in $B^{2-1/O(d)}$ time, via binary search.
The total time over all blocks is $O((n/B)^2 B^{2-1/O(d)})=o(n^2/\log^2n)$, assuming $c$ is a sufficiently large constant.

\paragraph{DP over Blocks and Mini-Blocks.}
Consider a block $\beta=[I_0,I_0+B]\times [J_0,J_0+B]$ contained in a vertical slab $\sigma$.
Assume that we have already computed $\DD_{I_0,J_0},\ldots,\DD_{I_0,J_0+B}$ and
$\DD_{I_0,J_0},\ldots,\DD_{I_0+B,J_0}$. (We assume that the values $\DD_{0,0}, \ldots, \DD_{0,n}$ and $\DD_{0,0}, \ldots, \DD_{n,0}$ were already computed naively by DP in time $O(n)$.) There are two cases:

\paragraph{Case 1: $\beta$ is bad.}
We compute $\DD_{I_0+B,J_0},\ldots,\DD_{I_0+B,J_0+B}$
and $\DD_{I_0,J_0+B},\ldots,\DD_{I_0+B,J_0+B}$ naively by DP in $O(B^2)$ time.

\paragraph{Case 2: $\beta$ is good.}
We first sort the list $L^{(\beta)}=\{\DD_{I_0,J_0},\ldots,\DD_{I_0,J_0+B}\}\cup \{\DD_{I_0,J_0},\ldots,\DD_{I_0+B,J_0}\}$ in $O(B\log B)$ time.
Because $\beta$ is good, $\|p_i-q_j\|$ is either at most $r^-$ or at least $r^+$ for every $(i,j)\in\beta$ (and so $\change{\|p_i-q_j\|}$ is either $-\infty$ or $\infty$)---and we can deduce which is the case solely using $\ID_j^{(\sigma)}$ and $\sigma$.
Consequently, for every $(i,j)\in\beta$, we know that $\DD_{i,j}$ must be in $L^{(\beta)}\cup\{-\infty,+\infty\}$.
Let $\RANK^{(\beta)}(x)$ denote the rank of $x$ with respect to $L^{(\beta)}$ (we include two special symbols for when $x = -\infty$ or $x = \infty$). Observe that $\RANK^{(\beta)}(x)$ can be stored using $O(\log B)$ bits.

We build a table to implement the following operation:

\begin{operation}\label{op:discrete}\rm (Processing a mini-block $\mu$)
\begin{tabbing}
\ \ \ \ \ \ \ \= Output:\ \ \=\+\kill
Input: 
\> vertical slab $\sigma=[I_0,I_0+B]\times[0,n]$, \\
\> index $i_0\in [I_0,I_0+B)$ divisible by $b$, \\
\> $\langle\ID^{(\sigma)}_{j_0},\ldots,\ID^{(\sigma)}_{j_0+b}\rangle$,\\  
\> $\langle \RANK^{(\beta)}(\DD_{i_0,j_0}),\ldots,\RANK^{(\beta)}(\DD_{i_0,j_0+b})\rangle$, and\\
\>  $\langle \RANK^{(\beta)}(\DD_{i_0,j_0}),\ldots,\RANK^{(\beta)}(\DD_{i_0+b,j_0})\rangle$\\
\> \quad for a mini-block $\mu=[i_0,i_0+b]\times [j_0,j_0+b]$ contained in a good block $\beta\subset\sigma$.\\
Output: 
\> $\langle \RANK^{(\beta)}(\DD_{i_0+b,j_0}),\ldots,\RANK^{(\beta)}(\DD_{i_0+b,j_0+b})\rangle$, and\\
\> $\langle \RANK^{(\beta)}(\DD_{i_0,j_0+b}),\ldots,\RANK^{(\beta)}(\DD_{i_0+b,j_0+b})\rangle$.
\end{tabbing}
\end{operation}



Note that the input contains enough information to deduce the output, by simulating the standard DP\@.
It is not necessary to be given the index $j_0$, and we do not need to know the specific block $\beta$ for which the ranks were computed.
Each input vector has $O(b\log B)=O(\eps\log n)$ bits and thus fits in a word.
The preprocessing time across all vertical slabs $\sigma$ and choices of $i_0$ is $O((n/B)\cdot (B/b)\cdot  B^{O(1)}\cdot 2^{O(b\log B)})=n^{1+O(\eps)}$, which is strongly subquadratic assuming $\varepsilon > 0$ is a sufficiently small constant. Afterwards, the operation takes $O(1)$ time by table lookup.

As initialization for the block $\beta$, for each $j_0\in [J_0,J_0+B)$ divisible by $b$, we form
the vectors $\langle\ID^{(\sigma)}_{j_0},\ldots,\ID^{(\sigma)}_{j_0+b}\rangle$ and 
$\langle \RANK^{(\beta)}(\DD_{I_0,j_0}),\ldots,\RANK^{(\beta)}(\DD_{I_0,j_0+b})\rangle$
in $O(b)$ time, and pack each vector into a word.
Similarly, for each $i_0\in [I_0,I_0+B)$ divisible by $b$,
we form the vector $\langle \RANK^{(\beta)}(\DD_{i_0,J_0}),\ldots,$ $\RANK^{(\beta)}(\DD_{i_0+b,J_0})\rangle$ and pack it into a word.  The time per block is $O((B/b)\cdot b)=O(B)$.

Processing all mini-blocks $\mu=[i_0,i_0+b]\times [j_0,j_0+b]\subset\beta$ 
in increasing lexicographical order of $(i_0,j_0)$ using Operation~\ref{op:discrete}, 
we obtain $\RANK^{(\beta)}(\DD_{I_0+B,J_0}),\ldots,\RANK^{(\beta)}(\DD_{I_0+B,J_0+B})$ and
$\RANK^{(\beta)}(\DD_{I_0,J_0+B}),\ldots,$ $\RANK^{(\beta)}(\DD_{I_0+B,J_0+B})$. The time per mini-block is $O(1)$, so the time over the entire block is $O(B^2/b^2)$. The resulting ranks can be mapped back to real values in $O(B)$ additional time. 

Repeating this process for all blocks $\beta$ in increasing lexicographical order of $(I_0,J_0)$,
we obtain $\DD_{n,n}=D_{n,n}$.
The total cost of handling Case 1 is $O((n^2/\log^{3c} n)\cdot B^2) = o(n^2/\log^2n)$.
The total cost for handling Case 2 is $O((n/B)^2 (B\log B + B^2/b^2)) = O((n^2/\log^2n)\log^2\log n)$.

\begin{theorem}
The discrete Fr\'echet distance problem in any constant dimension can be solved in\\ $O((n^2/\log^2n)\log^2\log n)$
deterministic time.
\end{theorem}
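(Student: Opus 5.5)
The proof assembles the steps already described in this section and checks their correctness and running time.

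\emph{Correctness.} The first step is the reduction to $\DD$. The binary search over $U$, using $O(\log\log n)$ calls to the selection algorithm and to the decision algorithm of~\cite{AgarwalAKS14}, either returns $r^*$ directly or yields $(r^-,r^+)\ni r^*$ containing at most $n^2/\log^{3c}n$ distances of $U$.

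The map $x\mapsto\change{x}$ is monotone, so it commutes with $\max$ and $\min$. Applying it to the recurrence for $D_{i,j}$ term by term shows that $\DD_{i,j}=\change{D_{i,j}}$, which satisfies the same recurrence with $\|p_i-q_j\|$ replaced by its hat. Since $r^*\in(r^-,r^+)$, we get $\DD_{n,n}=r^*$.

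Next I would verify the block invariant: each output of a block equals the true $\DD$ value on its top and right boundaries. For bad blocks this holds trivially by the naive DP. For good blocks, every $\change{\|p_i-q_j\|}$ is $\pm\infty$. An induction over the recurrence then shows every $\DD_{i,j}$ inside $\beta$ lies in $L^{(\beta)}\cup\{\pm\infty\}$, because $\max$ and $\min$ only ever select one of their arguments. Moreover, the recurrence only compares values, so it can be simulated on ranks alone.

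Whether $\|p_i-q_j\|\le r^-$ is determined by $\sigma$, $i$, and $\ID^{(\sigma)}_j$. Hence Operation~\ref{op:discrete} is well defined from its input, and its table entries are correct. Chaining mini-blocks in lexicographic order then supplies each one with correct left and bottom inputs, by induction.

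\emph{Running time.} I would sum the costs already computed:
\begin{itemize}
\item binary search: $O((n^{2-1/O(d)}+T_D(n))\log\log n)=O((n^2/\log^2 n)\log^2\log n)$;
\item arrangements, point location, and IDs: $O((n/B)(B^{O(d)}+n\log B))=o(n^2/\log^2n)$;
\item bad-block classification: $O((n/B)^2B^{2-1/O(d)})$, which is $o(n^2/\log^2n)$ once $c$ is large;
\item tables: $n^{1+O(\eps)}$;
\item bad blocks: $O(n^2B^2/\log^{3c}n)=O(n^2/\log^{c}n)$;
\item good blocks: $O((n/B)^2(B\log B+B^2/b^2))=O((n^2/\log^2n)\log^2\log n)$, using $b=\Theta(\log n/\log\log n)$.
\end{itemize}

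The main point to check is the arrangement step. Its $B^{O(d)}$ cells must have $O(\log B)$-bit labels and admit $O(\log B)$-time queries. The exponent $O(d)$ must also be small enough that $(n/B)B^{O(d)}$ is subquadratic; this holds since $B$ is polylogarithmic. Finally, $\eps$ is chosen small relative to $c$ so that each packed word has $O(b\log B)\le\eps'\log n$ bits. Summing the items above gives the stated deterministic bound $O((n^2/\log^2n)\log^2\log n)$.
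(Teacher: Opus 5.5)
Your proposal is correct and follows the paper's proof essentially step for step. It uses the same ingredients: the $O(\log\log n)$-call binary search down to an interval containing few distances, the good/bad block classification, naive DP on bad blocks, and rank-based table lookup via Operation~\ref{op:discrete} on good blocks, with the same cost accounting. Your observation that the nondecreasing map $x\mapsto\change{x}$ commutes with $\max$ and $\min$, giving $\DD_{i,j}=\change{D_{i,j}}$, is a clean justification of the paper's one-line claim that $\DD_{n,n}=D_{n,n}$.
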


\section{Continuous Fr\'echet Distance Decision Algorithm}
\label{sec:cont:decis}
\label{sec:continuousdecision}


We are given points $p_0,\ldots,p_n,q_0,\ldots,q_n\in \R^d$, with $d=O(1)$, and a value~$r$.
Let \[ F_{i,j} =\{(i+s,j+t): s,t\in [0,1],\ \|(p_i+s(p_{i+1}-p_i))-(q_j+t(q_{j+1}-q_j))\|\le r\}. \] Each $F_{i,j}$ is a convex region defined by a quadratic inequality, clipped to the grid cell $[i,i+1]\times [j,j+1]$.
The \emph{free region} is $\FFF = \bigcup_{i,j\in [0,n)}F_{i,j}$.
The \emph{reachability region} $\RRR$ consists of all $(x,y)\in [0,n]\times[0,n]$ for which there exists an $xy$-monotone path from $(0,0)$ to $(x,y)$ in $\FFF$. (We say that a path is $xy$-monotone if it consists of a sequence of points whose $x$ coordinates are nondecreasing, and whose $y$ coordinates are nondecreasing.)
The continuous Fr\'echet distance decision problem is equivalent to deciding whether $(n,n)\in\RRR$.

The \emph{door at a grid edge $e$} refers to the intersection
of $\FFF$ with $e$; if nonempty, it is a vertical or horizontal line segment.
Let $\vdoor^-_{i,j}$ and $\vdoor^+_{i,j}$ denote the min and max $y$-coordinates respectively
of the door at the vertical grid edge $\{i\}\times [j,j+1]$, or set $\vdoor^-_{i,j} = \vdoor^+_{i,j} =$ ``-'' if the door is empty.
(See Figure~\ref{fig:blocks}(b).)
Let $\hdoor^-_{i,j}$ and $\hdoor^+_{i,j}$ denote the min and max $x$-coordinates respectively
of the door at the horizontal grid edge $[i,i+1]\times \{j\}$, again setting $\hdoor^-_{i,j} = \hdoor^+_{i,j} =$ ``-'' if the door is empty.
There are constant-complexity piecewise-algebraic functions $f^-,f^+,g^-,g^+$ giving the endpoint coordinates relative to the grid edge, so that
$\vdoor^-_{i,j}=j+f^-(p_i,q_j,q_{j+1},r)$,
$\vdoor^+_{i,j}=j+f^+(p_i,q_j,q_{j+1},r)$,
$\hdoor^-_{i,j}=i+g^-(p_i,p_{i+1},q_j,r)$, and
$\hdoor^+_{i,j}=i+g^+(p_i,p_{i+1},q_j,r)$ when the corresponding doors are nonempty.

The \emph{reach at a grid edge $e$} refers to the intersection
of $\RRR$ with $e$; if nonempty, it is also a vertical or horizontal line segment, contained inside the door at $e$.
Let $\vreach^-_{i,j}$ and $\vreach^+_{i,j}$ denote the min and max $y$-coordinates respectively
of the reach at the vertical grid edge $\{i\}\times [j,j+1]$ (or use ``-'' to indicate that the reach is empty).
Assuming the reach is nonempty, we have 
$\vreach^+_{i,j}=\vdoor^+_{i,j}$.
Let $\hreach^-_{i,j}$ and $\hreach^+_{i,j}$ denote the min and max $x$-coordinates respectively
of the reach at the horizontal grid edge $[i,i+1]\times \{j\}$ (or, as before, use ``-'' to indicate that the reach is empty). Assuming the reach is nonempty, we have
$\hreach^+_{i,j}=\hdoor^+_{i,j}$.  We will compute $\vreach^-_{i,j}$ and $\hreach^-_{i,j}$ via dynamic programming (DP), as these values (along with $\vdoor^+_{i,j}$ and $\hdoor^+_{i,j}$) determine $\vreach^+_{i,j}$ and $\hreach^+_{i,j}$.

To this end, there are simple recursive formulas relating $\vreach^-_{i+1,\allowbreak{}j}$ and $\hreach^-_{i,\allowbreak{}j+1}$ in terms of
$\vreach^-_{i,\allowbreak{}j}$ and $\hreach^-_{i,\allowbreak{}j}$, 
given $\vdoor^-_{i,\allowbreak{}j},\allowbreak{}\vdoor^+_{i,\allowbreak{}j},\allowbreak{}\vdoor^-_{i+1,\allowbreak{}j},\allowbreak{}\vdoor^+_{i+1,\allowbreak{}j},\allowbreak{}\hdoor^-_{i,\allowbreak{}j},\allowbreak{}\hdoor^+_{i,\allowbreak{}j},\allowbreak{}\hdoor^-_{i,\allowbreak{}j+1},\allowbreak{}$ $\hdoor^+_{i,\allowbreak{}j+1}$ (the precise formulas are not important, but can be found e.g.\ in \cite[pp.\ 188]{BuchinBMM17}).
The exact coordinate values of the doors are not important; what matters is how they are ordered.\footnote{All orderings below retain equality classes and the empty-door flags; labels break ties only within an equality class. Ranks distinguish equality with an endpoint from a position strictly between consecutive distinct endpoints. These rules do not change any encoding-size bound.}  In the next two subsections, we will describe how to compactly encode these orderings, to enable bit packing for speedup.

We will use two block sizes:
let $B\approx \log^4 n$ and $b\approx \eps \log n/\log\log n$ for 
a sufficiently small constant $\eps>0$. Assume without loss of generality that $n$, $B$, and $b$ are all
powers of 2. We inherit the definitions of blocks, mini-blocks, and slabs from Section~\ref{sec:discreteoptimize}.
A \emph{block} refers to a square $[I_0,I_0+B]\times [J_0,J_0+B]$ for some $I_0,J_0$ divisible by $B$.  A \emph{mini-block} refers to a square $[i_0,i_0+b]\times [j_0,j_0+b]$ for some $i_0,j_0$ divisible by $b$.
A \emph{vertical slab} refers to a rectangle $[I_0,I_0+B]\times [0,n]$ for some $I_0$ divisible by $B$.  A \emph{horizontal slab} refers to a rectangle $[0,n]\times [J_0,J_0+B]$ for some $J_0$ divisible by $B$.  A block is then the intersection of a vertical and a horizontal slab.

\subsection{Encoding Doors in a Slab}\label{sec:encodeslab}

For a vertical slab $\sigma=[I_0,I_0+B]\times [0,n]$ and each $j\in [0,n)$, define $\vDoors^{(\sigma)}_j$ to be the sorted ordering of $\vdoor^-_{I_0,j},\vdoor^+_{I_0,j},\ldots,\vdoor^-_{I_0+B,j},\vdoor^+_{I_0+B,j}$.

For a fixed $\sigma$,
the number of possible orderings for $\vDoors^{(\sigma)}_j$ is bounded by the number
of cells in the arrangement $\AAA^{(\sigma)}$ of 
surfaces $\{(\eta,\eta',\rho)\in\R^{2d+1}: f^-(p_{i},\eta,\eta',\rho)=f^-(p_{i'},\eta,\eta',\rho)\}$,
$\{(\eta,\eta',\rho)\in\R^{2d+1}: f^+(p_{i},\eta,\eta',\rho)=f^+(p_{i'},\eta,\eta',\rho)\}$,
and $\{(\eta,\eta',\rho)\in\R^{2d+1}: f^-(p_{i},\eta,\eta',\rho)=f^+(p_{i'},\eta,\eta',\rho)\}$,
over all $i,i'\in [I_0,I_0+B]$.\footnote{Notice that these surfaces represent when a door appears/disappears, or when the ordering of two door endpoints changes. Here and below, arrangements of endpoint equalities can also be augmented by the constant-degree polynomial sign tests defining the endpoint branches, nonemptiness, and grid-edge endpoint incidences, although these are not strictly required.}  This is an arrangement of $O(B^2)$ surfaces in $2d+1$ dimensions, which has $B^{O(d)}$ cells, and we can generate all cells and their
corresponding orderings in $B^{O(d)}$ time~\cite{AgarwalS00a}.
For each $j$, we find the cell of $\AAA^{(\sigma)}$ containing $(q_j,q_{j+1},r)$
by point location~\cite{AgarwalS00a,ChazelleEGS91} in $O(\log B)$ time, and define
$\vDoorsRep^{(\sigma)}_j$ to be a label of this cell, which requires $O(\log (B^{O(d)}))=O(\log B)$ bits.
The total time over all vertical slabs $\sigma$ is $O((n/B) (B^{O(d)} + n\log B))= o(n^2/\log^2n)$.
Note that $\vDoors^{(\sigma)}_j$ can be recovered from $\vDoorsRep^{(\sigma)}_j$,
assuming that $\sigma$ is given (the IDs are dependent on $\sigma$).

Similarly, for a horizontal slab $\tau=[0,n]\times [J_0,J_0+B]$,
define $\hDoors^{(\tau)}_i$ to be the sorted ordering
of $\hdoor^-_{i,J_0},\hdoor^+_{i,J_0},\ldots,\hdoor^-_{i,J_0+B},\hdoor^+_{i,J_0+B}$,
and define $\hDoorsRep^{(\tau)}_i$ analogously.
As above, we find $\hDoorsRep^{(\tau)}_i$ for each choice of horizontal slab $\tau$ and each choice of $i$ in time $O((n/B) (B^{O(d)} + n\log B))= o(n^2/\log^2n)$.

\subsection{Encoding Doors in a Mini-Block}
\label{sec:decisionencodingdoors}

For a mini-block $\mu=[i_0,i_0+b]\times [j_0,j_0+b]$ and each $j\in [j_0,j_0+b)$,
define $\vDoors^{(\mu)}_j$ to be the sorted ordering
of $\vdoor^-_{i_0,j},\vdoor^+_{i_0,j},\ldots,\vdoor^-_{i_0+b,j},\vdoor^+_{i_0+b,j}$.
Define $\vDoors^{(\mu)}$ to be the $b$-tuple of orderings $\langle\vDoors^{(\mu)}_{j_0},\ldots,$ $\vDoors^{(\mu)}_{j_0+b-1}\rangle$.

The number of possible $b$-tuples of orderings for $\vDoors^{(\mu)}$
is bounded by the number
of cells in the arrangement $\AAA$ of 
surfaces $\{(\xi_0,\ldots,\xi_b,\eta_0,\ldots,\eta_b)\in\R^{2(b+1)d}: f^-(\xi_i,\eta_j,\eta_{j+1},r)=f^-(\xi_{i'},\eta_j,\eta_{j+1},r)\}$,
$\{(\xi_0,\ldots,\xi_b,\eta_0,\ldots,\eta_b)\in\R^{2(b+1)d}: f^+(\xi_i,\eta_j,\eta_{j+1},r)=f^+(\xi_{i'},\eta_j,\eta_{j+1},r)\}$,
and $\{(\xi_0,\ldots,\xi_b,\eta_0,\ldots,\eta_b)\in\R^{2(b+1)d}: f^-(\xi_i,\eta_j,\eta_{j+1},r)=f^+(\xi_{i'},\eta_j,\eta_{j+1},r)\}$,
over all $i,i'\in [0,b]$ and $j\in [0,b)$.\footnote{As in the previous subsection, arrangements of endpoint equalities can also be augmented by the constant-degree polynomial sign tests defining the endpoint branches, nonemptiness, and grid-edge endpoint incidences, although these are not strictly required.}
This is an arrangement of $O(b^3)$ surfaces in $O(db)$ dimensions (in contrast to the arrangement $\AAA^{(\sigma)}$ we considered above, which resides in a lower dimension $O(d)$), and has $b^{O(db)}$ cells by the Milnor--Thom theorem~\cite{AgarwalS00a,Milnor64,Thom65}.  It follows that there are at most
$b^{O(db)}$ distinct $b$-tuples of orderings. The same bound holds for the candidates generated below, which need not be jointly realizable: allow independent copies of the two endpoints and the radius in each row. There are still $O(db)$ variables and $O(b^3)$ constant-degree predicates, so the same Milnor--Thom bound applies.

We cannot afford to explicitly construct this high-dimensional arrangement $\AAA$ and all its cells.  Instead, to generate all candidates for $\vDoors^{(\mu)}$,
we examine each vertical slab $\sigma=[I_0,I_0+B]\times[0,n]$, and each index $i_0\in [I_0,I_0+B)$ divisible by $b$, and enumerate all choices of the vector $\langle\vDoorsRep_{j_0}^{(\sigma)},\ldots,$ $\vDoorsRep_{j_0+b-1}^{(\sigma)}\rangle$, which has $O(b\log B)$ bits; each such choice
determines $\vDoors^{(\mu)}$ for a mini-block $\mu=[i_0,i_0+b]\times [j_0,j_0+b]$.  The number of candidates produced is $N=O((n/B)\cdot (B/b)\cdot 2^{O(b\log B)})=n^{1+O(\eps)}$, and
the time to generate them is $O(N\cdot B^{O(1)})= n^{1+O(\eps)}$.
This is strongly subquadratic assuming $\varepsilon > 0$ is a sufficiently small constant.
From the resulting list of candidates, we eliminate duplicates, e.g., by applying a sorting algorithm
in time $O(B^{O(1)}N\log N)=n^{1+O(\eps)}$.
Now define $\vDoorsRep^{(\mu)}$ to be the index of
$\vDoors^{(\mu)}$ in the list, which requires $O(\log (b^{O(db)}))=
O(b\log b)=O(\eps\log n)$ bits and thus fits in a word.  Note that $\vDoors^{(\mu)}$ can be recovered from $\vDoorsRep^{(\mu)}$, even without being given $\mu$ (the IDs for mini-blocks are ``universal'')---this fact will be important.

We build a table to implement the following operation:
\begin{operation}\label{op:doors}\rm
(Converting $\vDoorsRep$ for slab $\sigma$ to $\vDoorsRep$ for mini-block $\mu$)
\begin{tabbing}
\ \ \ \ \ \ \ \= Output:\ \ \=\+\kill
Input: 
\> vertical slab $\sigma=[I_0,I_0+B]\times[0,n]$, \\
\> index $i_0\in [I_0,I_0+B)$ divisible by $b$, and \\  
\> $\langle\vDoorsRep_{j_0}^{(\sigma)},\ldots,\vDoorsRep_{j_0+b-1}^{(\sigma)}\rangle$.\\
Output: \> $\vDoorsRep^{(\mu)}$ for mini-block $\mu=[i_0,i_0+b]\times [j_0,j_0+b]$.
\end{tabbing}
\end{operation}

Note that the input contains enough information to deduce the output (it is not necessary to be given the actual index $j_0$). 
The input vector above has $O(b\log B)=O(\eps\log n)$ bits and thus fits in a word.
The preprocessing time is $O((n/B)\cdot (B/b)\cdot 2^{O(b\log B)}\cdot B^{O(1)})=n^{1+O(\eps)}$; afterwards, the operation takes $O(1)$ time by table lookup. We describe later in this section how to use Operation~\ref{op:doors} to efficiently find $\vDoorsRep^{(\mu)}$ for each mini-block $\mu$.

Similarly, we define $\hDoors^{(\mu)}_i$, $\hDoors^{(\mu)}$, and $\hDoorsRep^{(\mu)}$, and 
build another table for an analog to Operation~\ref{op:doors}.

\subsection{Encoding Reaches}
\label{sec:encodingreaches}

We want to compute the reaches at the boundary of all mini-blocks, but as their coordinates are reals, we need to encode them as well, in order to enable bit packing. 

Consider a mini-block $\mu=[i_0,i_0+b]\times [j_0,j_0+b]$ contained in 
a vertical slab $\sigma=[I_0,I_0+B]\times [0,n]$.  Let $j\in [j_0,j_0+b)$.
Below are three specific encoding schemes, where the first is relative to $\sigma$, and the last two are relative to $\mu$:
\begin{itemize}
\item
Define $\vreachrep^{-(\sigma)}_{i_0,j}$ to be $z-I_0$ if $\vreach^-_{i_0,j}$ is equal to $\text{v-door}^-_{z, j}$ for some $z\in [I_0,I_0+B]$, or $(\mbox{``!''}, k)$ if 
 $\vreach^-_{i_0,j}$ is equal to $\vreach^-_{I_0,j}$ and $k$ is the
 rank of $\vreach^-_{I_0,j}$ with respect to $\vDoors^{(\sigma)}_j$, or ``-'' if the reach at the grid edge $\{i_0\}\times [j,j+1]$ is empty.  (One of these cases must hold.)
This requires $O(\log B)$ bits.
\item
Define $\vreachinrep^{-(\mu)}_{i_0,j}$ to be the rank
of $\vreach^-_{i_0,j}$ with respect to $\vDoors^{(\mu)}_j$ (or ``-'' if the reach at the grid edge $\{i_0\}\times [j,j+1]$ is empty).  This requires $O(\log b)$ bits.
\item
Define $\vreachoutrep^{-(\mu)}_{i_0+b,j}$ to be $z-i_0$ if 
$\vreach^-_{i_0+b,j}$ is equal to $\text{v-door}^-_{z, j}$ for some $z\in [i_0,i_0+b]$, or
``!'' if $\vreach^-_{i_0+b,j}$ is equal to $\vreach^-_{i_0,j}$, or ``-'' if the reach at grid edge $\{i_0+b\}\times [j,j+1]$ is empty.
This requires $O(\log b)$ bits.
\end{itemize}

The following operations allow us to convert between these different encodings:

\begin{operation}\label{op:reach1}\rm
(Converting $\vreachrep$ for slab $\sigma$ to $\vreachinrep$ for mini-block $\mu$)
\begin{tabbing}
\ \ \ \ \ \ \ \= Output:\ \ \=\+\kill
Input: 
\> vertical slab $\sigma=[I_0,I_0+B]\times[0,n]$,\\
\> index $i_0\in [I_0,I_0+B)$ divisible by $b$,\\
\> $\langle\vDoorsRep_{j_0}^{(\sigma)},\ldots,\vDoorsRep_{j_0+b-1}^{(\sigma)}\rangle$, and\\
\> $\langle\vreachrep^{-(\sigma)}_{i_0,j_0},\ldots,\vreachrep^{-(\sigma)}_{i_0,j_0+b-1}\rangle$.\\
Output: \> $\langle\vreachinrep^{-(\mu)}_{i_0,j_0},\ldots,\vreachinrep^{-(\mu)}_{i_0,j_0+b-1}\rangle$ for $\mu=[i_0,i_0+b]\times [j_0,j_0+b]$.
\end{tabbing}
\end{operation}

\begin{operation}\label{op:reach2}\rm
(Converting $\vreachoutrep$ for mini-block $\mu$ to $\vreachrep$ for slab $\sigma$)
\begin{tabbing}
\ \ \ \ \ \ \ \= Output:\ \ \=\+\kill
Input: 
\> vertical slab $\sigma=[I_0,I_0+B]\times[0,n]$,\\
\> index $i_0\in [I_0,I_0+B)$ divisible by $b$,\\
\> $\langle\vreachrep^{-(\sigma)}_{i_0,j_0},\ldots,\vreachrep^{-(\sigma)}_{i_0,j_0+b-1}\rangle$, and\\
\> $\langle\vreachoutrep^{-(\mu)}_{i_0+b,j_0},\ldots,\vreachoutrep^{-(\mu)}_{i_0+b,j_0+b-1}\rangle$ for $\mu=[i_0,i_0+b]\times [j_0,j_0+b]$.\\
Output: \> $\langle\vreachrep^{-(\sigma)}_{i_0+b,j_0},\ldots,\vreachrep^{-(\sigma)}_{i_0+b,j_0+b-1}\rangle$.
\end{tabbing}
\end{operation}

Note that the input contains enough information to deduce the output.
(In Operation~\ref{op:reach2}, the reason for including $\vreachrep^{-(\sigma)}_{i_0,j}$ as part of the input is that it lets us find a replacement value in the case when $\vreachoutrep^{-(\mu)}_{i_0+b,j}$ is ``!''.)
Each vector above has $O(b\log B)=O(\eps\log n)$ bits and thus fits in a word.
The preprocessing time is $O((n/B)\cdot (B/b)\cdot 2^{O(b\log B)}\cdot B^{O(1)})= n^{1+O(\eps)}$; afterwards, the operation takes $O(1)$ time by table lookup.

Similarly, we define $\hreachrep^{-(\tau)}$, $\hreachinrep^{-(\mu)}$, and $\hreachoutrep^{-(\mu)}$, and build tables for analogs to Operations~\ref{op:reach1}--\ref{op:reach2}.

\subsection{DP over Mini-Blocks}
\label{sec:DPfordecision}

The following is the key operation to implement DP over mini-blocks:

\begin{operation}\label{op:dp}\rm
(Computing $\vreachoutrep$ and $\hreachoutrep$ from $\vreachinrep$ and $\hreachinrep$ for mini-block $\mu$)
\begin{tabbing}
\ \ \ \ \ \ \ \= Output:\ \ \=\+\kill
Input: 
\> $\vDoorsRep^{(\mu)}$, $\hDoorsRep^{(\mu)}$, \\
\> $\langle\vreachinrep^{-(\mu)}_{i_0,j_0},\ldots,\vreachinrep^{-(\mu)}_{i_0,j_0+b-1}\rangle$, and\\
\> $\langle\hreachinrep^{-(\mu)}_{i_0,j_0},\ldots,\hreachinrep^{-(\mu)}_{i_0+b-1,j_0}\rangle$ for $\mu=[i_0,i_0+b]\times [j_0,j_0+b]$.\\
Output:
\> $\langle\vreachoutrep^{-(\mu)}_{i_0+b,j_0},\ldots,\vreachoutrep^{-(\mu)}_{i_0+b,j_0+b-1}\rangle$, and\\
\> $\langle\hreachoutrep^{-(\mu)}_{i_0,j_0+b},\ldots,\hreachoutrep^{-(\mu)}_{i_0+b-1,j_0+b}\rangle$.
\end{tabbing}
\end{operation}

Note that the input contains enough information to deduce the output, by simulating the Alt--Godau DP for continuous Fr\'echet~\cite{AltG95}.
(Recall that
$\vDoorsRep^{(\mu)}$ and $\hDoorsRep^{(\mu)}$ are decodable without being given
$\mu$ or the vertical slab $\sigma$ and horizontal slab $\tau$ containing $\mu$.)  
Each vector above has $O(b\log b)=O(\eps\log n)$ bits and thus fits in a word.
The preprocessing time is $O(2^{O(b\log B)}\cdot B^{O(1)})\le n^{O(\eps)}$; afterwards, the operation takes $O(1)$ time by table lookup.
(Luckily, $\sigma$ and $\tau$ are not part of the input, because otherwise, the preprocessing time would exceed $(n/B)^2 2^{\Omega(b\log B)}$, which is superquadratic!)

Consider a block $\beta=[I_0,I_0+B]\times [J_0,J_0+B]$ that is the intersection 
of a vertical slab $\sigma$ and a horizontal slab~$\tau$.
Assume that we have already computed $\vreach^-_{I_0,J_0},\ldots,\vreach^-_{I_0,J_0+B-1}$,
and $\hreach^-_{I_0,J_0},\ldots,$ $\hreach^-_{I_0+B-1,J_0}$. (We assume that $\vreach^-_{0, 0}, \ldots, \vreach^-_{0, n-1}$ and $\hreach^-_{0,0}, \ldots,$ $\hreach^-_{n-1,0}$ were already computed naively by DP in time $O(n)$.)

As initialization for the block $\beta$,
for each $j_0\in [J_0,J_0+B)$ divisible by $b$, we first form the vector $\langle\vDoorsRep_{j_0}^{(\sigma)},\ldots,\vDoorsRep_{j_0+b-1}^{(\sigma)}\rangle$ in $O(b)$ time, and pack it in a word.
Furthermore, we form the vector 
$\langle\vreachrep^{-(\sigma)}_{I_0,j_0},$ $\ldots,\vreachrep^{-(\sigma)}_{I_0,j_0+b-1}\rangle$
in $O(b\log B)$ time, by using binary search to compute ranks, and pack it in a word.
Similarly, for each $i_0\in [I_0,I_0+B)$ divisible by $b$, we form
the vectors $\langle\hDoorsRep_{i_0}^{(\tau)},\ldots,\hDoorsRep_{i_0+b-1}^{(\tau)}\rangle$ 
and $\langle\hreachrep^{-(\tau)}_{i_0,J_0},\ldots,\hreachrep^{-(\tau)}_{i_0+b-1,J_0}\rangle$.
The time per block is $O((B/b)\cdot b\log B) = O(B\log B)$.

Now consider a mini-block $\mu=[i_0,\allowbreak{}i_0+b]\times [j_0,\allowbreak{}j_0+b]\subset \beta$.
Assume that we have already computed 
$\langle\vreachrep^{-(\sigma)}_{i_0,\allowbreak{}j_0},\allowbreak{}\ldots,\allowbreak{}\vreachrep^{-(\sigma)}_{i_0,\allowbreak{}j_0+b-1}\rangle$
and $\langle\hreachrep^{-(\tau)}_{i_0,\allowbreak{}j_0},\allowbreak{}\ldots,\allowbreak{}\hreachrep^{-(\tau)}_{i_0+b-1,\allowbreak{}j_0}\rangle$.
We first compute $\vDoorsRep^{(\mu)}$ in $O(1)$ time by Operation~\ref{op:doors}, and similarly
$\hDoorsRep^{(\mu)}$.
Next we compute $\langle\vreachinrep^{-(\mu)}_{i_0,\allowbreak{}j_0},\allowbreak{}\ldots,\allowbreak{}\vreachinrep^{-(\mu)}_{i_0,\allowbreak{}j_0+b-1}\rangle$ in $O(1)$ time by Operation~\ref{op:reach1}, and similarly
$\langle\hreachinrep^{-(\mu)}_{i_0,\allowbreak{}j_0},\allowbreak{}$ $\ldots,\allowbreak{}\hreachinrep^{-(\mu)}_{i_0+b-1,\allowbreak{}j_0}\rangle$.
Then we compute $\langle\vreachoutrep^{-(\mu)}_{i_0+b,\allowbreak{}j_0},\allowbreak{}\ldots,\allowbreak{}\vreachoutrep^{-(\mu)}_{i_0+b,\allowbreak{}j_0+b-1}\rangle$ and $\langle\hreachoutrep^{-(\mu)}_{i_0,\allowbreak{}j_0+b},\allowbreak{}\ldots,\allowbreak{}\hreachoutrep^{-(\mu)}_{i_0+b-1,\allowbreak{}j_0+b}\rangle$ in $O(1)$ time by Operation~\ref{op:dp}.
Finally, we compute $\langle\vreachrep^{-(\sigma)}_{i_0+b,\allowbreak{}j_0},\allowbreak{}\ldots,\allowbreak{}\vreachrep^{-(\sigma)}_{i_0+b,\allowbreak{}j_0+b-1}\rangle$ in $O(1)$ time by Operation~\ref{op:reach2}, and
similarly $\langle\hreachrep^{-(\tau)}_{i_0,\allowbreak{}j_0+b},\allowbreak{}$ $\ldots,\allowbreak{}\hreachrep^{-(\tau)}_{i_0+b-1,\allowbreak{}j_0+b}\rangle$.  

Repeating this process for all mini-blocks $\mu\subset \beta$ in increasing lexicographical order of $(i_0,j_0)$,
we obtain $\text{v-reach-ID}^{-(\sigma)}_{I_0+B, J_0},$ $\ldots, \text{v-reach-ID}^{-(\sigma)}_{I_0+B, J_0+B-1}$ and $\text{h-reach-ID}^{-(\tau)}_{I_0, J_0+B}, \ldots, \text{h-reach-ID}^{-(\tau)}_{I_0+B-1,J_0+B}$ in time $O(B^2/b^2)$. As a final step, we map back from these IDs to real values in time $O(B)$, to obtain $\vreach^-_{I_0+B,J_0},$ $\ldots,$ $\vreach^-_{I_0+B,J_0+B-1}$,
and $\hreach^-_{I_0,J_0+B},\ldots,\hreach^-_{I_0+B-1,J_0+B}$.

Repeating this process for all blocks $\beta$ in increasing lexicographical order of $(I_0,J_0)$,
we obtain the final boundary reaches, including $\vreach^-_{n,n-1}$ and $\hreach^-_{n-1,n}$, and can decide whether $(n,n)\in\RRR$.

The total time is $O((n/B)^2B\log B + (n/b)^2) = O((n^2/\log^2n)\log^2\log n)$.

\begin{theorem} \label{thm:continuousdecision}
The continuous Fr\'echet distance decision problem in any constant dimension can be solved in $O((n^2/\log^2n)\log^2\log n)$
deterministic time.
\end{theorem}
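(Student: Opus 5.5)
The plan is to assemble the machinery of this section into one algorithm and then check two things: that each mini-block step simulates the Alt--Godau DP correctly, and that the total cost is $O((n^2/\log^2n)\log^2\log n)$.

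The argument starts with correctness, by induction over blocks in lexicographic order and, within each block, over mini-blocks in lexicographic order. The invariant is that when mini-block $\mu\subset\beta=\sigma\cap\tau$ is processed, the vectors $\vreachrep^{-(\sigma)}$ on its left edge and $\hreachrep^{-(\tau)}$ on its bottom edge are correct.

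I would first verify that the encodings are well defined. By monotonicity of reachability, $\vreach^-_{i,j}$ is either $\vdoor^-_{z,j}$ for some $z$ between the entry column and $i$, or it is inherited unchanged from the left boundary of the slab (respectively mini-block). The inherited case arises when a monotone path enters row $j$ at height $\vreach^-_{I_0,j}$ and travels horizontally; no other case arises, since any lower bound on the reach at a vertical edge comes from a door bottom that was crossed or from the entry value. This justifies the "one of these cases must hold" claims behind $\vreachrep$ and $\vreachoutrep$. The $\hreach$ encodings are handled symmetrically.

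Next I would check that each table operation is a genuine function of its input. For Operations~\ref{op:doors} and~\ref{op:reach1}, the slab-relative IDs determine the orderings $\vDoors^{(\sigma)}_j$, and restricting these to columns $i_0,\ldots,i_0+b$ gives $\vDoors^{(\mu)}_j$. The rank of the reach at column $i_0$ is computable from the $\vreachrep$ value together with $\vDoors^{(\sigma)}_j$, including the "!" case through the stored rank $k$. For Operation~\ref{op:dp}, the Alt--Godau recurrence only compares reach values with door endpoints. Its output is therefore determined by the combinatorial orderings encoded in $\vDoorsRep^{(\mu)}$ and $\hDoorsRep^{(\mu)}$ together with the input ranks, and this is exactly why universal mini-block IDs suffice. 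For Operation~\ref{op:reach2}, a "!" output is replaced by the input $\vreachrep$ value, and any other output is an index offset by $i_0-I_0$.

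At the end, decoding the IDs at block boundaries yields the real values $\vreach^-$ and $\hreach^-$. We then decide $(n,n)\in\RRR$ by checking that the reach at the last grid edges is nonempty and contains the endpoint.

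The running time has three parts.

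\textbf{Preprocessing.} This covers the slab arrangements and point location, which cost $o(n^2/\log^2n)$, and the candidate enumeration and tables, which cost $n^{1+O(\eps)}$.

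\textbf{Per-block initialization.} This costs $O(B\log B)$ per block, for $O((n^2/B)\log B)=o(n^2/\log^2 n)$ in total since $B\approx\log^4n$.

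\textbf{Mini-block processing.} Each mini-block takes $O(1)$ table lookups, so the total is $O((n/b)^2)=O((n^2/\log^2n)\log^2\log n)$.

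The main obstacle I expect is the bit-size bound for $\vDoorsRep^{(\mu)}$. We must show that the number of distinct $b$-tuples of orderings is $b^{O(db)}$, so that the ID fits in $O(\eps\log n)$ bits; otherwise neither the Operation~\ref{op:dp} table nor the word packing works. I would take this from the Milnor--Thom cell-count argument already given: there are $O(b^3)$ constant-degree surfaces in $O(db)$ dimensions, with independent per-row copies of the variables so that unrealizable candidates are also covered. The remaining key point is that table sizes never depend on $(\sigma,\tau)$ jointly, so the preprocessing stays at $n^{O(\eps)}$.
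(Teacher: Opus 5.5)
Your proposal is correct and follows the paper's approach. It uses slab-relative door and reach IDs obtained by point location in $B^{O(d)}$-size arrangements, and universal mini-block door IDs that the Milnor--Thom $b^{O(db)}$ bound keeps word-sized. It combines the conversion tables of Operations~\ref{op:doors}--\ref{op:reach2} with the universal DP table of Operation~\ref{op:dp}, and arrives at the same total $O((n/B)^2B\log B+(n/b)^2)$. Your explicit justification of the ``one of these cases must hold'' claim via the Alt--Godau recurrence is a detail the paper leaves implicit, and it is consistent with the paper's argument.
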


\section{A Reduction to the Decision Problem}\label{sec:reduce}

Alt and Godau~\cite{AltG95} gave a reduction from the continuous Fr\'echet distance problem to its corresponding decision problem, based on parametric search~\cite{Megiddo83}, showing that if the latter can be solved in $T_D$ time, then the former can be solved in $O(n^2\log n + T_D\log n)$ time.
The authors in~\cite{BuchinBMM17} improved the running time to $O(n^2 2^{\alpha(n)} + T_D\log n)$, with randomization, specifically for $d=2$.
Unfortunately, because of the first term, these reductions cannot be used if we are aiming for subquadratic algorithms.  

In this section, we describe a reduction that uses random sampling to further lower the overhead term to subquadratic, at the expense of increasing the second term. 
We state our reduction in a general setting below.  Our reduction shares some similarities
with a variant of parametric search called the \emph{shrink-and-bifurcate} technique~\cite{AvrahamFKKS15,ChanH25a}, but our setting is different (part of the challenge for the continuous Fr\'echet distance problem is that critical values may be defined by \emph{triples} instead of pairs):

\begin{lemma}\label{lem:reduce}
Let $H$ be a set of $n$ 2-dimensional surfaces in $\R^{d+1}$ and $S$
be a set of $n$ $d$-dimensional surfaces in $\R^{d+1}$, each with constant description complexity, for a constant $d$. Assume that these surfaces are in general position.\footnote{On each $h\in H$, the restrictions of distinct surfaces of $S$ should share no curve arc, and their intersections and local $z$-extrema should be isolated. We use a constant-size planar decomposition of each $h$ on which these restrictions are constant-complexity curves. Only triples with distinct surfaces from $S$ are considered.}
Let ``$z$-coordinate'' refer to the last coordinate.
Also let $P$ be a set of $n$ points in $\R^d$, and let $Q$ be a set of $n$ points in $\R^d$.
Let $U$ denote the multiset of all $O(n^3)$ \emph{critical values}, defined as the
$z$-coordinates of the intersection points of triples in $H\times S\times S$,
together with the local $z$-minima/maxima of intersections of pairs in $H\times S$, and together with all pairwise distances between the points in $P$ and the points in $Q$.
Suppose there is an unknown value $z^*\in U$, and suppose there is a decision algorithm that can decide whether $z^*\le\zeta$ for a given value $\zeta$ in $T_D$ time;
this algorithm only accesses $\zeta$ by comparisons with critical values in $U$.

Then there is a Las Vegas randomized algorithm that always computes $z^*$ correctly.  Its running time is
$O(n^{2 - 1/O(d)} + n^{3/2}\log(1/\delta)\log n + T_D\log n + n\sqrt{T_D\log(1/\delta)}\log n)$
with probability at least $1-O(\delta)$, for any given $0<\delta\le 1/2$.
\end{lemma}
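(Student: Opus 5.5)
We run a shrink-and-bifurcate style search: keep an interval $(z^-,z^+)$ containing $z^*$, and use random sampling of critical values to shrink it until few critical values remain, then enumerate those directly. Each shrinking round costs $O(T_D\log n)$ plus sampling overhead, and the final enumeration is cheap.

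\textbf{Step 1: the pairwise-distance part.} We first handle the $O(n^2)$ distances between $P$ and $Q$ as in Section~\ref{sec:discrete}. Distance selection takes $n^{2-1/O(d)}$ time, so a binary search with $O(\log n)$ calls to the decision algorithm finds an interval $(z^-,z^+)$ containing $z^*$ and no $P$--$Q$ distances. This costs $O(n^{2-1/O(d)}+T_D\log n)$.

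\textbf{Step 2: sampling triple and pair critical values.} We cannot select among the $O(n^3)$ triple values directly, so we sample them.

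For a fixed $h\in H$, the restrictions of the surfaces of $S$ to $h$ are $n$ constant-complexity curves on a constant-size planar decomposition of $h$. The triple critical values on $h$ are the $z$-coordinates of vertices of this planar arrangement, and the pair critical values are $z$-extrema of curves. The vertices of the arrangement on $h$ lying in the slab $z\in(z^-,z^+)$ can be counted and sampled in near-linear time per $h$ (plus output size) by standard sweeping and inversion-counting. This holds because within a horizontal slab, crossings between curves correspond to inversions between the orderings of the curves along the two boundary lines $z=z^-$ and $z=z^+$, after cutting curves at their $z$-extrema into constantly many monotone pieces.

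So for all $h$ together, in $O(n^2\log n)$ time we could count $m$, the number of critical values in the interval. That is too slow, so we instead sample a random set $R\subseteq H$ of size about $\sqrt{n}$, or sample random crossings. Choose a sample of $s$ critical values in $(z^-,z^+)$ uniformly, where $s$ is chosen to balance costs. Sort them, and binary search among them with the decision algorithm, using $O(\log s)$ calls. This shrinks the interval so that, by a standard sampling bound, it contains $O((m/s)\log(1/\delta))$ critical values with probability $1-\delta$.

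Generating a uniform sample without counting over all $h$ is done by inversion-counting on $\sqrt{n}$-sized groups of curves. Per $h$ we use merge-sort counting in $O(n\log n)$ time but only for a random subset of the $h$'s. This is where the $n^{3/2}\log(1/\delta)\log n$ term arises: about $\sqrt{n}\log(1/\delta)$ surfaces $h$ per round, each costing $O(n\log n)$.

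\textbf{Step 3: iterate and finish.} We repeat the shrinking $O(1)$ times until the expected number of remaining critical values is about $T_D$-related small. Then we enumerate all critical values inside the interval: for each $h$, output all arrangement vertices in the slab in $O(n\log n+k_h)$ time. To do this in subquadratic total time, we only process those $h$ whose slab contains a crossing; we detect these by checking whether the two boundary orderings differ, using cheap precomputed orderings. Finally we binary search the enumerated set with $O(\log n)$ decision calls.

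Balancing the sample size $s$ against the number $k$ of surviving values, with $k\approx n^2\log(1/\delta)/s$, gives the term $n\sqrt{T_D\log(1/\delta)}\log n$. Las Vegas correctness holds because the final answer is always verified by the decision algorithm; only the running time is random.

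\textbf{Main obstacle.} The main obstacle is Step 2: producing near-uniform samples of triple critical values in the current slab and bounding the remaining count, without spending $\Theta(n^2)$ time to scan every $h$. I would first try a two-level sampling scheme. Pick random $h$'s weighted by their slab-vertex counts, estimating those counts on a random subset of $h$'s, and accept the extra $\log(1/\delta)$ factor from Chernoff bounds.
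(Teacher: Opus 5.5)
Your Step~1 matches the paper, but Steps~2--3 have a genuine gap.

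\textbf{Where Steps~2--3 fail.} Your repeated shrinking needs uniform samples among the critical values inside the current slab, and you leave this open. The only concrete tool you give, a per-surface sweep or inversion count, costs $\Theta(n\log n)$ per surface $h$, hence $\Theta(n^2\log n)$ over all of $H$. Your two-level scheme does not yield uniform samples either. Counts measured on a random subset of surfaces do not give the weights of the other surfaces, so sampling proportionally over all of $H$ again needs $\Omega(n)$ work per surface. (In general, inversion counting also miscounts: two constant-complexity curve pieces may cross several times inside the slab, or end there, while inversions only record an odd number of crossings.)

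Without within-slab sampling you can only sample from the full $\Theta(n^3)$ tuple universe. In the regime that matters (e.g.\ $T_D=n^2/\log^C n$, $\delta=1/n^C$) the claimed bound is $o(n^2)$, so only $o(n^2)$ sampled values can be evaluated and sorted. Then $\omega(n)$ critical values may survive, spread over $\Theta(n)$ distinct surfaces. Your final enumeration at $O(n\log n+k_h)$ per live surface then costs $\Omega(n^2)$. The proposed filter does not help: the boundary orderings it relies on have total size $n^2$, and even a perfect filter can leave $\Theta(n)$ live surfaces.

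Your balancing sentence is also not backed by any cost in your algorithm. Your last phase makes only $O(\log n)$ decision calls, so no term of the form $T_D$ times a tunable parameter arises that could produce $n\sqrt{T_D\log(1/\delta)}\log n$.

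\textbf{The missing idea.} Stop trying to make the number of surviving \emph{values} small; instead make the number of surviving \emph{surfaces of $H$} small, and resolve them lazily. The paper does this in three stages.

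First, draw $\lceil n^{3/2}\rceil$ uniform tuples from $H\times S\times S$ and $H\times S$ and binary search over their values. This is trivial, and no slab-restricted sampling is needed. With probability $1-O(\delta)$ it leaves $K=O(n^{3/2}\log(1/\delta))$ events in the interval, which bounds $\sum_h K_h$ in every later sweep.

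Second, take a random $H'\subseteq H$ of size $\lceil n/t\rceil$. Sweep each $h'\in H'$ inside the slab in $O((n+K_{h'})\log n)$ time, and binary search over all values found. Afterwards, with probability $1-O(\delta)$, the surviving critical values involve only $O(t\log(1/\delta))$ surfaces of $H$. Your idea of processing a random subset of $H$ is the right ingredient, but it must be used to shrink the interval in this way, not to estimate weights.

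Third, simulate the decision algorithm on the unknown $z^*$. Comparisons with values outside the interval are resolved directly. Only when the simulation meets a value defined by an unprocessed $h$ do you sweep that $h$ and binary search its values, at cost $O((n+K_h)\log n+T_D\log n)$. Balancing $(n^2/t)\log n$ against $t\log(1/\delta)\cdot T_D\log n$ yields the last term of the bound.
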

\begin{proof}

Let $t \geq 1$ be a parameter to be set later.\footnote{In each binary search, we also evaluate the decision algorithm infinitesimally below a tested value. This requires only resolving equality in each critical-value comparison toward the lower side, and costs $O(T_D)$ time. The answers at a value and just below it differ exactly when that value is $z^*$; otherwise we obtain a strict bracket. We allow $-\infty$ and $+\infty$ as outer bracket endpoints.}

\begin{enumerate}
\item
First, we either find $z^*$ or find an interval $(z^-, z^+)$ containing $z^*$ that does not contain any pairwise distances between the points in $P$ and the points in $Q$. We can select the $k$th smallest pairwise distance in $n^{2 - 1/O(d)}$ time by known algorithms (using standard range searching techniques)~~\cite{AgarwalASS93,AgarwalE98,ChanZ24}, so via binary search, this requires $O(\log n)$ calls to the decision algorithm and $n^{2 - 1/O(d)}$ additional time. Now every remaining critical value in $(z^-, z^+)$ is either defined by a triple in $H \times S\times S$ or a pair in $H \times S$.

\item
Pick $\lceil n^{3/2}\rceil$ independent uniform samples from the union of triples in $H\times S\times S$ with distinct $S$-factors and pairs in $H\times S$, let $R$ be the resulting collection, and then either find $z^*$ or find
a sub-interval $(z^-,z^+)$ of the interval found in Step 1 that contains $z^*$ but does not contain any critical values defined by 
$R$. By binary search on $O(n^{3/2})$ values, this requires $O(\log n)$ calls to the decision algorithm.
Let $K$ be the number of critical events of $U$ in this interval, counted with their defining-tuple multiplicities.  With
probability $1-O(\delta)$, we have
$K\le O(n^{3/2}\log(1/\delta))$. (To see this, note that on either side of the fixed value $z^*$ (including equality), an interval containing this many events involves $\Omega(n^{3/2}\log(1/\delta))$ defining tuples, since each tuple defines only $O(1)$ events. The probability of missing all of them is $O(\delta)$.)

\item
Pick a random subset $H'$ of $H$ of size $\lceil n/t\rceil$. 
For each 2-dimensional surface $h'\in H'$, compute all critical values that involve $h'$.  These critical values can be obtained
from the arrangement $\AAA_{h',(z^-,z^+)}$ formed by the intersection
of the surfaces of $S$ with $h' \cap (\R^{d}\times (z^-,z^+))$.
Now, $\AAA_{h',(z^-,z^+)}$ is an arrangement of $O(n)$ curves in 2D, and can be computed
by a standard plane sweep in $O((n+K_{h'})\log n)$ time~\cite{BentleyO79}, where
$K_{h'}$ counts its intersection and local-extremum events, with multiplicity.
On the event from Step 2,
$\sum_{h'} K_{h'} = O(K) = O(n^{3/2}\log(1/\delta))$, and therefore
the total time for this step is $O((n^2/t)\log n + n^{3/2}\log(1/\delta)\log n)$.

\item
By binary search over these critical values using $O(\log n)$ calls
to the decision algorithm, we can either find $z^*$ or shrink the interval $(z^-,z^+)$ 
so that it still contains $z^*$ but avoids critical values that involve some surface of $H'$. 
With probability $1-O(\delta)$, the critical values in
the shrunk interval $(z^-,z^+)$ are all defined by at most $O(t\log (1/\delta))$ surfaces of $H$. (To see this, order the surfaces on each side of $z^*$ (including equality) by their nearest critical value in the current interval. A sample of $\lceil n/t\rceil$ surfaces misses the first $k$ surfaces with probability at most $e^{-k/t}$; apply this with $k=O(t\log(1/\delta))$ on both sides.)

\item
Finally, we simulate the decision algorithm on the unknown $z^*$.  
Whenever the decision algorithm compares $z^*$ with a critical value
that is outside $(z^-,z^+)$, the comparison can be resolved.  Suppose it compares $z^*$ with
a critical value inside $(z^-,z^+)$.  If the critical value is defined by a 2-dimensional surface $h\in H$ that has not already been processed, we enumerate all critical values that involve $h$,
by constructing $\AAA_{h,(z^-,z^+)}$ in $O((n+K_h)\log n)$ time.
We can then do binary search over these critical values to resolve all comparisons with them using $O(\log n)$ calls to the decision algorithm. 
Otherwise, its defining surface has already been processed and the comparison was resolved earlier.

On the event from the preceding step, the total time for this step is
\[O((n\log n)(t\log (1/\delta)) + n^{3/2}\log(1/\delta)\log n + (T_D\log n)(t\log (1/\delta))).\]
\end{enumerate}
In all cases, the simulated decision algorithm is completed with every
comparison resolved, and the value returned is the unique critical value
consistent with these comparisons. Combining the high-probability events above
with $t=\max\{1,n/\sqrt{(n+T_D)\log(1/\delta)}\}$ gives the stated running-time bound.
\end{proof}

Lemma~\ref{lem:reduce} is indeed applicable to the continuous Fr\'echet distance problem.
For each edge $p_ip_{i+1}$, define the 2-dimensional flat $\{(\xi,r)\in\R^{d+1}: \mbox{$\xi$ lies on the line through $p_i$ and $p_{i+1}$}\}$.
For each point $p_i$, define the cone $\{(\xi,r)\in\R^{d+1}: \mbox{$\xi$ has distance $r$ to $p_i$}\}$.
Similarly, for each edge $q_jq_{j+1}$, define the 2-dimensional flat $\{(\xi,r)\in\R^{d+1}: \mbox{$\xi$ lies on the line
through $q_j$ and $q_{j+1}$}\}$.
For each point $q_j$, define the cone $\{(\xi,r)\in\R^{d+1}: \mbox{$\xi$ has distance $r$ to $q_j$}\}$.
Finally, we take $P$ and $Q$ to simply be the points defining the input polygonal chains.
We assume (for example, by an infinitesimal perturbation) that the input chains have no zero-length edges or repeated vertices, and that no edge-supporting line lies in the perpendicular bisector of two distinct input vertices.
It is not difficult to see that the Fr\'echet distance is a critical value for the resulting
set $H$ of $O(n)$ 2-dimensional flats, the set $S$ of $O(n)$ cones in $\R^{d+1}$, and the set of all pairwise distances between $P$ and $Q$.
Indeed, on an edge-supporting line, two distinct distance cones have a common curve only if the line lies in the perpendicular bisector of their vertices, which is excluded above. Their remaining intersections and extrema are isolated. The ordinary free-space DP compares the parameter only with these critical values; the faster oracle in the next section will also have this property.

Even with a critical-comparison decision algorithm taking $T_D=O((n^2/\log^2n)\log^2\log n)$ time, the bound in Lemma~\ref{lem:reduce} would be
$O(n^2\log\log n\sqrt{\log(1/\delta)})$, which is
still superquadratic.  
(For $d=2$ specifically, it is possible to improve the
time bound in the general reduction slightly, by replacing sorting with 2D arrangement construction, following the approach in~\cite{BuchinBMM17}, but we would still lose at least one log factor due to the binary searches.)

In the next section, we present a way that not only lets us obtain a
time bound below $n^2$ for any constant dimension~$d$, but also avoids losing a log factor!  The plan is to do some preprocessing in $O((n^2/\log^2n)\log^{O(1)}\log n)$ time, so that subsequently we can decrease the decision time $T_D$ by an arbitrarily large polylogarithmic factor---this way, both the $T_D\log n$ term and the $n\sqrt{T_D\log(1/\delta)}\log n$ term can be kept below $n^2/\log^2n$.

\section{Continuous Fr\'echet Distance Algorithm}
\label{sec:cont}

As mentioned, the plan is to do some preprocessing (which does not
require knowing the precise value of $r$) so that
the decision problem can be solved more efficiently later (when
the actual value of $r$ is given).
Here is a rough overview of the preprocessing and decision algorithms:

We first borrow the ``good'' vs.\ ``bad'' block idea from our
discrete Fr\'echet distance algorithm in Section~\ref{sec:discrete}.
Unfortunately, there could be $O(n^3)$ critical values (instead of $O(n^2)$)
for the continuous problem, since they may now be defined by triples.
To get around this, we will only work with critical values that are \emph{local} to some block---the number
of local critical values is $O(B^3)$ per block, and so $O((n/B)^2 B^3)= O(n^2 B)$ in total.
We sample a sufficiently large polylogarithmic number of blocks, sort all of their
local critical values, and binary search this list using
$O(\log B)=O(\log\log n)$ calls to the decision algorithm.  This either finds
$r^*$ immediately, or gives an interval $(r^-,r^+)$ containing $r^*$ such
that, with high probability, only $O(n^2/B^3)$ blocks contain a local critical
value in this interval.  We call these blocks \emph{bad}, and call all other
blocks \emph{good}.

At the time of preprocessing, we don't know the value $r\in(r^-,r^+)$ at
which the later decision algorithm will be queried.  However, inside a good
block, the empty/nonempty status of every door and the ordering of all door
endpoints stay fixed throughout $(r^-,r^+)$.  We can therefore fix any
$\tilde r\in(r^-,r^+)$, compute the needed information at $\tilde r$, and
store it using door-endpoint IDs that remain valid for every
$r\in(r^-,r^+)$.

At preprocessing time, we also don't have information about which points on
the left and bottom sides of a block $\beta$ are reachable from $(0,0)$.
Instead, we will compute local ``all-pairs" information about paths contained
inside $\beta$.  For a parameter $r$, call a point $p\in\beta$
\emph{$(\beta,r)$-locally reachable} if it is reachable through
$\mathcal F_r\cap\beta$ from the bottom or left side of $\beta$, and can also
reach the top or right side of $\beta$ through $\mathcal F_r\cap\beta$. (Here $\mathcal F_r$ is the free region with parameter $r$.) We
denote the set of these points by $\mathcal L_r^{(\beta)}$.  It will also be
useful to distinguish points that can reach the top side from points that can
reach the right side; we call these points $T$-alive and $R$-alive,
respectively.  The corresponding points on each grid edge form intervals.
All of these intervals can be computed by the same forward and reverse
mini-block scans used in the decision algorithm from
Section~\ref{sec:cont:decis}.

Suppose now that $X$ is a line segment contained in one grid edge on the
bottom or left side of $\beta$.  The \emph{top span} of $X$ will be the
smallest interval on the top side containing every point reachable from $X$,
and the \emph{right span} is defined similarly.  We show in Lemma \ref{lem:intervalspan} that, quite serendipitously, the points on the top side
that are actually reachable from $X$ are exactly the intersection of the top
span with $\mathcal L_r^{(\beta)}$, and the same statement holds on the right
side.  Thus, after computing the local reaches, it is enough to report the
two endpoints of each span when processing $X$.

The first geometric observation is that the reachable points on the top have
no gaps after we restrict to locally reachable points.  More precisely,
suppose $q_1,q_2,q_3$ occur in this order on the top side, $p$ reaches $q_1$
and $q_3$, and $q_2$ is locally reachable.  A path witnessing the local
reachability of $q_2$ must cross one of the two paths from $p$, and switching
paths at the crossing shows that $p$ also reaches $q_2$.  The same observation
holds on the right side. (See Figure \ref{fig:span} (a).)

The second observation is that, as $p$ moves forward in the order obtained by
walking down the left side and then rightward along the bottom side, both
endpoints of its top span move monotonically rightward, while both endpoints
of its right span move monotonically downward.  For example, if $q$ is the
rightmost top point reachable from $p$, then the corresponding point for a
later source $p'$ cannot lie to the left of $q$: the two witness paths would
cross, and this would make $q$ reachable from $p'$ as well. (See Figure \ref{fig:span} (b).)

\begin{figure}
\centering
\includegraphics[scale=1]{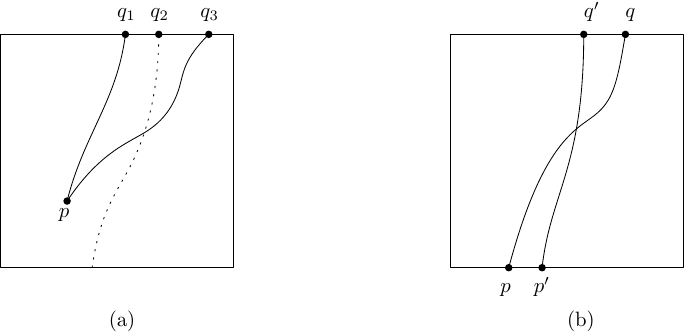}
\caption{(a) If there is an $xy$-monotone path from $p$ to $q_1$ and from $p$ to $q_3$ in the free region, and $q_2$ is alive, then there is an $xy$-monotone path from $p$ to $q_2$ in the free region.  (b) If $q$ is the rightmost point on the top side reachable from $p$ by an $xy$-monotone path in the free region, then $q'$ can't be the rightmost point on the top side reachable from $p'$ (since $q$ is reachable from $p'$).}\label{fig:span} 
\end{figure}

It is still too expensive to store the span of every possible point
separately.  We instead divide a block into $m=B/b$ \emph{vertical mini-slabs},
each being a $b \times B$ rectangle, and
use packed sweeps between their vertical boundaries to store a compressed representation.
Suppose for the time being that $X$ lies
on the bottom of a vertical mini-slab~$\upsilon$.

We split its top span into
the portion directly above $\upsilon$ and the portion weakly to the right of~$\upsilon$.  The first portion is completely determined by the two door
orderings inside $\upsilon$ and the two endpoints of $X$.  As in our earlier
algorithms, there are only $B^{O(b)}=n^{O(\eps)}$ possible combinatorial
descriptions, so its two endpoints can be found by one pre-computed table
lookup.

The remaining four span endpoints are computed using two kinds of packed
sweeps.  Consider first the rightmost reachable point on the top side and
the bottommost reachable point on the right side.  If $X$ intersects the
$T$-alive interval on its input grid edge, then the right endpoint of this
interval gives the rightmost top answer; similarly, the right endpoint of
the $R$-alive interval gives the bottommost right answer.  We propagate all
of these endpoints simultaneously from left to right through the vertical
mini-slabs.  At each separator, the crossing argument above lets us replace
a reached point by the lowest relevant $T$-alive or $R$-alive point through
which the sweep can continue.  There are $m$ possible starting mini-slabs
and $m$ stages, while the $b$ sources belonging to one starting mini-slab
fit in one word.  So all sweeps take only $\Ohat(m^2)$ time.

The leftmost top endpoint and the uppermost right endpoint need a slightly
different representation.  For either side, consider the leftmost point
where $X$ meets the corresponding alive interval on its input edge.  We
first propagate this point across its starting mini-slab, and ask for the
highest $T$-alive or $R$-alive interval on the right separator that can be
reached.  As the source moves from left to right along the bottom of the
mini-slab, this interval moves monotonically from top to bottom.  More
importantly, it changes only $O(B)$ times, once for each alive interval on
the separator (plus the empty gaps).  We can therefore describe this map by
$O(B)$ breakpoints.

To every breakpoint range, we attach a \emph{suffix answer}: the leftmost
point on the top side, or the uppermost point on the right side, reachable
from the corresponding alive interval on the separator.  These suffix
answers are computed for all separators by one right-to-left sweep.  When
the sweep crosses the next mini-slab, we retain the uppermost reached point
on the next separator and replace it by the highest relevant alive interval
through which a witness path can continue.  Again, this replacement preserves
the desired extremum by the path-crossing observation above.

There are $O(B)$ alive intervals and suffix answers on one separator, and
$O(B)$ breakpoint records for one starting mini-slab.  Each record has
$O(\log B)$ bits, so $b$ records fit in one word and either list occupies
only $O(B/b)=O(m)$ words.  There are $m$ separators, and the breakpoint
representation is constructed for each of the $m$ possible starting
mini-slabs.  Packed sorting and universal lookup
tables therefore let us construct all of these structures in
$\Ohat(m^2)=\Ohat(B^2/b^2)$ time per block.  The low-level implementation is
somewhat unpleasant, but this is the main reason that we gain the second log
factor.

The symmetric information for line segments $X$ on the left side of
a block is constructed analogously.

At the end, we have a compact representation of the top and right spans for
every line segment contained in one bottom or left grid edge of a block.
Over all blocks, the preprocessing time is
\[
    \Ohat\left((n/B)^2\frac{B^2}{b^2}\right)
    =\Ohat(n^2/\log^2 n).
\]
For the final decision algorithm, given $r$, we simulate DP over the blocks.
Each bad block is handled naively in $O(B^2)$ time.
For a good block, we decompose the reachable portions of its bottom and left
sides into $O(B)$ grid-edge intervals, query their span representations, and
combine the resulting intervals in $O(B\log B)$ time.
The total time for the bad blocks is $O((n^2/B^3)B^2)=O(n^2/B)$, while the
total time for the good blocks is $\Ohat((n/B)^2B)=\Ohat(n^2/B)$.
Setting $B=\log^C n$ for an arbitrarily large constant $C$ gives a decision
algorithm whose running time is smaller than $n^2$ by an arbitrarily large
polylogarithmic factor, after preprocessing in
$\Ohat(n^2/\log^2 n)$ time.

In the following subsections, we will make all these ideas precise.

\subsection{Preliminaries}

We use the notation of Section~\ref{sec:continuousdecision}, with the parameter
$r$ made explicit.  This means that $\FFF_r$ and $\RRR_r$ denote the free space and
reachable region at distance $r$. The door, reach, and ID definitions are the
same as before with an additional subscript $r$, and the univariate door endpoint
functions are denoted by $f_{i,j}^-(r),f_{i,j}^+(r),g_{i,j}^-(r),g_{i,j}^+(r)$.
The optimization problem is to find
$r^*=\min\{r:(n,n)\in\RRR_r\}$.

As in the previous sections, we will use two block sizes: $B \approx \log^c n$ and $b \approx \varepsilon \log n/\log\log n$, where $c$ is a sufficiently large constant, and $\varepsilon > 0$ is a sufficiently small constant. (Both constants depend on the dimension $d$.) We use the same definitions for blocks, mini-blocks, and vertical and horizontal slabs as in Section~\ref{sec:continuousdecision}. We will also use \emph{vertical mini-slabs}, which are rectangles of the form $[i_0, i_0 + b] \times [J_0, J_0 + B]$ for some $i_0$ divisible by $b$ and some $J_0$ divisible by $B$. \emph{Horizontal mini-slabs} are rectangles of the form $[I_0, I_0 + B] \times [j_0, j_0 + b]$ for some $I_0$ divisible by $B$ and some $j_0$ divisible by $b$.

\paragraph{An Initial Pruning Step.} The first step in our algorithm is similar to the one given in Section~\ref{sec:discreteoptimize}: we will invoke the decision algorithm from Section~\ref{sec:continuousdecision} $O(\log\log n)$ times, either finding $r^*$ immediately or restricting ourselves to an open interval $(r^-, r^+)$ such that $r^* \in (r^-, r^+)$.

For a block $\beta = [I_0, I_0 + B] \times [J_0, J_0 + B]$, let $\mathcal{C}_\beta$ be the set consisting of the following values of $r$. First, for every $j \in [J_0, J_0 + B)$ and every $i, i' \in [I_0, I_0+B]$, include every boundary point of the solution set of each of the following equations:
\[f_{i,j}^-(r) = f_{i',j}^-(r) \qquad f_{i,j}^-(r) = f_{i',j}^+(r) \qquad f_{i,j}^+(r) = f_{i',j}^-(r) \qquad f_{i,j}^+(r) = f_{i',j}^+(r).\]
Also include the boundary points of the sets of values for which one of $f_{i,j}^+(r)$ or $f_{i,j}^-(r)$ is equal to $j$ or $j+1$, or for which the corresponding door is nonempty.
Second, for every $i \in [I_0, I_0+B)$ and every $j,j' \in [J_0, J_0 + B]$, include every boundary point of the solution set of each of the following equations:
\[g_{i,j}^-(r) = g_{i,j'}^-(r) \qquad g_{i,j}^-(r) = g_{i,j'}^+(r) \qquad g_{i,j}^+(r) = g_{i,j'}^-(r) \qquad g_{i,j}^+(r) = g_{i,j'}^+(r).\]
Again include the boundary points of the endpoint-incidence and door-nonemptiness sets.  These univariate semi-algebraic sets have constant complexity, and hence have only $O(1)$ boundary points.
We refer to $\mathcal{C}_\beta$ as the set of \emph{local critical values for $\beta$}.
Observe that we have $\vert \mathcal{C}_\beta\vert = O(B^3)$.

For a given open interval $(r^-, r^+)$, we say that a block $\beta$ is \emph{$(r^-, r^+)$-good} if $\mathcal{C}_\beta \cap (r^-, r^+) = \emptyset$, and otherwise it is \emph{$(r^-, r^+)$-bad}.

We would like to efficiently find an open interval $(r^-, r^+)$ containing $r^*$ such that all but at most a $1/B$ fraction of the blocks are $(r^-, r^+)$-good. Unlike in Section~\ref{sec:discreteoptimize}, we do not have an efficient algorithm to (for example) deterministically select the $k$th smallest element in the additive union of the sets $\mathcal{C}_\beta$. We will instead use a randomized procedure.

Choose a set $\mathcal{S}$ of $B^{c_1}$ blocks uniformly at random, for a sufficiently large constant $c_1$, and explicitly construct and sort the set $\mathcal{C}_{\mathcal{S}} = \bigcup_{\beta \in \mathcal{S}} \mathcal{C}_\beta$.  We binary search this list using the decision algorithm.  When a sampled critical value $c$ is encountered, we evaluate the decision problem both at $c$ and infinitesimally below $c$.  If the two answers differ, then $c=r^*$ and we return $c$ immediately.  Otherwise, the binary search finds consecutive values $r^-<r^+$ in the sorted list such that
\[
    r^*\in(r^-,r^+)
    \qquad\text{and}\qquad
    \mathcal C_{\mathcal S}\cap(r^-,r^+)=\emptyset.
\]
This requires $O(\log\log n)$ calls to the decision algorithm in Theorem~\ref{thm:continuousdecision}, so the total time is
\[
    O((n^2/\log^2 n)\log^3\log n).
\]
To bound the number of bad blocks, let $N$ be the total number of blocks, so
$N=\Theta(n^2/B^2)$.  For any interval that intersects the local critical
sets of more than $N/B$ blocks, the probability that a sample of $B^{c_1}$
blocks misses all of them is at most $(1-1/B)^{B^{c_1}}$.  There are only
polynomially many intervals with endpoints at local critical values, so a
union bound shows that, for sufficiently large $c_1$, at most a $1/B$
fraction of the blocks are $(r^-,r^+)$-bad with probability
$1-n^{-\omega(1)}$.  Fix this open interval for the rest of the section; the
case in which the procedure already returned $r^*$ needs no further work.

It remains to actually classify blocks as either good or bad. Consider a vertical slab
$\sigma=[I_0,I_0+B]\times[0,n]$.  For every $i,i'\in [I_0,I_0+B]$
and every choice of signs $a,a'\in\{-,+\}$, consider the semi-algebraic set
of all pairs $(\eta,\eta')\in\R^{2d}$ for which the solution set of the equation
\[
        f^a(p_i,\eta,\eta',r)=f^{a'}(p_{i'},\eta,\eta',r)
\]
has a boundary point $r\in(r^-,r^+)$. We also include the analogous semi-algebraic sets
expressing that the endpoint-incidence or nonemptiness set of a vertical door
has a boundary point in $(r^-,r^+)$.  These are constant-complexity semi-algebraic sets, and there are
$B^{O(1)}$ of them.
Their arrangement has $B^{O(d)}$ cells, and in $B^{O(d)}$ time we can construct
a point-location structure for the arrangement and store the exact membership
of every cell, including lower-dimensional cells, in the union of the sets above.  Thus, given a pair $(q_j,q_{j+1})$, in $O(\log B)$ time
we can decide whether the vertical contribution to $\mathcal C_\beta$ contains
a value in $(r^-,r^+)$ for the block $\beta=\sigma\cap\tau(j)$, where
$\tau(j)=[0,n]\times[\lfloor j/B\rfloor B,\lfloor j/B\rfloor B+B]$.

We build the symmetric structures for horizontal slabs.  Namely, for a
horizontal slab $\tau=[0,n]\times[J_0,J_0+B]$, the query point is
$(p_i,p_{i+1})\in\R^{2d}$, and the semi-algebraic sets are obtained from the
equations involving the functions $g^-$ and $g^+$, together with the endpoint
and door-appearance boundary events.  A query determines whether the horizontal
contribution to $\mathcal C_\beta$ contains a value in $(r^-,r^+)$ for the
block $\beta=\sigma(i)\cap\tau$, where
$\sigma(i)=[\lfloor i/B\rfloor B,\lfloor i/B\rfloor B+B]\times[0,n]$.

Initialize all blocks as good.  For every vertical slab $\sigma$ and every
$j\in[0,n)$, we query the vertical point-location structure for $\sigma$ with
$(q_j,q_{j+1})$; if the query is positive, we mark the block
$\sigma\cap\tau(j)$ as bad.  Similarly, for every horizontal slab $\tau$ and
every $i\in[0,n)$, we query the horizontal structure for $\tau$ with
$(p_i,p_{i+1})$; if the query is positive, we mark the block
$\sigma(i)\cap\tau$ as bad.  By construction, after these two passes, a block
is marked bad exactly when $\mathcal C_\beta\cap(r^-,r^+)\ne\emptyset$.
The total number of queries is $O(n^2/B)$, and each query takes $O(\log B)$
time.  The construction time for all the point-location structures is
$O((n/B)B^{O(d)}) = o(n^2/B)$.  Hence all blocks can be classified in
$O((n^2/B)\log B)$ time.

The following lemma summarizes the algorithm so far:

\begin{lemma} \label{lem:prune}
    There is a Las Vegas randomized algorithm running in time $O((n^2/\log^2 n)\log^{O(1)}\log n)$ that either returns $r^*$ or finds an open interval $(r^-, r^+)$ such that $r^* \in (r^-, r^+)$ and exactly classifies every block as $(r^-,r^+)$-good or $(r^-,r^+)$-bad.  With probability $1 - n^{-\omega(1)}$, at most a $1/B$ fraction of the blocks are $(r^-, r^+)$-bad.
\end{lemma}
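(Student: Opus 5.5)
The plan is to assemble the three phases described just before the statement (sampling, binary search, classification) and to bound the running time, the correctness, and the failure probability of each phase separately.

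\textbf{Interval selection.} Sample a set $\mathcal S$ of $B^{c_1}$ blocks. For each sampled block, generate $\mathcal C_\beta$ explicitly; this is $O(B^3)$ constant-complexity univariate equations, each solved in $O(1)$ time. Then sort $\mathcal C_{\mathcal S}$, which has $B^{O(1)}=\log^{O(1)}n$ elements, in polylogarithmic time.

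\textbf{Binary search.} Search the sorted list of $\mathcal C_{\mathcal S}$ with the decision algorithm of Theorem~\ref{thm:continuousdecision}, evaluating both at each probed value and infinitesimally below it. If the two answers differ at some $c$, then $c=r^*$ and we return it. Otherwise monotonicity of the reachability predicate yields consecutive list values $r^-<r^+$ with $r^*\in(r^-,r^+)$. Outer brackets $\pm\infty$ handle the case where $r^*$ lies beyond the list. This uses $O(\log|\mathcal C_{\mathcal S}|)=O(\log\log n)$ oracle calls, for $O((n^2/\log^2n)\log^3\log n)$ time in total.

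\textbf{Exact classification.} For each vertical slab, build the arrangement of the $B^{O(1)}$ semi-algebraic sets in $\R^{2d}$ described above; these encode ``some boundary event of $\mathcal C_\beta$ lies in $(r^-,r^+)$.'' Then point-locate each $(q_j,q_{j+1})$. Because membership in the union is constant on each cell (lower-dimensional cells included), each query answers exactly whether the vertical contribution of $\sigma\cap\tau(j)$ meets $(r^-,r^+)$. Handle horizontal slabs symmetrically. Since $\mathcal C_\beta$ is exactly the union of the vertical contributions over rows and the horizontal contributions over columns, the block is marked bad precisely when $\mathcal C_\beta\cap(r^-,r^+)\neq\emptyset$. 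The cost is $O((n/B)B^{O(d)}+(n^2/B)\log B)$, which is $o(n^2/\log^2 n)$ once $c\ge 3$. The only care needed is that the arrangement of semi-algebraic sets is computed for fixed $r^\pm$, which are real numbers available at that point. The whole algorithm always outputs a correct answer, so it is Las Vegas.

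\textbf{Probability bound.} This is the main obstacle, because the interval is chosen by the sample itself, so one cannot reason about a fixed interval. Fix the sorted multiset of all local critical values from all blocks; there are $O(n^2B)$ of them. Any candidate open interval $(r^-,r^+)$ has endpoints in this set or at $\pm\infty$, so there are at most $\mathrm{poly}(n)$ candidates. Call a candidate \emph{heavy} if more than $N/B$ blocks have a local critical value inside it. A heavy interval survives, meaning no sampled block contributes a value inside it, with probability at most $(1-1/B)^{B^{c_1}}\le e^{-B^{c_1-1}}=n^{-\omega(1)}$, since $B=\log^c n$ and $c(c_1-1)>1$. The returned interval contains no value of $\mathcal C_{\mathcal S}$, so it cannot be heavy unless some heavy interval survived. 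A union bound over the polynomially many candidates shows that, with probability $1-n^{-\omega(1)}$, at most an $N/B$ fraction of the blocks are bad.
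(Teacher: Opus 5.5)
Your proposal is correct and matches the paper's proof step for step. The steps are the same: sample $B^{c_1}$ blocks, sort their local critical values, binary search with the decision oracle plus the infinitesimal-below test, union bound over the polynomially many candidate intervals with endpoints among all local critical values, and classify exactly by point location in per-slab arrangements in $\R^{2d}$. One small slip: your last sentence should say that at most $N/B$ blocks (a $1/B$ fraction) are bad, not ``an $N/B$ fraction.''
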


\paragraph{Local Reachability within Blocks.}
Consider a block $\beta = [I_0, I_0 + B] \times [J_0, J_0 + B]$ and a parameter $r$. Let $\partial_L \beta = \{I_0\} \times [J_0, J_0+B]$ be the left boundary of $\beta$. Similarly, let $\partial_B \beta = [I_0, I_0+B] \times \{J_0\}$, $\partial_R \beta = \{I_0+B\} \times [J_0, J_0+B]$, and $\partial_T \beta = [I_0, I_0+B] \times \{J_0+B\}$ be the bottom, right, and top boundaries respectively. We say that a point $p \in \beta$ is \emph{$(\beta, r)$-locally reachable} if both of the following hold:
\begin{itemize}
    \item There is an $xy$-monotone path through $\mathcal{F}_r$ from some point in $(\partial_L \beta \cup \partial_B \beta) \cap \mathcal{F}_r$ to $p$.
    \item There is an $xy$-monotone path through $\mathcal{F}_r$ from $p$ to some point in $(\partial_T \beta \cup \partial_R \beta) \cap \mathcal{F}_r$.
\end{itemize}
In other words, for the chosen parameter $r$, we require that there is some $xy$-monotone path in the free space from the bottom/left boundary of $\beta$ to the top/right boundary of $\beta$ that goes through $p$. (Observe that the path must be contained within $\beta$.) Let $\mathcal{L}_r^{(\beta)}$ be the set of all $(\beta, r)$-locally reachable points in $\beta$.

The \emph{$r$-local reach at a grid edge $e$ in $\beta$} refers to the intersection of $\mathcal{L}_r^{(\beta)}$ with $e$. If nonempty, it is a vertical or horizontal line segment. Let $\text{v-local-reach}^-_{i,j,r,\beta}$ and $\text{v-local-reach}^+_{i,j,r,\beta}$ denote the min and max $y$-coordinates respectively of the $r$-local reach at the vertical grid edge $\{i\} \times [j, j+1]$.\footnote{It is necessary to specify the block $\beta$ because some vertical edges are shared between two blocks. Unlike $r$-reachability, we \emph{do not} necessarily have that $\text{v-local-reach}^+_{i,j,r,\beta} = \text{v-door}^+_{i,j,r}$.} As with the $r$-doors and $r$-reaches, if the $r$-local reach is empty then we instead set $\text{v-local-reach}^-_{i,j,r,\beta} = \text{v-local-reach}^+_{i,j,r,\beta} =$ ``-''. We define $\text{h-local-reach}^-_{i,j,r,\beta}$ and $\text{h-local-reach}^+_{i,j,r,\beta}$ analogously.

Let $\sigma = [I_0, I_0 + B] \times [0,n]$ be the vertical slab such that $\beta \subset \sigma$. Let $\text{v-local-reach-ID}^{-(\sigma)}_{i,j,r}$ be $z - I_0$ if $\text{v-local-reach}^-_{i,j,r,\beta}$ is equal to $\text{v-door}^-_{z,j,r}$ for some $z \in [I_0, I_0+B]$, or ``-'' if the local reach at the grid edge $\{i\} \times [j, j+1]$ is empty. (One of the two cases must hold; there is no ``!'' case as in the definition of $\text{v-reach-ID}$ from Section~\ref{sec:encodingreaches}.)
We do not specify $\beta$ in the encoding, because the choice of $\sigma$ and $j$ determines~$\beta$. Define $\text{h-local-reach-ID}^{-(\tau)}_{i,j,r}$ analogously, using the horizontal slab $\tau = [0,n] \times [J_0, J_0 + B]$ such that $\beta \subset \tau$. Observe that each encoding can be stored in $O(\log B)$ bits.

Fix an arbitrary value $\tilde{r} \in (r^-, r^+)$. Our goal now is to construct, in $O((n^2/\log^2 n)\log^{O(1)}\log n)$ preprocessing time, a data structure for the following operation:

\begin{operation}\label{op:localreachIDs}\rm
(Reporting the lower Local Reach IDs on the output boundary of a block with respect to $\tilde{r}$)
\begin{tabbing}
\ \ \ \ \ \ \ \= Output:\ \ \=\+\kill
Input: 
\> mini-block $\mu = [i_0, i_0 + b] \times [j_0, j_0 + b]$ contained within a vertical slab $\sigma$ and a horizontal slab $\tau$,\\
\> \quad and adjacent to the top or right boundary of the block $\beta=\sigma\cap\tau$.\\
Output:
\> $\langle\text{v-local-reach-ID}^{-(\sigma)}_{i_0+b, j_0,\tilde{r}}, \ldots, \text{v-local-reach-ID}^{-(\sigma)}_{i_0+b, j_0 + b - 1,\tilde{r}}\rangle$, if the right side of $\mu$ lies on $\partial_R\beta$, and\\
\> $\langle\text{h-local-reach-ID}^{-(\tau)}_{i_0, j_0+b,\tilde{r}}, \ldots, \text{h-local-reach-ID}^{-(\tau)}_{i_0 + b - 1, j_0+b,\tilde{r}}\rangle$, if the top side of $\mu$ lies on $\partial_T\beta$.
\end{tabbing}
\end{operation}

On $\partial_T\beta\cup\partial_R\beta$, every point reachable from the entry
boundary is locally reachable and can continue along its door to the upper endpoint on the right side, or the right endpoint on the top side.  Thus only the lower endpoint of each local reach needs to be reported.  Each output vector
has $O(b\log B)=O(\varepsilon\log n)$ bits and fits in a word.

To implement Operation~\ref{op:localreachIDs}, we first run the preprocessing
from Section~\ref{sec:continuousdecision} with the parameter
$\tilde r$.  Thus we have the slab door encodings, the mini-block door
encodings, and the tables for Operations~\ref{op:doors}--\ref{op:dp} and their
horizontal analogs.
Now process each block $\beta=[I_0,I_0+B]\times[J_0,J_0+B]$, where
$\beta=\sigma\cap\tau$.  First perform a forward scan, exactly as in
Section~\ref{sec:DPfordecision}, except that the scan is local to $\beta$:
all points in $(\partial_L\beta\cup\partial_B\beta)\cap\mathcal F_{\tilde r}$
are initialized as reachable.  We propagate through the mini-blocks in
increasing order, using Operations~\ref{op:doors}--\ref{op:dp}, and retain
the packed lower endpoints on $\partial_R\beta$ and $\partial_T\beta$.  As in
Section~\ref{sec:DPfordecision}, the conversion operations
from Section~\ref{sec:encodingreaches} translate between the slab-relative
encodings and the ranks used by the universal mini-block tables.
The applicable output vector or vectors in Operation~\ref{op:localreachIDs}
are stored with each boundary mini-block.  The scan takes $O((B/b)^2)$ time
per block.

Putting everything together, we have the following lemma:

\begin{lemma} \label{lem:makeop6}
    There is an algorithm running in time $O((n^2/\log^2 n)\log^{O(1)}\log n)$ that outputs a data structure implementing Operation~\ref{op:localreachIDs} with $O(1)$ query time.
\end{lemma}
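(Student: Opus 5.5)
The plan is to run the decision machinery of Section~\ref{sec:continuousdecision} once at the fixed parameter $\tilde r$, but with every block treated independently. Concretely, we first invoke the preprocessing of Sections~\ref{sec:encodeslab}--\ref{sec:encodingreaches} with $r=\tilde r$. This computes all slab IDs $\vDoorsRep^{(\sigma)}_j$ and $\hDoorsRep^{(\tau)}_i$ and builds the lookup tables for Operations~\ref{op:doors}--\ref{op:dp} and their horizontal analogs. By the time bounds established there, this costs $O((n/B)(B^{O(d)}+n\log B))+n^{1+O(\eps)}=o(n^2/\log^2 n)$.

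Then, for each block $\beta=\sigma\cap\tau$, we perform a forward scan that is local to $\beta$. The initialization differs from the decision algorithm: instead of importing the reaches from neighboring blocks, we declare every free point of $\partial_L\beta\cup\partial_B\beta$ reachable. On a left edge $\{I_0\}\times[j,j+1]$ the lower reach endpoint is then $\vdoor^-_{I_0,j,\tilde r}$, encoded as $\vreachrep^{-(\sigma)}=0$, and symmetrically on the bottom edges. We sweep the mini-blocks in lexicographic order, applying Operations~\ref{op:doors}, \ref{op:reach1}, \ref{op:dp} and \ref{op:reach2} in $O(1)$ time each, for $O((B/b)^2)$ time per block. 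Summed over all blocks this is $O(n^2/b^2)=O((n^2/\log^2n)\log^2\log n)$. For each mini-block whose right (top) side lies on $\partial_R\beta$ ($\partial_T\beta$), we store the resulting packed output vector. A query to Operation~\ref{op:localreachIDs} is then a single array access indexed by $\mu$, which is $O(1)$.

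The correctness step needs care. The scan computes lower endpoints of the set of points on $\partial_R\beta\cup\partial_T\beta$ reachable within $\beta$ from the entry boundary. We must check that these coincide with the lower endpoints of the local reaches $\mathcal L^{(\beta)}_{\tilde r}\cap e$. Take a point on an output edge $e$ that is reachable from the entry boundary. It trivially reaches the top/right boundary, since it already lies on it, so it is locally reachable. Conversely, every locally reachable point is reachable from the entry boundary by definition. So on output edges the local reach equals the locally reachable-from-entry set, and its minimum is what the DP computes.

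It remains to check the encoding. In the decision algorithm, the output ID is either an index $z-I_0$ with $\vreach^-=\vdoor^-_{z,j}$, or ``!'' pointing back to the input value on $\partial_L\beta$. Here the input values are themselves $\vdoor^-_{I_0,j}$, so the ``!'' case resolves to $z=I_0$. Every output value is therefore a door lower endpoint, which is exactly the form required by $\text{v-local-reach-ID}^{-(\sigma)}$. The horizontal case is symmetric.

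The main obstacle is checking that Operations~\ref{op:reach1}--\ref{op:reach2}, with this block-local initialization, never produce an ID that fails to translate into a door index. This holds because the seed values are door endpoints. Some care is also needed on grid edges shared by two blocks: the stored vectors are indexed by the block in which $\mu$ lies, which removes the ambiguity.
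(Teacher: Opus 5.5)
Your proposal is correct and follows essentially the same route as the paper. Both run the Section~\ref{sec:continuousdecision} preprocessing at $\tilde r$, do a block-local forward scan with all of $(\partial_L\beta\cup\partial_B\beta)\cap\mathcal F_{\tilde r}$ seeded as reachable using Operations~\ref{op:doors}--\ref{op:dp}, and store the packed lower-endpoint vectors at the boundary mini-blocks in $O((B/b)^2)$ time per block. Your explicit checks (reachable points on output edges are exactly the locally reachable ones, and door-endpoint seeds make every returned ID a door index, so no ``!'' case survives) spell out details that the paper states only briefly.
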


\subsection{Preprocessing Good Blocks}
\label{sec:preprocessgood}

We would like to efficiently construct a set of data structures that will allow us to execute the decision version of the continuous Fr\'echet DP within any good block in time $O(B\log B)$, for any query parameter $r \in (r^-, r^+)$. In the final subsection, we will leverage these data structures to construct a decision oracle for the continuous Fr\'echet distance problem with query time significantly faster than $O(n^2/\log^2 n)$. This improved oracle will allow us to solve the optimization problem in time $O((n^2/\log^2 n)\log^{O(1)}\log n)$, via Lemma~\ref{lem:reduce}.

Consider a block $\beta = [I_0, I_0+B] \times [J_0, J_0 + B]$ and a parameter $r$. Let $\partial_T \mathcal{L}_r^{(\beta)} = \mathcal{L}_r^{(\beta)} \cap \partial_T \beta$, and let $\partial_R \mathcal{L}_r^{(\beta)} = \mathcal{L}_r^{(\beta)} \cap \partial_R \beta$; these are the intersections of the locally reachable points with the top and right boundaries of $\beta$, respectively. Observe that $\partial_T \mathcal{L}_r^{(\beta)}$ and $\partial_R \mathcal{L}_r^{(\beta)}$ each consist of at most $O(B)$ intervals. For any subset $\mathcal{S} \subset \mathcal{F}_r \cap \beta$, let $\text{span}_r^{(\beta)}(\mathcal{S})$ be the set of all points in $\beta$ that are reachable from at least one point in $\mathcal{S}$ via an $xy$-monotone path through $\mathcal{F}_r$. This definition implies that $\text{span}_r^{(\beta)}(\mathcal{S}) \subseteq \mathcal{F}_r \cap \beta$. We also define $\partial_T \text{span}_r^{(\beta)}(\mathcal{S}) = \partial_T \beta \cap \text{span}_r^{(\beta)}(\mathcal{S})$ and $\partial_R \text{span}_r^{(\beta)}(\mathcal{S}) = \partial_R \beta \cap \text{span}_r^{(\beta)}(\mathcal{S})$ to be the intersections of the span with the top and right boundaries of $\beta$ respectively. We have the following useful characterization for the span of any line segment contained within one grid edge on the bottom or left boundary of $\beta$. (This is closely related to Alt and Godau's Lemma~10~\cite{AltG95}, Corollary~4.2 of Ben Avraham et al.~\cite{BenAvrahamKS15}, and Proposition~4.4 of Bringmann et al.~\cite{BringmannKN21}.)

\begin{lemma} \label{lem:intervalspan}
    Let $\beta = [I_0, I_0+B] \times [J_0, J_0+B]$ be any block, let $r$ be any parameter, and let $X$ be any line segment contained in $\mathcal F_r$ and in a grid edge $e = [i, i+1] \times \{J_0\}$ for some $I_0 \leq i < I_0+B$, or contained in $\mathcal F_r$ and in a grid edge $e = \{I_0\} \times [j, j+1]$ for some $J_0 \leq j < J_0+B$. Then there are two (possibly empty) line segments $\mathcal{I}_r^{(\beta)}(X) \subseteq [I_0, I_0+B] \times \{J_0+B\}$ and $\mathcal{J}_r^{(\beta)}(X) \subseteq \{I_0 + B\} \times [J_0, J_0+B]$ such that $\partial_T \text{span}_r^{(\beta)}(X) = \partial_T\mathcal{L}_r^{(\beta)} \cap \mathcal{I}_r^{(\beta)}(X)$ and $\partial_R \text{span}_r^{(\beta)}(X) = \partial_R\mathcal{L}_r^{(\beta)} \cap \mathcal{J}_r^{(\beta)}(X)$.
\end{lemma}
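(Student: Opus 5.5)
We take $\mathcal{I}_r^{(\beta)}(X)$ to be the smallest segment on $\partial_T\beta$ containing $\partial_T \text{span}_r^{(\beta)}(X)$ (empty if that set is empty), and define $\mathcal{J}_r^{(\beta)}(X)$ the same way on $\partial_R\beta$. Both claimed equalities then reduce to two inclusions. We treat the top side; the right side is symmetric, with the roles of $x$ and $y$ exchanged.

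\textbf{The inclusion ``$\subseteq$''.} Let $q\in\partial_T \text{span}_r^{(\beta)}(X)$. Then $q\in\mathcal{I}_r^{(\beta)}(X)$ by the definition of $\mathcal{I}_r^{(\beta)}(X)$. Take a witness path from some $x\in X$ to $q$. Since $X\subseteq \mathcal{F}_r\cap(\partial_L\beta\cup\partial_B\beta)$, this path is itself an $xy$-monotone free path from the bottom/left boundary to the top boundary passing through $q$. Such a path stays inside $\beta$, because it is monotone and starts on the lower-left boundary. Hence $q\in\mathcal{L}_r^{(\beta)}$.

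\textbf{The inclusion ``$\supseteq$''.} This is the crossing argument of Figure~\ref{fig:span}(a). Let $q_2\in\partial_T\mathcal{L}_r^{(\beta)}\cap\mathcal{I}_r^{(\beta)}(X)$. By the definition of $\mathcal{I}_r^{(\beta)}(X)$ and closedness of reachable sets, there are reachable points $q_1,q_3\in\partial_T\beta$ with $x(q_1)\le x(q_2)\le x(q_3)$. These come with monotone free paths $\pi_1$ from $x_1\in X$ to $q_1$ and $\pi_3$ from $x_3\in X$ to $q_3$. Since $q_2$ is locally reachable, there is a monotone free path $\gamma$ from some $s\in\partial_L\beta\cup\partial_B\beta$ to $q_2$.

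We then show that $\gamma$ meets $\pi_1\cup\pi_3\cup X$. Suppose it does not. Consider the closed curve formed by $\pi_1$, the segment $X$ between $x_1$ and $x_3$, $\pi_3$, and the top-boundary segment from $q_3$ back to $q_1$. This curve bounds a region $\Omega$ in $\beta$, and $q_2$ lies on its top edge. Because $\gamma$ is $xy$-monotone and starts on the bottom/left boundary, its starting point $s$ lies outside $\Omega$, or on $X$ (in which case we are done). So $\gamma$ must enter $\Omega$ by crossing $\pi_1$, $\pi_3$, or $X$.

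Let $w$ be the last point of $\gamma$ lying on $\pi_1\cup\pi_3\cup X$. Concatenating the portion of $\pi_k$ (or of $X$) up to $w$ with the portion of $\gamma$ from $w$ to $q_2$ gives a monotone free path from $X$ to $q_2$. Here we use that $X$ is a segment of an edge inside the convex free cell, and that moving along $X$ in the increasing direction is itself monotone. It follows that $q_2\in\partial_T \text{span}_r^{(\beta)}(X)$.

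\textbf{Main obstacle.} The delicate step is the topological claim that $\gamma$ must meet $\pi_1\cup\pi_3\cup X$. To make it rigorous, we plan a case split on whether $X$ lies on the left or the bottom side. We will use the Jordan curve theorem for the curve bounding $\Omega$, together with monotonicity, to show that $s$ cannot lie strictly between the two paths. For example, if $s$ is on the left boundary above $x_1$, then $\gamma$ starts to the upper-left of $\pi_1$ and ends to its right on the top side, so it must cross $\pi_1$. A cleaner formulation is to view each monotone path as the graph of a weakly monotone curve separating $\beta$ into ``upper-left'' and ``lower-right'' parts, and then apply the intermediate value theorem to $\gamma$ relative to $\pi_1$ and $\pi_3$. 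The degenerate cases, where $\mathcal{I}_r^{(\beta)}(X)$ is empty or a single point, follow immediately.
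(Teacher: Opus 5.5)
Your proposal is correct and follows essentially the same route as the paper. Both use the same minimal-segment choice of $\mathcal{I}_r^{(\beta)}(X)$, the same trivial inclusion, and the same path-switching argument; the paper uses exactly your ``cleaner formulation'': if the source $s$ of the witness path precedes $X$ in the entry-boundary order, it must cross the path from $X$ to the reachable point on the left, and if $s$ succeeds $X$, it must cross the path to the reachable point on the right.

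That pairwise crossing fact, with the case split on the position of $s$ rather than on which side $X$ lies, makes the region $\Omega$ unnecessary. This is fortunate, because the boundary of $\Omega$ need not be a simple curve when $\pi_1$ and $\pi_3$ cross. Also, any intersection point suffices for the switch, not just the last one.
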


\begin{proof}
We prove the claim when $X$ lies on the bottom boundary of $\beta$; the
left-boundary case is symmetric.  Order the entry boundary by walking down
$\partial_L\beta$ and then right along $\partial_B\beta$.  If two sources and
two top-boundary targets occur in opposite orders, any monotone paths between
the corresponding pairs must intersect.  The analogous statement holds on
the right boundary, ordered from top to bottom.

Let $Y_T=\partial_T\text{span}_r^{(\beta)}(X)$, and let
$\mathcal I_r^{(\beta)}(X)$ be the minimal top-boundary segment containing
$Y_T$, or the empty segment if $Y_T=\emptyset$.  Clearly
$Y_T\subseteq\partial_T\mathcal L_r^{(\beta)}\cap
\mathcal I_r^{(\beta)}(X)$.  Conversely, take a point $y$ in the latter set.
If $y\notin Y_T$, closedness and minimality give points $y^-,y^+\in Y_T$ to
its left and right.  Let $Q$ be a monotone path from an entry-boundary point $s$ to
$y$.  If $s\in X$, we are done.  If $s$ precedes $X$, then $Q$ intersects a
path from $X$ to $y^-$; if $s$ succeeds $X$, it intersects a path from $X$
to $y^+$.  Switching paths at an intersection shows in either case that $X$
reaches $y$.  Thus
$Y_T=\partial_T\mathcal L_r^{(\beta)}\cap\mathcal I_r^{(\beta)}(X)$.
The same argument on the right boundary gives the segment
$\mathcal J_r^{(\beta)}(X)$.
\end{proof}

We define \emph{$\text{top-span}_r^{(\beta)}(X)$} and
\emph{$\text{right-span}_r^{(\beta)}(X)$} to be the minimal line segments
satisfying the corresponding equalities in Lemma~\ref{lem:intervalspan}; this
makes the two choices unique.


\paragraph{Encoding Scheme and Data Structures.}
Fix a block $\beta = [I_0, I_0+B] \times [J_0, J_0 + B]$ and the parameter
$\tilde r\in(r^-,r^+)$ from earlier.  We construct data structures reporting the endpoints
of $\text{top-span}_{\tilde r}^{(\beta)}(X)$ and
$\text{right-span}_{\tilde r}^{(\beta)}(X)$ for a segment $X$ in one bottom
grid edge of $\beta$.  The left-boundary case is symmetric.

Suppose that $X \subseteq e\cap\mathcal F_{\tilde r}$ for
$e = [i, i+1] \times \{J_0\}$, and let
$\upsilon = [i_0, i_0+b] \times [J_0, J_0+B]$ be the vertical mini-slab
containing $e$.  We split $\text{top-span}_{\tilde r}^{(\beta)}(X)$ into the
minimal segments containing, respectively,
\[
\partial_T\text{span}_{\tilde r}^{(\beta)}(X)
 \cap ([i_0,i_0+b]\times\{J_0+B\})
\quad\text{and}\quad
\partial_T\text{span}_{\tilde r}^{(\beta)}(X)
 \cap ([i_0+b,I_0+B]\times\{J_0+B\}).
\]
Denote them by $\text{top-span}_{\tilde r}^{(\upsilon)}(X)$ and
$\text{top-span}_{\tilde r}^{(\beta\backslash\upsilon)}(X)$.  No reachable
top-boundary point lies to the left of $\upsilon$, so these two segments
determine $\text{top-span}_{\tilde r}^{(\beta)}(X)$ in constant time.

Let $\text{top-span}_{\tilde{r}}^{-(\upsilon)}(X)$ and $\text{top-span}_{\tilde{r}}^{+(\upsilon)}(X)$ denote the min and max $x$-coordinates of $\text{top-span}_{\tilde{r}}^{(\upsilon)}(X)$ respectively. If $\text{top-span}_{\tilde{r}}^{(\upsilon)}(X)$ is empty, then set $\text{top-span}_{\tilde{r}}^{-(\upsilon)}(X) = \text{top-span}_{\tilde{r}}^{+(\upsilon)}(X) =$ ``-''. Similarly, let $\text{top-span}_{\tilde{r}}^{-(\beta\backslash\upsilon)}(X)$ and $\text{top-span}_{\tilde{r}}^{+(\beta\backslash\upsilon)}(X)$ be the min and max $x$-coordinates of $\text{top-span}_{\tilde{r}}^{(\beta\backslash\upsilon)}(X)$ respectively, and also let $\text{right-span}_{\tilde{r}}^{-(\beta)}(X)$ and $\text{right-span}_{\tilde{r}}^{+(\beta)}(X)$ be the min and max $y$-coordinates of $\text{right-span}_{\tilde{r}}^{(\beta)}(X)$ respectively. (Again, if $\text{top-span}_{\tilde{r}}^{(\beta\backslash\upsilon)}(X)$ or $\text{right-span}_{\tilde{r}}^{(\beta)}(X)$ is empty, then we set the corresponding values to be ``-''.)

We define the following encoding scheme for $\text{top-span}_{\tilde{r}}^{(\upsilon)}(X)$:
\begin{itemize}
    \item Let $\text{top-span-ID}_{\tilde{r}}^{-(\upsilon)}(X)$ be ``*'' if $\text{top-span}_{\tilde{r}}^{-(\upsilon)}(X)$ is the $x$-coordinate of the upper-right corner of $\upsilon$; otherwise, let it be $(i-I_0, j - J_0)$ if this coordinate is equal to $\text{h-door}^-_{i,j,\tilde{r}}$ for some $(i,j) \in \beta$, or ``!'' if it is equal to $X^-$, or ``-'' if $\text{top-span}_{\tilde{r}}^{(\upsilon)}(X)$ is empty.
    \item Let $\text{top-span-ID}_{\tilde{r}}^{+(\upsilon)}(X)$ be $i-I_0$ if $\text{top-span}_{\tilde{r}}^{+(\upsilon)}(X)$ is equal to $\text{h-door}^+_{i,J_0+B,\tilde{r}}$ for some $I_0 \leq i < I_0+B$, or ``-'' if $\text{top-span}_{\tilde{r}}^{(\upsilon)}(X)$ is empty.  Indeed, every reachable point in a top-boundary door can move to its right endpoint.
\end{itemize}
Define a similar set of encodings for $\text{top-span}_{\tilde{r}}^{(\beta\backslash\upsilon)}(X)$ and $\text{right-span}_{\tilde{r}}^{(\beta)}(X)$:
\begin{itemize}
    \item Let $\text{top-span-ID}_{\tilde{r}}^{-(\beta\backslash\upsilon)}(X)$ be ``*'' if $\text{top-span}_{\tilde{r}}^{-(\beta\backslash\upsilon)}(X)$ is the $x$-coordinate of the upper-right corner of $\upsilon$, or ``**'' if it is the $x$-coordinate of the upper-right corner of $\beta$ but not of $\upsilon$; otherwise, let it be $(i-I_0, j - J_0)$ if this coordinate is equal to $\text{h-door}^-_{i,j,\tilde{r}}$ for some $(i,j) \in \beta$, or ``-'' if $\text{top-span}_{\tilde{r}}^{(\beta\backslash\upsilon)}(X)$ is empty. If ``*'' and a door-endpoint ID both apply, we use ``*''.
    \item Let $\text{top-span-ID}_{\tilde{r}}^{+(\beta\backslash\upsilon)}(X)$ be $i-I_0$ if $\text{top-span}_{\tilde{r}}^{+(\beta\backslash\upsilon)}(X)$ is equal to $\text{h-door}^+_{i,J_0+B,\tilde{r}}$ for some $I_0 \leq i < I_0+B$, or ``-'' if $\text{top-span}_{\tilde{r}}^{(\beta\backslash\upsilon)}(X)$ is empty.
    \item Let $\text{right-span-ID}_{\tilde{r}}^{-(\beta)}(X)$ be ``**'' if $\text{right-span}_{\tilde{r}}^{-(\beta)}(X)$ is the $y$-coordinate of the upper-right corner of $\beta$; otherwise, let it be $(i-I_0, j - J_0)$ if this coordinate is equal to $\text{v-door}^-_{i,j,\tilde{r}}$ for some $(i,j) \in \beta$, or ``-'' if $\text{right-span}_{\tilde{r}}^{(\beta)}(X)$ is empty.
    \item Let $\text{right-span-ID}_{\tilde{r}}^{+(\beta)}(X)$ be $j-J_0$ if $\text{right-span}_{\tilde{r}}^{+(\beta)}(X)$ is equal to $\text{v-door}^+_{I_0+B,j,\tilde{r}}$ for some $J_0 \leq j < J_0+B$, or ``-'' if $\text{right-span}_{\tilde{r}}^{(\beta)}(X)$ is empty.
\end{itemize}
We will need one more set of encodings, this time for the line segment $X$ itself. Let $X^-$ and $X^+$ be the min and max $x$-coordinates of $X$ respectively, and recall that we have $X \subseteq e$ for some grid edge $e = [i,i+1] \times \{J_0\}$ on the bottom boundary of $\beta$. Let $\tau = [0,n] \times [J_0, J_0+B]$ be the horizontal slab such that $\beta \subset \tau$.
\begin{itemize}
    \item Let $\text{X-ID}^{-(\tau)}_i$ be the rank of $X^-$ in the sorted list $\hDoors^{(\tau)}_i$, or its between-position if it is not in the list, using a special symbol below or above the list.
    \item Define $\text{X-ID}^{+(\tau)}_i$ analogously for $X^+$.
\end{itemize}
Observe that all encodings can be stored in $O(\log B)$ bits.

We construct data structures supporting the following operations in constant query time. Their input spaces are too large for direct truth tables.

\begin{operation}\label{op:xtopnegminislab}\rm
(Reporting $\text{top-span-ID}_{\tilde{r}}^{-(\upsilon)}(X)$ for a query line segment $X$.)
\begin{tabbing}
\ \ \ \ \ \ \ \= Output:\ \ \=\+\kill
Input: 
\> grid edge $e = [i, i+1] \times \{J_0\}$, and \\
\> $\text{X-ID}^{-(\tau)}_i$ and $\text{X-ID}^{+(\tau)}_i$ for a line segment $X\subseteq e\cap\mathcal F_{\tilde r}$.\\
\> \quad ($e$ is on the bottom boundary of a vertical mini-slab $\upsilon = [i_0, i_0+b] \times [J_0, J_0+B]$, \\
\> \quad which is contained within a block $\beta$, and both are contained within a horizontal slab $\tau$).\\
Output:
\> $\text{top-span-ID}_{\tilde{r}}^{-(\upsilon)}(X)$.
\end{tabbing}
\end{operation}

\begin{operation}\label{op:xtopposminislab}\rm
(Reporting $\text{top-span-ID}_{\tilde{r}}^{+(\upsilon)}(X)$ for a query line segment $X$.)
\begin{tabbing}
\ \ \ \ \ \ \ \= Output:\ \ \=\+\kill
Input: 
\> Same as for Operation~\ref{op:xtopnegminislab}.\\
Output:
\> $\text{top-span-ID}_{\tilde{r}}^{+(\upsilon)}(X)$.
\end{tabbing}
\end{operation}

\begin{operation}\label{op:topnegbackslash}\rm
(Reporting $\text{top-span-ID}_{\tilde{r}}^{-(\beta\backslash\upsilon)}(X)$ for a query line segment $X$.)
\begin{tabbing}
\ \ \ \ \ \ \ \= Output:\ \ \=\+\kill
Input: 
\> Same as for Operation~\ref{op:xtopnegminislab}.\\
Output:
\> $\text{top-span-ID}_{\tilde{r}}^{-(\beta\backslash\upsilon)}(X)$.
\end{tabbing}
\end{operation}

\begin{operation}\label{op:topposbackslash}\rm
(Reporting $\text{top-span-ID}_{\tilde{r}}^{+(\beta\backslash\upsilon)}(X)$ for a query line segment $X$.)
\begin{tabbing}
\ \ \ \ \ \ \ \= Output:\ \ \=\+\kill
Input: 
\> Same as for Operation~\ref{op:xtopnegminislab}.\\
Output:
\> $\text{top-span-ID}_{\tilde{r}}^{+(\beta\backslash\upsilon)}(X)$.
\end{tabbing}
\end{operation}

\begin{operation}\label{op:rightneg}\rm
(Reporting $\text{right-span-ID}_{\tilde{r}}^{-(\beta)}(X)$ for a query line segment $X$.)
\begin{tabbing}
\ \ \ \ \ \ \ \= Output:\ \ \=\+\kill
Input: 
\> Same as for Operation~\ref{op:xtopnegminislab}.\\
Output:
\> $\text{right-span-ID}_{\tilde{r}}^{-(\beta)}(X)$.
\end{tabbing}
\end{operation}

\begin{operation}\label{op:rightpos}\rm
(Reporting $\text{right-span-ID}_{\tilde{r}}^{+(\beta)}(X)$ for a query line segment $X$.)
\begin{tabbing}
\ \ \ \ \ \ \ \= Output:\ \ \=\+\kill
Input: 
\> Same as for Operation~\ref{op:xtopnegminislab}.\\
Output:
\> $\text{right-span-ID}_{\tilde{r}}^{+(\beta)}(X)$.
\end{tabbing}
\end{operation}

\noindent
The choice of edge $e$ determines $\upsilon$, $\beta$, and $\tau$, so these do not need to be given explicitly as inputs.

The data structures for these operations will be constructed at the fixed value $\tilde r$.
The following lemma shows that, for a good block, the same encoded answers can
be decoded at every parameter in $(r^-,r^+)$. So for each good block, the tables will in fact apply for every relevant query value given to our decision algorithm.

\begin{lemma}\label{lem:goodstructuresallr}
Let $\beta$ be an $(r^-,r^+)$-good block.  Throughout $(r^-,r^+)$, every door in $\beta$ has the same empty/nonempty status, and all door-endpoint orderings used by the dynamic program are fixed.  The data structures for Operation~\ref{op:localreachIDs} and Operations~\ref{op:xtopnegminislab}--\ref{op:rightpos}, constructed with the parameter $\tilde r$, also implement the corresponding operations for every $r\in(r^-,r^+)$.  More precisely, for Operation~\ref{op:localreachIDs}, the returned IDs can be decoded by evaluating their indicated door endpoints at $r$.  For each of the other six operations and every valid query segment $X\subseteq e\cap\mathcal F_r$, if the two $X$-IDs are computed at $r$, the returned ID can likewise be decoded at $r$, with ``!'' interpreted as the corresponding endpoint of $X$, ``*'' as the shared corner used in the definition of the relevant span, and ``**'' as the upper-right corner of $\beta$.  The same statements hold for the symmetric operations for a query segment on the left boundary of $\beta$.
\end{lemma}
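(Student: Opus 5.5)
The plan is a continuity argument on the combinatorial structure of the free space inside $\beta$, followed by the observation that every quantity the data structures return depends only on that combinatorial structure.

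\emph{Step 1: the structure is constant on the interval.} Fix a grid edge of $\beta$. Each endpoint function $f^\pm_{i,j}(r)$ or $g^\pm_{i,j}(r)$ is continuous in $r$ on any interval where the door's nonemptiness status and its endpoint branch (interior versus clipped to $j$ or $j+1$) are fixed. The boundary points of the nonemptiness sets and of the endpoint-incidence sets are all in $\mathcal C_\beta$. Since $\beta$ is good, none of these boundary points lies in $(r^-,r^+)$. Hence each door's status, and the branch of each endpoint, is constant on the interval.

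Next consider two endpoints on the same row or column that are compared by the DP, say $f^a_{i,j}$ and $f^{a'}_{i',j}$. The sign of their difference can change only at a boundary point of the solution set of the corresponding equation, and those boundary points are again in $\mathcal C_\beta$. So every strict inequality and every equality between compared endpoints persists on $(r^-,r^+)$. Therefore the orderings $\vDoors$ and $\hDoors$, restricted to the rows and columns of $\beta$ and including equality classes, are the same for all $r$ in the interval. This gives the first two sentences of the lemma.

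\emph{Step 2: reachability is combinatorial.} I would argue that, for segments whose endpoints are door endpoints (or endpoints of a query segment $X$), reachability depends only on the ordered combinatorial data. Inside one cell, the free space is convex. So a point on the left or bottom door of a cell reaches a point on its right or top door exactly when the obvious coordinate comparisons hold; this is the Alt--Godau transition rule. Iterating, local reaches and span endpoints are obtained by running the DP (or its local and reverse variants) whose only real operations are taking $\max$ and $\min$ of compared endpoints and testing order. By induction over the cells of $\beta$ in lexicographic order, I would show that at any $r$ in the interval each computed value equals \emph{the same symbolic endpoint}: a door endpoint $\text{v-door}^-_{z,j,r}$ or $\text{h-door}^-_{z,j,r}$, an endpoint of $X$, or a corner. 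The induction works because each step's choice is dictated by comparisons that do not change across the interval.

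For Operation~\ref{op:localreachIDs}, the stored IDs name the symbolic endpoint at $\tilde r$, so evaluating that endpoint at $r$ decodes correctly.

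\emph{Step 3: the span operations.} For the six span operations, the query $X$ enters only through its two IDs relative to $\hDoors^{(\tau)}_i$. These ranks record the position of $X^\pm$ among the fixed ordering, including equalities. So the DP started from $X$ makes identical branch decisions at $r$ and at $\tilde r$. Consequently $\partial_T\text{span}$ and $\partial_R\text{span}$, together with $\partial_T\mathcal L$ and $\partial_R\mathcal L$, have the same symbolic description. The minimal segments defined via Lemma~\ref{lem:intervalspan}, and their splitting at the corner of $\upsilon$, have the same symbolic endpoints, which yields the decoding rules for ``!'', ``*'' and ``**''. The tie rule (prefer ``*'') is also invariant, since ties are part of the fixed equality classes. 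The left-boundary case is symmetric.

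\emph{Main obstacle.} I expect the hardest step to be justifying that the local critical set $\mathcal C_\beta$ contains all comparisons the DP actually needs. The comparisons used include one between a door endpoint on edge $(i,j)$ and a reach value inherited from a different edge; I would have to check that such a value is always a door endpoint from the same row or column, so that it is covered by the equations in $\mathcal C_\beta$. The degenerate cases also need care: empty doors, point doors, and equalities at corners. Here I would lean on the paper's footnoted convention that orderings retain equality classes and the branch and incidence tests.
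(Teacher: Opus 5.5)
Your proposal is correct and follows essentially the same route as the paper: goodness of $\beta$ means no door status, endpoint incidence, or endpoint order changes inside $(r^-,r^+)$. Hence the forward and reverse dynamic programs make identical comparisons and return the same symbolic IDs, and $X$ enters only through its two $X$-IDs, so decoding at $r$ is valid. Your Step~1 continuity and clopen argument, and the same-row/same-column induction you list as the main obstacle, only spell out details the paper's short proof leaves implicit; that induction is immediate from the Alt--Godau recurrence, since $\vreach^-_{i+1,j}$ is always either $\vdoor^-_{i+1,j}$ or $\vreach^-_{i,j}$.
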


\begin{proof}
Because $\beta$ is good, $(r^-,r^+)$ contains no value at which a door changes
status or incidence, or at which two relevant door endpoints exchange order.
Thus all forward and reverse dynamic programs make the same comparisons and
return the same IDs throughout the interval, retaining equality classes and using fixed labels to break
persistent ties.  Once the two $X$-IDs are fixed, the other tables use no
coordinates of $X$.  Evaluating the indicated endpoints at $r$ therefore
decodes their answers.  The left-boundary case is symmetric.
\end{proof}

\subsection{Implementing Operations~\ref{op:xtopnegminislab} and~\ref{op:xtopposminislab}}

Observe that for a given query line segment $X$ on the bottom boundary of a vertical mini-slab $\upsilon$, $\text{top-span-ID}_{\tilde{r}}^{-(\upsilon)}(X)$ and $\text{top-span-ID}_{\tilde{r}}^{+(\upsilon)}(X)$ are entirely determined by $\text{X-ID}^{-(\tau)}_i$ and $\text{X-ID}^{+(\tau)}_i$, along with the combinatorial description of the door orderings in $\upsilon$. This is because any $xy$-monotone path from some point in $X$ to the top boundary of $\upsilon$ must be entirely contained within $\upsilon$. We start by describing how to efficiently find the combinatorial description for each vertical mini-slab $\upsilon$.

Fix a horizontal slab $\tau=[0,n]\times[J_0,J_0+B]$.  The \emph{horizontal
description ID} of some vertical mini-slab $\upsilon = [i_0, i_0 + b] \times [J_0, J_0+B]$ in $\tau$ is the packed vector
\[
    \langle \hDoorsRep_{i_0}^{(\tau)},\ldots,
    \hDoorsRep_{i_0+b-1}^{(\tau)}\rangle,
\]
where each ID is understood with respect to the fixed parameter $\tilde{r}$.
This vector gives, for each $[i, i+1] \in [i_0, i_0+b]$, the ordering
of all horizontal door endpoints in $\upsilon$ whose min and max $x$-coordinates lie in $[i, i+1]$.  To compute these vectors, we
build the same point-location structure for the horizontal slab $\tau$ as in
Section~\ref{sec:continuousdecision}, query it with each pair
$(p_i,p_{i+1})$, and pack consecutive groups of $b$ answers.  Over all
horizontal slabs, this takes
$O((n/B)(B^{O(d)}+n\log B)+n^2/B)$ time, which is $o(n^2/\log^2 n)$ assuming $B = \log^c n$ for a sufficiently large constant $c$.

Next we need a compact representation of the relative ordering of the vertical doors in $\upsilon$.
For fixed $\tau$, the points $q_{J_0},\ldots,q_{J_0+B}$ are fixed, and the
vertical door orderings of $\upsilon$ depend only on the $b+1$ points
$p_{i_0},\ldots,p_{i_0+b}$.  The surfaces at which this description changes
are the equalities
\[
    f^a(\xi_i,q_j,q_{j+1},\tilde r)
    = f^{a'}(\xi_{i'},q_j,q_{j+1},\tilde r),
\]
over all $i,i'\in[0,b]$, all $j\in[J_0,J_0+B)$, and all
$a,a'\in\{-,+\}$, together with the equalities representing the endpoint and door-appearance
events for these vertical doors.  This is an arrangement of $B^{O(1)}$ constant-degree
surfaces in $O(db)$ dimensions, and hence (because $d = O(1)$) it has $B^{O(b)}=n^{O(\eps)}$
cells by the Milnor--Thom theorem~\cite{AgarwalS00a,Milnor64,Thom65}.  So there are only $B^{O(b)}$ possible descriptions for the ordering of all vertical doors within a vertical mini-slab inside the fixed horizontal slab $\tau$.

We still need to actually assign these descriptions to the vertical
mini-slabs efficiently. First apply the preprocessing described in Section~\ref{sec:decisionencodingdoors} (with the parameter $\tilde{r}$) to obtain, for each vertical slab $\sigma$ and each choice of $j_0$ divisible by $b$, the packed vector $\langle\vDoorsRep_{j_0}^{(\sigma)},\ldots,\vDoorsRep_{j_0+b-1}^{(\sigma)}\rangle$.\footnote{Again, each ID is understood with respect to $\tilde{r}$.}
Now divide each vertical mini-slab $\upsilon$ into the $B/b$ mini-blocks
\[
    [i_0,i_0+b]\times[j_0,j_0+b],
    \qquad j_0=J_0,J_0+b,\ldots,J_0+B-b.
\]
For each such mini-block, we can use Operation~\ref{op:doors}, along with the packed vectors we just computed, to get the universal ID
for its vertical door ordering in $O(1)$ time.

We combine the $B/b$ universal IDs constructively as follows.  Fix a horizontal
slab $\tau$.  After processing the first $t$ mini-blocks, a ``state'' is the
vertical description restricted to the first $tb$ rows of $\upsilon$.  At the
next stage, for every actual vertical mini-slab in $\tau$, form the word-sized
pair consisting of its current state ID and the universal door ID of its next
mini-block.  We sort these pairs using standard packed radix sorting~\cite{AlbersH97,Chan12}, assign the same new ID to equal pairs,
and retain one explicit representative for every new ID.  Thus the IDs and
transitions are obtained directly from the descriptions that actually occur.  The
bound above implies that there are at most $B^{O(b)}$ distinct states, so each
state ID has $O(b\log B)=O(\eps\log n)$ bits.  Over all $B/b$ stages and all
horizontal slabs, the total number of pairs processed is $O((n/b)^2)$, and the
total construction time is $O((n/b)^2\log^{O(1)}B)$.  The final state ID is the
desired vertical description ID of $\upsilon$.

\paragraph{Lookup tables for top spans.}
For each vertical mini-slab $\upsilon$, the tuple consisting of (i) the horizontal slab $\tau$ such that $\upsilon \subset \tau$, (ii) its horizontal description ID, and (iii) its vertical description ID
determines all comparisons among door endpoints that can affect
reachability from the bottom of $\upsilon$ to the top of $\upsilon$.  It
remains only to account for the query segment $X$.  For each fixed
horizontal slab $\tau$, we build a table indexed by the horizontal description ID, the
vertical description ID, the offset $i-i_0\in[0,b)$ of the bottom edge
containing $X$, and the two values
$\text{X-ID}^{-(\tau)}_i,\text{X-ID}^{+(\tau)}_i$.  For every possible input
to this table, we choose one representative ordering consistent with the two
descriptions and with the two positions of $X^-$ and $X^+$, and simulate the
Alt--Godau dynamic program inside the $b\times B$ vertical mini-slab, with
$X$ as the only initially reachable segment on the bottom boundary.  The
simulation uses only order comparisons between reach endpoints, $X^-$,
$X^+$, and door endpoints, so all inputs with the same
combinatorial description have the same answer.  By induction in
the free-space DP, every endpoint produced by the simulation is either
$X^-$, $X^+$, a door endpoint, or the shared corner of the mini-slab, so the two endpoints of
$\text{top-span}_{\tilde r}^{(\upsilon)}(X)$ are encoded exactly by the IDs
defined above.  The table outputs these two IDs (using local coordinates
inside $\upsilon$, which are converted to the coordinates relative to
$\beta$ in $O(1)$ time).  Thus, as in the mini-block DP of
Section~\ref{sec:DPfordecision}, the ranks supplied with respect to the
horizontal slab are translated to the local ranks of the representative
mini-slab before the table is applied, and the returned local ranks are
translated back to the IDs relative to $\beta$.

For each horizontal slab $\tau$, the number of possible horizontal description IDs is $B^{O(b)}$, the number of
possible vertical description IDs is $B^{O(b)}$, and the number of choices
for the offset and the two $X$-IDs is $B^{O(1)}$.  Thus the table for one
horizontal slab has size $B^{O(b)}=n^{O(\eps)}$, and the time to construct the tables is also $n^{O(\varepsilon)}$.  Over all
horizontal slabs the table-construction time is $n^{1+O(\eps)}$.

\paragraph{Queries and running time.}
A query for Operations~\ref{op:xtopnegminislab} and~\ref{op:xtopposminislab} now reads the stored horizontal description ID and
vertical description ID of the relevant vertical mini-slab $\upsilon$, combines them with the offset of the
edge containing $X$ and the two $X$-IDs, and performs one table lookup.  The
total preprocessing time is
\[
    O((n/b)^2\log^{O(1)}B+(n^2/B)\log B+n^{1+O(\eps)})
    = O((n^2/\log^2 n)\log^{O(1)}\log n),
\]
assuming $B = \log^c n$ for a sufficiently large constant $c$, and assuming $\varepsilon$ is a sufficiently small constant.  The query time is $O(1)$. Putting everything together, we have:

\begin{lemma}\label{lem:minislabtopspan}
    There is an algorithm running in time $O((n^2/\log^2 n)\log^{O(1)}\log n)$ that outputs a data structure implementing Operations~\ref{op:xtopnegminislab} and~\ref{op:xtopposminislab} with $O(1)$ query time.
\end{lemma}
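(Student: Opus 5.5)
The plan is to assemble the pieces built in this subsection and verify each piece's running time and correctness, in three stages: description IDs, lookup tables, then queries.

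First, for every horizontal slab $\tau$ and every vertical mini-slab $\upsilon\subset\tau$, compute its horizontal description ID. We build the point-location structure for $\tau$ from Section~\ref{sec:encodeslab} at the fixed value $\tilde r$, query it with every pair $(p_i,p_{i+1})$, and pack groups of $b$ answers. This costs $O((n/B)(B^{O(d)}+n\log B))$ time, which is $o(n^2/\log^2 n)$.

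Next, compute the vertical description ID of every $\upsilon$.
\begin{itemize}
\item Run the preprocessing of Section~\ref{sec:decisionencodingdoors} at $\tilde r$.
\item Obtain the universal ID of each of the $B/b$ mini-blocks of $\upsilon$ by Operation~\ref{op:doors}.
\item Fold these IDs into a single state ID in $B/b$ stages. Each stage packed-radix-sorts the pairs (current state, next universal ID) over all mini-slabs in $\tau$ and renames equal pairs.
\end{itemize}
Correctness of the state encoding uses the Milnor--Thom bound: within a fixed $\tau$, the number of distinct vertical descriptions is $B^{O(b)}$, since the arrangement has $B^{O(1)}$ surfaces in $O(db)$ variables. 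Hence every intermediate state, being a prefix restriction, has an ID of $O(b\log B)=O(\eps\log n)$ bits, and the pairs fit in a word. The total work is $O((n/b)^2)$ pairs, plus polylog$B$ sorting overhead.

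Then, for each $\tau$, build a table indexed by (horizontal ID, vertical ID, offset $i-i_0$, $\text{X-ID}^{-(\tau)}_i$, $\text{X-ID}^{+(\tau)}_i$). For each index we pick a stored representative ordering, meaning the explicit representative retained for each state ID and for each horizontal ID. We then simulate the Alt--Godau DP inside the $b\times B$ mini-slab, with $X$ as the only initially reachable segment on the bottom boundary, and record the endpoint IDs of $\text{top-span}^{(\upsilon)}_{\tilde r}(X)$.

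The key correctness claim is that the DP only compares reach endpoints, door endpoints, $X^-$ and $X^+$. Inductively, every reach endpoint is one of these or the mini-slab corner, so the answer is a function of the index and is encoded by the top-span IDs defined earlier. Here we use that monotone paths from the bottom of $\upsilon$ to the top of $\upsilon$ stay inside $\upsilon$. The table has $B^{O(b)}\cdot B^{O(1)}=n^{O(\eps)}$ entries, each computed in $B^{O(1)}$ time, so all slabs together cost $n^{1+O(\eps)}$.

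A query reads the two stored IDs of $\upsilon$ and performs one lookup, so it takes $O(1)$ time. Summing gives $O((n/b)^2\log^{O(1)}B+(n^2/B)\log B+n^{1+O(\eps)})$, which is within the bound for $B=\log^c n$ and small $\eps$.

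The main obstacle is justifying the state-merging step. The intermediate prefix descriptions must be bounded by $B^{O(b)}$, which again follows from Milnor--Thom applied to the sub-arrangement on the first $tb$ rows. We must also translate slab-relative ranks into the representative's local ranks consistently, and this is handled as in Section~\ref{sec:DPfordecision}.
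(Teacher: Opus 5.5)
Your proposal is correct and follows the paper's proof essentially step for step. The paper likewise obtains horizontal description IDs from the slab point-location structures and obtains vertical description IDs by folding the $B/b$ universal mini-block IDs through $B/b$ rounds of packed radix sorting. It fills one table per horizontal slab by simulating the Alt--Godau DP on a representative ordering, and answers each query with one lookup. Your running-time accounting, $O((n/b)^2\log^{O(1)}B+(n^2/B)\log B+n^{1+O(\eps)})$, matches the paper's. One small nuance: the renaming is exact by induction on equal pairs, and Milnor--Thom is needed only so that the $B^{O(b)}$ states fit in a word, not for correctness of the encoding.
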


\subsection{Implementing Operations~\ref{op:topposbackslash} and~\ref{op:rightneg}}\label{sec:op1011}

We begin with three combinatorial observations. Throughout this subsection, all reachability is with respect to $\mathcal F_{\tilde r}$.  For a rectangle $R$, order its \emph{entry boundary} $\partial_LR\cup\partial_BR$ by walking from the upper-left corner down the left side and then from left to right along the bottom side. The endpoint monotonicity below is the continuous counterpart of Corollary~4.3 in~\cite{BenAvrahamKS15}; the remaining observations use the same crossing-path argument.

\begin{lemma}\label{lem:sourcedominance}
Let $R$ be a rectangle contained in a block $\beta$, and let $a_1,a_2\in(\partial_LR\cup\partial_BR)\cap\mathcal F_{\tilde r}$, where $a_1$ precedes $a_2$ in the entry-boundary order.  For $i=1,2$, let $y_i$ be the bottommost point of $\partial_RR$ reachable from $a_i$ inside $R$, and let $z_i$ be the rightmost point of $\partial_TR$ reachable from $a_i$ inside $R$.  Whenever both $y_1$ and $y_2$ are defined, $y_2$ is not above $y_1$.  Whenever both $z_1$ and $z_2$ are defined, $z_2$ is not to the left of $z_1$.
\end{lemma}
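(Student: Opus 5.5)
We argue by contradiction, using the crossing-path exchange from the proof of Lemma~\ref{lem:intervalspan}. Treat the right-side claim first; the top-side claim follows by the symmetric argument with the roles of $x$ and $y$ exchanged.

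Suppose $y_1$ and $y_2$ are both defined but $y_2$ lies strictly above $y_1$. Let $\pi_1$ be an $xy$-monotone path in $\mathcal F_{\tilde r}\cap R$ from $a_1$ to $y_1$, and let $\pi_2$ be such a path from $a_2$ to $y_2$. The key step is to show that $\pi_1$ and $\pi_2$ intersect. Here $a_1$ precedes $a_2$ in the entry-boundary order, which runs down the left side and then rightward along the bottom, while $y_1$ is below $y_2$ on $\partial_R R$.

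To see the intersection, view each monotone path as the graph of a weakly monotone curve in the rectangle. Suppose the paths are disjoint. Then $\pi_2$ together with suitable boundary arcs of $R$ splits $R$ into two regions. One region contains the part of the entry boundary preceding $a_2$ and the part of $\partial_R R$ below $y_2$. It is bounded by $\pi_2$, the entry boundary from $a_2$ back to the lower-right corner direction, and $\partial_R R$ below $y_2$. More concretely, $\pi_2$ separates the points of $\partial R$ lying clockwise between $a_2$ and $y_2$ from those lying counterclockwise.

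Going around $\partial R$, the order is: upper-left corner, down the left side, along the bottom, up the right side, then back along the top. In this cyclic order, $a_1$ lies in one of the two boundary arcs determined by $a_2$ and $y_2$, and $y_1$ lies in the other. Since $a_1$ precedes $a_2$, the point $a_1$ lies on the arc from the upper-left corner to $a_2$. Since $y_1$ is below $y_2$, the point $y_1$ lies on the arc from $a_2$ to $y_2$. By the Jordan curve theorem applied inside the closed rectangle, the connected path $\pi_1$ must therefore meet $\pi_2$. Degenerate cases, such as $a_1$ or $y_1$ lying on $\pi_2$ itself, are immediate intersections.

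Let $w\in\pi_1\cap\pi_2$. Following $\pi_2$ from $a_2$ to $w$ and then $\pi_1$ from $w$ to $y_1$ gives an $xy$-monotone path. It is monotone because each piece is monotone and they meet at $w$, and it lies in $\mathcal F_{\tilde r}\cap R$. So $a_2$ reaches $y_1$, which is strictly below $y_2$. This contradicts the choice of $y_2$ as the bottommost point of $\partial_R R$ reachable from $a_2$.

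For the top side, suppose $z_2$ is strictly left of $z_1$. In the cyclic order, $z_2$ lies on the top arc between $a_2$'s successors and $z_1$ in the way that forces $\pi(a_1,z_1)$ and $\pi(a_2,z_2)$ to cross. Splicing at a crossing then shows that $a_2$ reaches $z_1$, contradicting the maximality of $z_2$.

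Two small points need attention. The extremal points $y_i$ and $z_i$ should be attained; this holds because the reachable sets on the boundary are closed, since the free space is closed and each cell is compact. The main obstacle is stating the intersection argument cleanly, especially when paths run along $\partial R$ or pass through corners. I would handle this by noting that monotone paths in a rectangle are graphs of weakly monotone curves, and by using the cyclic interleaving of the four endpoints on $\partial R$.
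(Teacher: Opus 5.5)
Your proof is correct and is essentially the paper's argument: $a_1,a_2$ interleave with $y_1,y_2$ (or with $z_1,z_2$) on $\partial R$, so the two witness paths meet, and following $\pi_2$ to the meeting point and then $\pi_1$ shows that $a_2$ reaches $y_1$ (or $z_1$), contradicting the extremality of $y_2$ (or $z_2$). Two small slips do not affect validity, because your cyclic-order paragraph and your explicit top-side argument are right: the sentence placing the entry boundary before $a_2$ in the same region as the part of $\partial_R R$ below $y_2$ is backwards, and swapping $x$ and $y$ reverses the entry order, so it turns the right-side claim into the leftmost-top statement of Lemma~\ref{lem:upperdominance} rather than the rightmost-top claim here.
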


\begin{proof}
Suppose first that $y_2$ is above $y_1$, and let $P_i$ be an $xy$-monotone path from $a_i$ to $y_i$.  The endpoints of $P_1$ and $P_2$ occur in opposite orders on the entry boundary and on $\partial_RR$, so the two paths intersect.  Following $P_2$ up to an intersection point and then $P_1$ to $y_1$ gives an $xy$-monotone path from $a_2$ to $y_1$, contradicting the choice of $y_2$.  The proof for the top boundary is identical: if $z_2$ were to the left of $z_1$, the two paths would cross, implying that $a_2$ can also reach $z_1$.
\end{proof}

For $S\in\{T,R\}$, let $\mathcal L_{\tilde r,S}^{(\beta)}$ be the set of
points $p\in\beta\cap\mathcal F_{\tilde r}$ such that $p$ is reachable
through $\mathcal F_{\tilde r}\cap\beta$ from
$\partial_L\beta\cup\partial_B\beta$ and can reach $\partial_S\beta$
through $\mathcal F_{\tilde r}\cap\beta$.  We call these points
\emph{$S$-alive}.  Thus
\[
    \mathcal L_{\tilde r}^{(\beta)}
    =\mathcal L_{\tilde r,T}^{(\beta)}
     \cup\mathcal L_{\tilde r,R}^{(\beta)}.
\]
As with the local reach, the $S$-alive points on any grid edge form an
interval.

\begin{lemma}\label{lem:rightendpoint}
Let $e$ be a bottom grid edge of $\beta$, let $S\in\{T,R\}$, and let $u$
be the right endpoint of the nonempty interval
$\mathcal L_{\tilde r,S}^{(\beta)}\cap e$.  Let $X\subseteq e\cap
\mathcal F_{\tilde r}$ intersect this interval.  If $S=T$, the rightmost point
of $\partial_T\beta$ reachable from $X$ is the rightmost point reachable from
$u$.  If $S=R$, the bottommost point of $\partial_R\beta$ reachable from $X$
is the bottommost point reachable from $u$.
\end{lemma}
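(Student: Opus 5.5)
We treat the case $S=T$; the case $S=R$ is the same argument with the top boundary replaced by the right boundary, ``rightmost'' by ``bottommost'', and the second half of Lemma~\ref{lem:sourcedominance} by its first half. Write $A=\mathcal L_{\tilde r,T}^{(\beta)}\cap e$. This is an interval with right endpoint $u$, and by hypothesis $X\cap A\neq\emptyset$. Throughout, reachability is inside $\beta$. Let $z_X$ be the rightmost point of $\partial_T\beta$ reachable from $X$, and let $z_u$ be the one reachable from $u$. Both exist: $u\in A$ reaches $\partial_T\beta$ by definition of $T$-alive, and so does any point of $X\cap A$. The top boundary is closed, and reachable sets are closed by the usual compactness of the free space, so the extrema are attained. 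The goal is $z_X=z_u$, which we get from two opposite inequalities.

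\emph{Step 1: $z_X$ is not to the right of $z_u$.} Let $x\in X$ be a point from which $z_X$ is reachable. We split into two cases according to the position of $x$.

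If $x$ lies weakly to the left of $u$ on $e$, then $x$ precedes (or equals) $u$ in the entry-boundary order of $\beta$. Lemma~\ref{lem:sourcedominance}, applied with $R=\beta$, $a_1=x$ and $a_2=u$, gives that $z_u$ is not to the left of $z_X$.

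If $x$ lies strictly to the right of $u$, then $x$ is reachable from the entry boundary (it lies on it and is free), and it reaches $z_X\in\partial_T\beta$. Hence $x$ is $T$-alive, so $x\in A$, contradicting the choice of $u$ as the right endpoint of $A$. So this case cannot occur.

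\emph{Step 2: $z_X$ is not to the left of $z_u$.} Pick $x'\in X\cap A$ with $x'$ as far right as possible; this is just the right endpoint of $X\cap A$. We claim $x'$ reaches $z_u$.

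If $x'=u$, the claim is immediate. Otherwise $X$ stops before $u$, so $x'<u$ on $e$ and the horizontal segment $[x',u]$ lies in $A\subseteq\mathcal F_{\tilde r}$. Moving horizontally to the right is an $xy$-monotone motion, so $x'$ can walk along $e$ to $u$ and then follow a path from $u$ to $z_u$. Therefore $x'$ reaches $z_u$, and so does $X$, which gives $z_X\ge z_u$.

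Combining the two steps gives $z_X=z_u$. The only delicate point is Step 1, namely using the maximality of $u$ in $A$ to exclude sources of $X$ lying to the right of $u$. We also need to note that $u$ itself is free and reaches the top side, which holds because $A$ is closed: it is an intersection of closed reachable sets with the closed door.
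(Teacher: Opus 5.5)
Your proof is correct and follows the paper's argument. Any point of $X$ that reaches $\partial_S\beta$ is $S$-alive and hence lies weakly left of $u$, so Lemma~\ref{lem:sourcedominance} shows the extremum from $u$ is at least as good; conversely, a point of $X$ in the alive interval can walk right along the free interval to $u$. You simply spell out the two inequalities, and why the extrema are attained, more explicitly than the paper does.
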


\begin{proof}
Some point of $X$ can first move right along the free interval to $u$, so every
point reachable from $u$ is reachable from $X$.  Conversely, every point of
$X$ that reaches $\partial_S\beta$ is $S$-alive and precedes $u$ on the entry
boundary.  Lemma~\ref{lem:sourcedominance}, applied to this point and $u$,
shows that the appropriate extremum from $u$ is at least as good.  The claim
follows.
\end{proof}

For $S\in\{T,R\}$, a vertical segment $\gamma$, and a point $Y\in\gamma$, let
$\operatorname{succ}_{\gamma,S}(Y)$ be the lowest point of
$\mathcal L_{\tilde r,S}^{(\beta)}\cap\gamma$ that is not below $Y$, or ``-''
if no such point exists.

\begin{lemma}\label{lem:alivesuccessor}
Fix $S\in\{T,R\}$.  Let $\upsilon$ be a vertical mini-slab in $\beta$, let
$\gamma=\partial_R\upsilon$, and let
$s\in\partial_L\upsilon\cup\partial_B\upsilon$.  Let $Y$ be the bottommost
point of $\gamma$ reachable from $s$ inside $\upsilon$.  If some $S$-alive
point of $\gamma$ is reachable from $s$, then
$Y'=\operatorname{succ}_{\gamma,S}(Y)$ is reachable from $s$.  Moreover, for
$S=R$, $Y'$ preserves the bottommost point reachable on $\partial_R\beta$;
for $S=T$, it preserves the rightmost point reachable on $\partial_T\beta$.
\end{lemma}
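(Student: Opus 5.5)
The proof has two parts: reachability of $Y'$ from $s$, and preservation of the relevant extremum on $\partial_R\beta$ or $\partial_T\beta$. Both use the crossing-path switching argument of Lemma~\ref{lem:intervalspan} and Lemma~\ref{lem:sourcedominance}, now applied with $\gamma$ as a separator inside $\beta$.

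\emph{Step 1: $Y'$ is reachable from $s$.} Let $w\in\gamma$ be an $S$-alive point reachable from $s$ inside $\upsilon$. Since $Y$ is the bottommost reachable point of $\gamma$, $w$ is not below $Y$. Hence $Y'$ is defined and lies weakly between $Y$ and $w$ on $\gamma$. If $Y'$ coincides with $Y$ or $w$, we are done. Otherwise $Y'$ lies strictly between them, and we argue as in Lemma~\ref{lem:intervalspan}.

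Since $Y'$ is $S$-alive, there is a monotone path $Q$ in $\mathcal F_{\tilde r}\cap\beta$ from some point $s'\in\partial_L\beta\cup\partial_B\beta$ to $Y'$. Restrict attention to the rectangle $\beta'=[I_0,i_0+b]\times[J_0,J_0+B]$, whose right side contains $\gamma$; the path $Q$ lies in $\beta'$. Take monotone paths $P_Y$ from $s$ to $Y$ and $P_w$ from $s$ to $w$, both inside $\upsilon\subseteq\beta'$.

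Compare positions in the entry-boundary order of $\beta'$. Note that $s$ is on the entry boundary of $\upsilon$, which may differ from that of $\beta'$. To handle this, extend the curves to the boundary, or argue directly that $Q$, which enters $\upsilon$ from its left or bottom side (or starts inside it), must separate the region. Concretely, the union of $P_Y$, the segment of $\gamma$ from $Y$ to $w$, and $P_w$ bounds a region containing $Y'$ on its boundary. $Q$ arrives at $Y'$ from outside that region (from the left or below) or from inside it. In either case, by a Jordan-curve argument, $Q$ meets $P_Y\cup P_w$. Switching to $Q$ at the meeting point yields a monotone path from $s$ to $Y'$.

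The main obstacle is the geometric bookkeeping of this case analysis: the path $Q$ may start to the left of $\upsilon$ and pass above or below $s$. I would handle it by noting that the region between $P_Y$ and $P_w$ lies within $\upsilon$, and that any monotone path reaching an interior point of the segment $YY'w$ of $\gamma$ from outside must cross its boundary. That boundary consists of portions of $P_Y$ and $P_w$, and possibly a portion of $\partial_L\upsilon\cup\partial_B\upsilon$ near $s$. Crossings at $\partial\upsilon$ near $s$ are not a problem, since the shared point is $s$ itself. One subtlety is that the switched path must stay monotone. It does, because both pieces are monotone and meet at a common point.

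\emph{Step 2: preservation.} Take $S=R$; the case $S=T$ is symmetric. Any monotone path from $s$ to $\partial_R\beta$ leaves $\upsilon$ through $\gamma$ at some point $w'$. That point is $R$-alive, since it is reachable from $s$, hence from the entry boundary of $\beta$, and it reaches $\partial_R\beta$. Therefore $w'$ is not below $Y'$, by the definition of $\operatorname{succ}$.

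Now apply Lemma~\ref{lem:sourcedominance} to the rectangle $[i_0+b,I_0+B]\times[J_0,J_0+B]$ with sources $w'$ and $Y'$. Since $Y'$ precedes or equals $w'$ in the entry order, the bottommost point reachable from $Y'$ is not above that reachable from $w'$. Both are reachable from $s$, because $Y'$ is reachable from $s$ by Step 1. So the extremum from $s$ equals the extremum from $Y'$.

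For $S=T$, top points directly above $\upsilon$ are excluded; this matches the $\beta\backslash\upsilon$ split used in Operations~\ref{op:topposbackslash}--\ref{op:rightneg}. Paths to $\partial_T\beta$ beyond $\upsilon$ likewise cross $\gamma$ at a $T$-alive point not below $Y'$, and the top-boundary clause of Lemma~\ref{lem:sourcedominance} finishes the argument.
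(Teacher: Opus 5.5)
Your argument follows the paper's route. Reachability of $Y'$ comes from crossing and switching paths, and preservation comes from observing that every reachable $S$-alive point of $\gamma$ lies on or above $Y'$ and applying Lemma~\ref{lem:sourcedominance} to the part of $\beta$ to the right of $\gamma$.

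In Step~1, the obstacle you flag disappears if you replace $Q$ by its final portion inside $\upsilon$. Because $\upsilon$ spans the full height of $\beta$ and $Q$ is $xy$-monotone, this portion starts at some $a\in\partial_L\upsilon\cup\partial_B\upsilon$. You can therefore compare $a$ with $s$ in the entry order of $\upsilon$ itself:
if $a=s$ you are done; if $a$ precedes $s$, then $Q$ and $P_w$ have source and target orders that are opposite (since $Y'$ is below $w$), so they must meet; if $a$ succeeds $s$, the same holds for $Q$ and $P_Y$.
This is exactly the paper's proof, and it avoids the loose ends of your Jordan-curve sketch:
\begin{itemize}
\item $P_Y$ and $P_w$ may meet or overlap after $s$, so $P_Y\cup[Y,w]\cup P_w$ need not be a simple closed curve.
\item Arriving at $Y'$ ``from the left'' is arriving from the \emph{inside} of your region, so the case split as written does not make sense.
\item $Q$ may run up along $\gamma$ into $Y'$, which needs a separate remark. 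A point of $Q$ on $\gamma$ strictly between $Y$ and $Y'$ would itself be $S$-alive, contradicting the choice of $Y'$, so $Q$ would have to pass through $Y\in P_Y$.
\end{itemize}

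In Step~2 the order is reversed. The entry order of $[i_0+b,I_0+B]\times[J_0,J_0+B]$ walks \emph{down} its left side $\gamma$, so $w'$, which is not below $Y'$, precedes or equals $Y'$. If you take $Y'$ as the earlier source, as you wrote, Lemma~\ref{lem:sourcedominance} gives the opposite inequality. With $w'$ as the earlier source, it gives exactly the conclusion you state.

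Also, your claim that every crossing point $w'$ is $R$-alive (or $T$-alive) uses that $s$ is reachable from $\partial_L\beta\cup\partial_B\beta$. That is not a hypothesis of the lemma. It does hold for every source used in the sweeps. The paper avoids needing it by asserting only that $Y'$ does at least as well as every reachable $S$-alive point of $\gamma$. If you want the stronger statement about all paths from $s$, add this hypothesis explicitly.
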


\begin{proof}
Let $W$ be an $S$-alive point of $\gamma$ reachable from $s$.  By definition,
$Y'$ lies between $Y$ and $W$.  Let $Q$ be the final portion inside
$\upsilon$ of a path witnessing that $Y'$ is reachable from
$\partial_L\beta\cup\partial_B\beta$, and let
$a\in\partial_L\upsilon\cup\partial_B\upsilon$ be its first point.  If
$Y'=Y$, $Y'=W$, or $a=s$, then $Y'$ is immediately reachable from $s$.  If
$a$ precedes $s$, the path $Q$ crosses a path from $s$ to $W$; if $a$
succeeds $s$, it crosses a path from $s$ to $Y$.  Switching paths at an
intersection again shows that $s$ reaches $Y'$.

Every reachable $S$-alive point on $\gamma$ lies on or above $Y'$.  Applying
Lemma~\ref{lem:sourcedominance} to the portion of $\beta$ to the right of
$\gamma$ shows that $Y'$ gives the desired extremum no worse than any such
higher point.
\end{proof}

\paragraph{Overview.}
For each bottom grid edge of a block $\beta$, we will retain only the
right endpoint of its $T$-alive interval and the right endpoint of its
$R$-alive interval.  If a query segment intersects one of these intervals,
then some point of the query can first move right to this endpoint.  By
Lemma~\ref{lem:sourcedominance}, formalized for this setting in
Lemma~\ref{lem:rightendpoint}, the two endpoints therefore suffice for
computing, respectively, the rightmost reachable point on $\partial_T\beta$
and the bottommost reachable point on $\partial_R\beta$.

We will simulate the continuous Fr\'echet DP from all these endpoints
simultaneously, sweeping through the vertical mini-slabs from left to right.
Universal tables will propagate each endpoint across one mini-slab and report
the corresponding top-boundary contribution.  At each separator, we will use
Lemma~\ref{lem:alivesuccessor} to replace the bottommost reached point by the
lowest $S$-alive point through which the sweep can continue.  The sweeps will
be performed in packed form and will take
$O((B^2/b^2)\log^{O(1)}B)$ time per block.

\paragraph{Boundary encodings and universal tables.}
We use the horizontal and vertical description IDs from the preceding
subsection.  A point on a vertical mini-slab boundary is encoded by its grid
edge and by the rank, or between-position, of its coordinate in the
corresponding slab door ordering.  We use the analogous encoding on horizontal
boundaries and use ``-'' for an absent point.  Each boundary ID has
$O(\log B)$ bits, so $b$ boundary IDs fit in one word.  Unlike in Section~\ref{sec:DPfordecision}, the $b$ points in a packed vector may lie on arbitrary rows. We batch all conversions at one separator: sort the $O(B)$ point records together with the $O(B)$ row door-description IDs, copy the appropriate description to each point by a packed scan, and restore the original order. A conversion table receives the containing slab, the packed row descriptions, the points' ranks, and the mini-slab offset, and translates the ranks in groups of $b$. The tables have $n^{1+O(\eps)}$ total preprocessing cost, just as in Section~\ref{sec:encodingreaches}; the batch costs $O((B/b)\log^{O(1)}B)$ time. We use these batched conversions in both sweeps below. We construct three types of tables.
\begin{enumerate}
    \item \textbf{Bottom-to-Right Table.} This table is indexed by the horizontal and vertical description IDs of a vertical mini-slab $\upsilon$, together with a packed vector containing one bottom boundary ID for each of its $b$ bottom grid edges.  For every input point, the table returns the vertical boundary ID of the bottommost point of $\partial_R\upsilon$ reachable from it, or ``-'' if no such point exists.  It also returns one bit indicating whether the input point reaches the upper-right corner of $\upsilon$; this bit is needed because the corner belongs to $\text{top-span}_{\tilde r}^{(\beta\backslash\upsilon)}(X)$.
    \item \textbf{Left-to-Right Table.} This table is indexed by the same two description IDs and a packed vector of $b$ points on $\partial_L\upsilon$.  For every input point, it returns the vertical boundary ID of the bottommost reachable point on $\partial_R\upsilon$ and the top boundary ID of the rightmost reachable point on $\partial_T\upsilon$.  Empty inputs and outputs are represented by ``-''. If an output keeps the input coordinate, the table returns the pass-through symbol ``!'' from Section~\ref{sec:encodingreaches}; otherwise it returns the defining local door endpoint. Keeping the original input ID then permits exact conversion back to a block-relative ID.
    \item \textbf{Verification Table.} This table is additionally given a packed vector of
    candidate points on $\partial_R\upsilon$ and reports which candidates are
    reachable from their corresponding bottom or left input points. For a left input and candidate on the same row, we also supply their comparison sign in the block-relative ordering before conversion. This distinguishes their order even when both have the same between-position in the local ordering.  Thus a failed
    candidate can be replaced by ``-'' in one table lookup.
\end{enumerate}

For each table entry, we choose a representative ordering consistent with the two description IDs and the supplied boundary IDs, and simulate the usual Alt--Godau dynamic program inside the $b\times B$ mini-slab.  The simulation uses only comparisons among the input points, candidate points, and door endpoints, so the output depends only on the table index.  The descriptions and the packed input and candidate vectors have $B^{O(b)}$ possible values.  So the size and construction time of these tables, over all horizontal slabs, is $n^{1+O(\eps)}$.

\paragraph{Packed sweeps.}
Fix a block $\beta=[I_0,I_0+B]\times[J_0,J_0+B]$, put $m=B/b$, and define
\[
    \upsilon_t=[I_0+tb,I_0+(t+1)b]\times[J_0,J_0+B],
    \qquad \gamma_t=\partial_R\upsilon_t,
    \qquad t=0,\ldots,m-1.
\]
Using the same packed scans, we combine the forward scan from
Operation~\ref{op:localreachIDs} with two reverse scans, initialized on
$\partial_T\beta$ and $\partial_R\beta$, respectively.  This gives the $T$-alive and $R$-alive
intervals on every mini-block boundary in $O(m^2)$ time.  We retain them on
the block bottom and on every separator $\gamma_t$.

For every $s$ and $S$, pack the right endpoints of the $S$-alive intervals on
the $b$ bottom grid edges of $\upsilon_s$, using ``-'' for an empty interval.
We run one conceptual sweep from each such packed vector, but process all
sweeps together from left to right.

At $\upsilon_t$, we start the sweeps with $s=t$ by applying the
bottom-to-right table, and advance every earlier sweep by applying the
left-to-right table.  For the $T$-alive sweeps, we retain the rightmost top
candidate encountered so far, including the upper-right corner reported by the
bottom-to-right table.  For the $R$-alive sweeps, we retain only the current
right-side point.

Before moving to $\upsilon_{t+1}$, collect the right-side points produced by
all sweeps with starting index at most $t$.  For either value of $S$, there are
at most $b(t+1)\le B$ such points.  Sort them together with the endpoints of
the $S$-alive intervals on $\gamma_t$ and scan the packed list to replace each
point $Y$ by $\operatorname{succ}_{\gamma_t,S}(Y)$.  We apply the verification
table to these candidates and replace every unreachable candidate by ``-''.
By Lemma~\ref{lem:alivesuccessor}, if the candidate is unreachable, then no
reachable $S$-alive point exists on the separator; otherwise the candidate
preserves the relevant suffix extremum.  The surviving points are used as the
inputs to the next mini-slab.
The packed sort and scan take $O(m\log^{O(1)}B)$ time per separator and per
choice of $S$~\cite{AlbersH97,Chan12}.

After the last mini-slab, the $R$-alive sweeps give the bottommost reachable
points on $\partial_R\beta$, and the accumulated $T$-alive answers give the
rightmost reachable points on $\partial_T\beta$ to the right of the starting
mini-slab.  Using the batched rank conversions above, we convert their local vertical-mini-slab
answers to the block-relative output IDs and store them by starting mini-slab
and bottom-edge offset.

\paragraph{Queries and running time.}
Let $X$ lie on a bottom grid edge of $\upsilon_s$.  To answer
Operation~\ref{op:topposbackslash}, test whether $X$ intersects the $T$-alive
interval on this edge.  If not, return ``-''.  Otherwise,
Lemma~\ref{lem:rightendpoint} shows that the answer stored for the right
endpoint of this interval is precisely the rightmost point of
$\partial_T\beta$ to the right of $\upsilon_s$ reachable from $X$.
Operation~\ref{op:rightneg} is identical, using the $R$-alive interval and its
stored bottommost right-boundary point.  The two intersection tests use the
stored interval IDs and the two $X$-IDs, and hence take constant time by table
lookup.

There are $O(m^2)$ table applications in all the sweeps for one block.  The
side-specific scans take $O(m^2)$ time, and the successor batches take
$O(m^2\log^{O(1)}B)$ time.  Thus the preprocessing time per block is
$O((B^2/b^2)\log^{O(1)}B)$, and over all blocks it is
\[
    O\left((n/B)^2\frac{B^2}{b^2}\log^{O(1)}B\right)
    =O((n^2/\log^2 n)\log^{O(1)}\log n).
\]
The universal-table construction time $n^{1+O(\eps)}$ is smaller for
sufficiently small $\eps$, and both operations have $O(1)$ query time.  This
gives the following lemma.

\begin{lemma}\label{lem:topposrightneg}
There is an algorithm running in time $O((n^2/\log^2 n)\log^{O(1)}\log n)$ that outputs a data structure implementing Operations~\ref{op:topposbackslash} and~\ref{op:rightneg} with $O(1)$ query time.
\end{lemma}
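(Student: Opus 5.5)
The statement collects the construction just described, so the proof verifies correctness of the packed sweeps and totals the costs. I will argue correctness first, then running time.

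\emph{Correctness.} Fix a block $\beta$, a starting index $s$, and $S\in\{T,R\}$. Let $u$ be the right endpoint of the $S$-alive interval on a bottom edge $e$ of $\upsilon_s$. By Lemma~\ref{lem:rightendpoint}, for any query $X\subseteq e\cap\mathcal F_{\tilde r}$ meeting this interval, the desired extremum is the one reachable from $u$; if $X$ misses the interval, the answer is ``-''. So it suffices to show, by induction on $t\ge s$, the following invariant. After separator $\gamma_t$, the retained point is ``-'' exactly when no $S$-alive point of $\gamma_t$ is reachable from $u$. Otherwise it is a point reachable from $u$ that preserves the final extremum; for $S=T$ we also require that the accumulated top candidate equals the rightmost reachable point of $\partial_T\beta$ within $[I_0+(s+1)b,\,I_0+(t+1)b]$.

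\emph{Induction.} In the base step, the bottom-to-right table gives the bottommost reachable point $Y$ of $\gamma_s$, and also flags whether the corner of $\upsilon_s$ is reached. In the inductive step, the left-to-right table advances a point on $\partial_L\upsilon_t$ to the bottommost point on $\gamma_t$ and the rightmost point on $\partial_T\upsilon_t$. In both cases, Lemma~\ref{lem:alivesuccessor} is applied with $s$ replaced by the current input point. It shows that $\operatorname{succ}_{\gamma_t,S}(Y)$ is reachable and preserves the extremum whenever some reachable $S$-alive point exists. Otherwise the verification table rejects the candidate. This is exactly the step I expect to need care with: the replacement must lose no information for the remaining suffix. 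The key facts are these.
\begin{itemize}
\item Any path from $u$ to $\partial_S\beta$ passes through $S$-alive points on each separator it crosses.
\item Lemma~\ref{lem:sourcedominance}, applied to the part of $\beta$ right of $\gamma_t$, shows the lowest such reachable point dominates the others.
\item For $S=T$, the top-boundary contributions from mini-slabs passed earlier are kept separately in the running maximum.
\end{itemize}
The table outputs are well defined because each depends only on order comparisons among door endpoints and input points, as argued for the tables. The rank conversions are exact because they use ``!'' and the supplied comparison signs.

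\emph{Running time.} Per block there are $m=B/b$ stages. At stage $t$ there are at most $t+1$ packed words per $S$, so there are $O(m^2)$ table lookups in total. The per-separator sort, scan and batched conversion each cost $O(m\log^{O(1)}B)$ by packed sorting~\cite{AlbersH97,Chan12}, which is $O(m^2\log^{O(1)}B)$ per block. The forward and reverse scans producing the alive intervals cost $O(m^2)$. Summing over the $(n/B)^2$ blocks gives $O((n/b)^2\log^{O(1)}B)=O((n^2/\log^2n)\log^{O(1)}\log n)$. The universal tables cost $n^{1+O(\eps)}$, and the description IDs come from the preceding subsection within the same bound. A query tests whether $X$ meets the stored interval, using the two $X$-IDs and one table lookup, then reads the stored answer for that edge, so it takes $O(1)$ time.
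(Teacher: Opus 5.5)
Your overall route is the paper's. Lemma~\ref{lem:rightendpoint} reduces a query to the right endpoint $u$ of the alive interval. The bottom-to-right, left-to-right and verification tables drive one packed left-to-right sweep per starting mini-slab. Lemma~\ref{lem:alivesuccessor} together with the verification table justifies the successor replacement. The cost accounting ($O(m^2)$ lookups and $O(m\log^{O(1)}B)$ per separator) is the same. Your $R$-case invariant, the characterization of when the retained point becomes empty, and the running-time bound are all fine.

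The gap is the extra requirement in your $T$-case invariant. You ask that, after $\gamma_t$, the accumulated candidate equal the rightmost point of $\partial_T\beta$ in $[I_0+(s+1)b,\,I_0+(t+1)b]$ reachable from $u$. This is false, and the facts you cite do not give it. Lemma~\ref{lem:sourcedominance} on $\upsilon_{t+1}$ compares two sources only when both reach $\partial_T\upsilon_{t+1}$. Lemma~\ref{lem:alivesuccessor} only preserves the extremum over the entire suffix to the right of a separator.

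Here is a concrete failure. Let $u$ reach two $T$-alive points of $\gamma_s$: the retained point $L=\operatorname{succ}_{\gamma_s,T}(Y)$ and a higher point $W$. Suppose the horizontal doors just above $L$'s row are empty throughout $\upsilon_{s+1}$. Then $L$ cannot reach the top of $\upsilon_{s+1}$ and first meets $\partial_T\beta$ inside $\upsilon_{s+2}$, while $W$ climbs to the top of $\upsilon_{s+1}$. The sweep carries only $L$, so after $\upsilon_{s+1}$ the accumulated candidate misses $W$'s top point, and your induction breaks there. The final answer survives only because $L$'s eventual top point lies farther right.

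The fix is to prove the invariant the algorithm actually maintains, treating an empty answer as $-\infty$. Let $Z_t$ be the rightmost point of $\partial_T\beta$ reachable from the retained point on $\gamma_t$. The invariant is that the accumulated candidate $A_t$ is reachable from $u$, and $\max(A_t,Z_t)$ equals the true answer. Let $c_{t+1}$ be the top output of the left-to-right table on $\upsilon_{t+1}$. The inductive step $Z_t=\max(c_{t+1},Z_{t+1})$ is exactly the preservation clause of Lemma~\ref{lem:alivesuccessor}, and the base case is the same lemma applied to $u$ and $\upsilon_s$. After the last mini-slab, $Z_{m-1}$ can only be the upper-right corner of $\beta$, which the last table lookup has already counted. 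Hence $A_{m-1}$ is the answer.
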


\subsection{Implementing Operations~\ref{op:topnegbackslash} and~\ref{op:rightpos}}

We now implement the two remaining operations.  The difficulty, compared with
Operations~\ref{op:topposbackslash} and~\ref{op:rightneg}, is that the answer is
not determined by a single endpoint of the alive interval on the input edge. Throughout this subsection, all reachability is with respect to
$\mathcal F_{\tilde r}$, and we use the entry-boundary order from the preceding
subsection. We start with some counterparts of the combinatorial lemmas given in Section \ref{sec:op1011}.

\begin{lemma}\label{lem:upperdominance}
Let $R$ be a rectangle, and let
$a_1,a_2\in(\partial_LR\cup\partial_BR)\cap\mathcal F_{\tilde r}$, where
$a_1$ precedes $a_2$.  For $i=1,2$, let $y_i$ be the uppermost point of
$\partial_RR$ reachable from $a_i$ inside $R$, and let $z_i$ be the
leftmost point of $\partial_TR$ reachable from $a_i$ inside $R$.  Whenever
the corresponding points are defined, $y_1$ is not below $y_2$, and $z_1$
is not to the right of $z_2$.
\end{lemma}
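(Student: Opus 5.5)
We argue by contradiction with the same crossing-path exchange used for Lemma~\ref{lem:sourcedominance}, now aimed at the opposite extremes. First we treat the right boundary. Suppose $y_1$ and $y_2$ are both defined and $y_1$ is strictly below $y_2$. Let $P_1$ be an $xy$-monotone path in $\mathcal F_{\tilde r}\cap R$ from $a_1$ to $y_1$, and let $P_2$ be such a path from $a_2$ to $y_2$. On the entry boundary, $a_1$ precedes $a_2$. Since the entry order runs down the left side and then rightward along the bottom, $a_1$ is ``above-left'' of $a_2$. On $\partial_RR$, however, $y_1$ lies below $y_2$. So the endpoints appear in opposite orders, exactly the configuration in the first paragraph of the proof of Lemma~\ref{lem:intervalspan}, and $P_1$ and $P_2$ must share a point $w$.

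Next we make the exchange. Follow $P_1$ from $a_1$ to $w$, then $P_2$ from $w$ to $y_2$. Concatenating two $xy$-monotone subpaths at a common point gives an $xy$-monotone path, and it stays inside $\mathcal F_{\tilde r}\cap R$. So $a_1$ reaches $y_2$, which lies strictly above $y_1$. This contradicts the choice of $y_1$ as the uppermost reachable point on $\partial_RR$.

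The top-boundary statement is handled the same way. If $z_1$ is strictly to the right of $z_2$, paths from $a_1$ to $z_1$ and from $a_2$ to $z_2$ have endpoints in opposite orders (the top is traversed left to right, consistent with the entry order). Hence they intersect. Switching at the intersection lets $a_1$ reach $z_2$, contradicting the fact that $z_1$ is the leftmost point reachable from $a_1$.

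The only step needing care is the intersection claim, which is a planar Jordan-type argument. The region $R$ minus $P_2$ separates the part of the boundary of $R$ that is counterclockwise from $a_2$ to $y_2$ from the complementary part. Here $a_1$ and $y_1$ lie on opposite sides of that arc. Degenerate cases also need checking: $a_1$ or $y_1$ may lie on $P_2$ itself, or $a_1=a_2$ may hold as a corner point. In these cases $w$ can be taken to be that shared point and the conclusion is immediate. Everything else is routine.
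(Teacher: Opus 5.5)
Your proposal is correct and matches the paper's proof. In both, you assume $y_1$ is strictly below $y_2$ (or $z_1$ strictly right of $z_2$), observe that the endpoints occur in opposite orders so the two witness paths intersect, and splice $P_1$ up to the intersection with $P_2$ afterward, so that $a_1$ reaches the better point, a contradiction. Your extra remarks on the Jordan-type separation and the degenerate shared-point cases only make explicit what the paper leaves implicit.
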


\begin{proof}
If $y_1$ were below $y_2$, paths from $a_1$ to $y_1$ and from $a_2$ to
$y_2$ would have their endpoints in opposite orders and hence would
intersect.  Switching paths at an intersection would show that $a_1$ reaches
$y_2$, contradicting the definition of $y_1$.  The proof for the top
boundary is identical.
\end{proof}

Recall the $T$-alive and $R$-alive sets
$\mathcal L_{\tilde r,T}^{(\beta)}$ and
$\mathcal L_{\tilde r,R}^{(\beta)}$.  For $S\in\{T,R\}$, a vertical
mini-slab boundary $\gamma$, and $Y\in\gamma$, let
$\operatorname{up}_{\gamma,S}(Y)$ be the top endpoint of the highest
maximal $S$-alive interval on $\gamma$ whose bottom endpoint is not above
$Y$; use ``-'' if no such interval exists.

\begin{lemma}\label{lem:upperalive}
Let $\upsilon$ be a vertical mini-slab in $\beta$, let
$\gamma=\partial_R\upsilon$, and let
$s\in\partial_L\upsilon\cup\partial_B\upsilon$.  Suppose that $Y$ is the
uppermost point of $\gamma$ reachable from $s$ inside $\upsilon$.  Fix
$S\in\{T,R\}$.  If $s$ reaches an $S$-alive point of $\gamma$, then
$Y'=\operatorname{up}_{\gamma,S}(Y)$ is reachable from $s$.  Moreover,
replacing all such reachable $S$-alive points by $Y'$ does not change the
uppermost reachable point on $\partial_R\beta$ when $S=R$, or the leftmost
reachable point on the portion of $\partial_T\beta$ weakly to the right of
$\gamma$ when $S=T$.
\end{lemma}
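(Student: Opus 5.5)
We mirror the proof of Lemma~\ref{lem:alivesuccessor}, with the roles of bottommost/lowest replaced by uppermost/highest and Lemma~\ref{lem:sourcedominance} replaced by Lemma~\ref{lem:upperdominance}.

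\emph{First claim.} Let $W$ be an $S$-alive point of $\gamma$ reachable from $s$; then $W$ lies on or below $Y$. Let $I$ be the maximal $S$-alive interval containing $W$. Its bottom endpoint is not above $Y$, so $\operatorname{up}_{\gamma,S}(Y)$ is defined. Let $I'$ be the interval whose top endpoint is $Y'$; it is the highest interval with bottom endpoint not above $Y$, so it is at or above $I$.

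If $Y'$ lies on or below $Y$, then $Y'$ lies between $W$ and $Y$ (inclusive). Take a witness path $Q$ inside $\upsilon$ from some $a\in\partial_L\upsilon\cup\partial_B\upsilon$ to $Y'$; it exists because $Y'$ is $S$-alive and hence reachable from the entry boundary of $\beta$. We then argue exactly as in Lemma~\ref{lem:intervalspan} and Lemma~\ref{lem:alivesuccessor}. If $a=s$ we are done. If $a$ precedes $s$, then $Q$ crosses a path from $s$ to $Y$. If $a$ succeeds $s$, then $Q$ crosses a path from $s$ to $W$. Switching paths at the crossing shows that $s$ reaches $Y'$.

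If $Y'$ lies above $Y$, then $Y$ itself lies in $I'$, because the bottom endpoint of $I'$ is not above $Y$. Moreover, $Y$ is $S$-alive, being in $I'$. Any point of $I'$ above $Y$ could be reached from $Y$ by moving up the free interval, since $I'\subseteq\mathcal F_{\tilde r}$ on a single grid edge. This would contradict the maximality of $Y$ unless the interval spans several grid edges. This multi-edge case is the step I expect to be the main obstacle. I would handle it by noting that a maximal alive interval on $\gamma$ is a union of per-edge alive intervals. Crossing a grid vertex requires that vertex to be free, and a vertical move along $\gamma$ stays in $\mathcal F_{\tilde r}$ within $\upsilon$'s closure. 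Hence $s$ reaches every point of $I'$ above $Y$, so $Y'=Y$, contradicting the assumption that $Y'$ is above $Y$. If this is not literally valid (intervals being maximal only per edge), I would instead define intervals per grid edge and repeat the crossing argument with $W$ replaced by $Y$.

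\emph{Second claim.} Every reachable $S$-alive point $Z$ of $\gamma$ lies on or below $Y$. Suppose $Z$ lies in some interval with bottom endpoint not above $Y$; this holds since $Z$ is not above $Y$. Then $Z$ is not above $Y'$, since $I'$ is the highest such interval and $Z$ is at most its top. Apply Lemma~\ref{lem:upperdominance} to the rectangle $R$ consisting of the part of $\beta$ right of $\gamma$, with $a_1=Y'$ preceding $a_2=Z$ in the entry order of $R$ (points higher on the left side come first). This shows that $Y'$ yields an uppermost point on $\partial_R\beta$ not below that of $Z$, and a leftmost point on $\partial_T\beta$ (right of $\gamma$) not right of that of $Z$. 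Conversely, $Y'$ is itself one of the replaced points, reachable by the first claim, so the extremum is unchanged. Points of $Z$ that are not $S$-alive contribute nothing to $\partial_S\beta$ by definition. This gives the preservation statement for both $S=R$ and $S=T$.
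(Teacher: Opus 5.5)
Your proposal is correct and follows the paper's argument: reachability of $Y'$ uses the same crossing argument (via a path from $s$ to $Y$ or to $W$, depending on whether $a$ precedes or succeeds $s$), preservation uses Lemma~\ref{lem:upperdominance} on the part of $\beta$ to the right of $\gamma$, and your case ``$Y'$ above $Y$'' is the paper's case ``$Y$ lies in the interval ending at $Y'$'' recast as a contradiction. The multi-edge worry is unfounded: the maximal interval consists of $S$-alive, hence free, points of $\gamma$, so moving up from $Y$ is a free $xy$-monotone path however many grid edges it spans.
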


\begin{proof}
Let $W$ be an $S$-alive point of $\gamma$ reachable from $s$.  If $Y$ lies
in the interval ending at $Y'$, then $s$ reaches $Y'$ by moving upward from
$Y$.  Otherwise, $W$ is at or below $Y'$, which in turn is below $Y$.

Consider the final portion inside $\upsilon$ of a path witnessing that $Y'$
is $S$-alive, and let $a$ be its first point on
$\partial_L\upsilon\cup\partial_B\upsilon$.  The claim is immediate if
$a=s$.  If $a$ precedes $s$, this path crosses a path from $s$ to $Y$; if
$a$ succeeds $s$, it crosses a path from $s$ to $W$.  Switching paths at an
intersection shows that $s$ reaches $Y'$.

Every reachable $S$-alive point of $\gamma$ is at or below $Y'$.  Applying
Lemma~\ref{lem:upperdominance} to the part of $\beta$ to the right of
$\gamma$ proves the second claim.
\end{proof}

Let $\upsilon$ be a vertical mini-slab and
$\gamma=\partial_R\upsilon$.  For $x\in\partial_B\upsilon$, let $Y(x)$ be
the uppermost point of $\gamma$ reachable from $x$ inside $\upsilon$.  For
$S\in\{T,R\}$, define
\[
    \Psi_S(x)=\operatorname{up}_{\gamma,S}(Y(x))
\]
if $Y(x)$ is defined and this point is reachable from $x$; otherwise set
$\Psi_S(x)=$``-''.

\begin{lemma}\label{lem:criticalvalues}
As $x$ moves from left to right along $\partial_B\upsilon$, the nonempty
values of $\Psi_S(x)$ move monotonically from top to bottom along $\gamma$.
There are $O(B)$ maximal ranges on which $\Psi_S$ is nonempty and constant.

For fixed horizontal and vertical descriptions of $\upsilon$ and fixed
encoded $S$-alive intervals on $\gamma$, $\Psi_S$ is constant on every
bottom-boundary-ID class.
\end{lemma}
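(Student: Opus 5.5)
We prove the three claims in order, using the crossing-path arguments of Lemmas~\ref{lem:upperdominance} and~\ref{lem:upperalive}.

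\emph{Monotonicity.} Take $x_1$ left of $x_2$ on $\partial_B\upsilon$, both with $\Psi_S$ nonempty. Write $Y_i=Y(x_i)$ and $Y_i'=\Psi_S(x_i)$. Since $x_1$ precedes $x_2$ in the entry-boundary order, Lemma~\ref{lem:upperdominance} applied to $R=\upsilon$ gives that $Y_1$ is not below $Y_2$. The map $\operatorname{up}_{\gamma,S}$ is monotone nondecreasing in its argument: raising $Y$ can only enlarge the set of maximal $S$-alive intervals whose bottom endpoint is not above $Y$, so the highest such interval does not drop. Hence $Y_1'=\operatorname{up}_{\gamma,S}(Y_1)$ is not below $Y_2'=\operatorname{up}_{\gamma,S}(Y_2)$. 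So the nonempty values move weakly downward as $x$ moves right.

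\emph{Number of ranges.} Every nonempty value of $\Psi_S$ is the top endpoint of one of the maximal $S$-alive intervals on $\gamma$. There are $O(B)$ such intervals, since each grid edge of $\gamma$ meets $\mathcal L^{(\beta)}_{\tilde r,S}$ in a single interval, and maximal intervals may merge across edges, which only reduces the count. By monotonicity, each value is taken on one contiguous set of nonempty points.

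The main subtlety is that ``-'' values may interleave, so a nonempty range could a priori be split. I would show that the set of $x$ with $\Psi_S(x)$ nonempty is an interval of $\partial_B\upsilon$, or at worst that the empty gaps add only $O(B)$ extra ranges.

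1. By the first part of Lemma~\ref{lem:upperalive}, $\Psi_S(x)$ is nonempty exactly when $x$ reaches some $S$-alive point of $\gamma$.
2. Suppose $x_1<x_2<x_3$ with $x_1$ and $x_3$ reaching $S$-alive points $W_1$ and $W_3$, and $x_2$ free and reaching $\gamma$ at all. I would apply the same switching argument used in Lemma~\ref{lem:intervalspan} (monotone paths with opposite endpoint orders must cross) to conclude that $x_2$ reaches $W_1$, $W_3$, or a point between them that is $S$-alive.
3. If $x_2$ does not reach $\gamma$, or is not free, it lies in an empty gap. Such gaps are separated by door or reach boundaries on the $b$ bottom edges, and each contributes one extra break.

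Either way, the number of maximal constant nonempty ranges is $O(B)+O(b)=O(B)$.

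\emph{Constancy on ID classes.} For the last claim we argue exactly as for the tables in the two preceding subsections. The horizontal and vertical description IDs of $\upsilon$ fix every order comparison among door endpoints inside $\upsilon$. A bottom-boundary ID fixes the grid edge of $x$ and its rank or between-position among that edge's door endpoints. The Alt--Godau DP started from $x$ compares $x$ only with these endpoints, so:
1. $Y(x)$ is a fixed door endpoint, or the corner, or equals the input coordinate in the ``!'' sense, which cannot occur on $\gamma$ for a bottom input;
2. reachability of $\gamma$ from $x$ is determined by the ID;
3. with the encoded $S$-alive intervals on $\gamma$ fixed, $\operatorname{up}_{\gamma,S}(Y(x))$ and its reachability from $x$ are determined by comparisons among fixed endpoints.

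Hence $\Psi_S$ is constant on each class. I expect the gap-handling in the second claim to be the main obstacle. I would try the crossing argument first and fall back on counting gaps per edge.
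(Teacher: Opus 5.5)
Your monotonicity argument and your treatment of the bottom-boundary-ID classes match the paper's proof. The gap is in the count of maximal ranges, specifically in how you handle the points where $\Psi_S$ is empty.

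\textbf{Step~2 is false.} Take $x_1<x_2<x_3$ on $\partial_B\upsilon$, where $x_1$ reaches an $S$-alive point $W_1$, $x_3$ reaches an $S$-alive point $W_3$, and $x_2$ reaches only a point $Z$ of $\gamma$ with $W_3$ below $Z$ below $W_1$. The three paths are then nested, not in opposite orders, so no crossing is forced. Closed doors can keep $x_2$ away from $W_1$ and $W_3$. Moreover, $Z$ can be a dead end from which $\partial_S\beta$ is unreachable, so $Z$ is not $S$-alive. In that case $x_2$ is free and reaches $\gamma$, yet $\Psi_S(x_2)=$``-''. Hence the set where $\Psi_S$ is nonempty is in general not an interval of $\partial_B\upsilon$; the paper explicitly keeps undefined ranges for this reason.

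\textbf{Step~3 does not fill the hole.} It covers only the non-free and not-reaching-$\gamma$ cases, so it misses the situation just described. It also merely asserts that each of the $b$ edges contributes one extra break. You give no reason why empty and nonempty points on a single edge cannot alternate many times.

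\textbf{The missing idea is a per-edge prefix property,} which is what the paper uses. The free points of one bottom grid edge form a single door segment. Suppose a point $x$ on that edge reaches an $S$-alive point of $\gamma$. Then every free point to its left on the same edge can first move right along the door to $x$ and then follow $x$'s path. By Lemma~\ref{lem:upperalive}, $\Psi_S$ is nonempty exactly at the points that reach an $S$-alive point. So on each bottom edge the nonempty points form a prefix of the door, and the empty points create only $O(b)$ breaks over all $b$ edges. Combined with the monotonicity you proved and the $O(B)$ possible values, this gives $O(B)+O(b)=O(B)$ maximal constant nonempty ranges. No crossing argument between different edges is needed for the gaps, and as shown above none is available.
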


\begin{proof}
Lemma~\ref{lem:upperdominance} implies that $Y(x)$ moves from top to bottom,
and $\operatorname{up}_{\gamma,S}$ is monotone in the same order.
Lemma~\ref{lem:upperalive} implies that $\Psi_S(x)$ is nonempty precisely
when $x$ reaches an $S$-alive point of $\gamma$.

On each bottom grid edge, these source points form a prefix of the free
interval: an earlier point can move right to a later point and then follow
its path.  Thus empty ranges introduce only $O(b)$ breaks.  A nonempty value
can change only when it passes from one $S$-alive interval of $\gamma$ to
another, and there are $O(B)$ such intervals.

With the encoded $S$-alive intervals fixed as well, all comparisons made by
the Alt--Godau dynamic program have fixed outcomes, proving the boundary-ID
claim.
\end{proof}

\paragraph{Overview.}
For every block $\beta$, we will compute separate suffix structures for the
$T$-alive and $R$-alive intervals on each vertical mini-slab boundary.  The
first structure will store leftmost reachable points on $\partial_T\beta$,
and the second will store uppermost reachable points on $\partial_R\beta$.
Lemma~\ref{lem:upperdominance} will give the required monotonicity in the
entry-boundary order.  Lemma~\ref{lem:upperalive} will then let us replace
all relevant crossings of a separator by one reachable interval endpoint,
so the two structures can be built by right-to-left sweeps.

For each vertical mini-slab $\upsilon$, we will represent $\Psi_T$ and
$\Psi_R$ on $\partial_B\upsilon$ by their maximal constant ranges.
Lemma~\ref{lem:criticalvalues} will ensure that there are only $O(B)$ such
ranges and that their endpoints can be represented by the existing boundary
IDs.  We will attach the appropriate suffix answer to every range.  A query
will reduce to finding its leftmost $T$-alive or $R$-alive source point and
performing one constant-time predecessor query in the corresponding range
representation.

\paragraph{Preprocessing.}
Fix a block
$\beta=[I_0,I_0+B]\times[J_0,J_0+B]$, put $m=B/b$, and let
\[
    \upsilon_t=[I_0+tb,I_0+(t+1)b]\times[J_0,J_0+B],
    \qquad t=0,\ldots,m-1,
\]
and let $\gamma_t=\partial_R\upsilon_t$.  We use the same boundary IDs and the batched rank conversions from Section~\ref{sec:op1011}.
Thus a list of $O(B)$ points or records occupies $O(m)$ packed words.

We compute the $T$-alive and $R$-alive intervals by one forward scan of
$\beta$ and the two side-specific reverse scans from the preceding
subsection.  We retain these intervals on the bottom boundary and on every
$\gamma_t$.

We build a universal table that takes the horizontal and vertical
descriptions of a vertical mini-slab and a packed vector of $b$ points on its
left side.  For each point, it returns the uppermost reachable point on the
right side and the leftmost reachable point on the top side, including the
top-left corner.  As before, a table entry is obtained by choosing a
representative ordering and simulating the Alt--Godau dynamic program.  There
are $B^{O(b)}$ inputs, so the total table-construction time over all
horizontal slabs is $n^{1+O(\eps)}$.

We also use two upper-successor structures on each $\gamma_t$, one for the
$T$-alive intervals and one for the $R$-alive intervals.  Given a packed list
$Q$ of query points on
$\gamma_t$, the structure for $S\in\{T,R\}$ sorts $Q$ together with the bottom
endpoints of the $S$-alive intervals.  A scan finds, for every $Y\in Q$, the
highest such interval whose bottom endpoint is not above $Y$, and returns its
top endpoint, namely $\operatorname{up}_{\gamma_t,S}(Y)$.  The packed sorting
and scan take $O((m + \vert Q\vert/b)\log^{O(1)}B)$ time~\cite{AlbersH97,Chan12}.  We batch all queries at a separator, and invoke
this structure only when an $S$-alive crossing is guaranteed to exist.
Lemma~\ref{lem:upperalive} therefore guarantees that the returned endpoint is
reachable and preserves the suffix answer.

\paragraph{Suffix sweep.}
For every $S$-alive interval on $\gamma_t$, we store the suffix answer of its
top endpoint.  For $S=T$, this is the leftmost reachable point of
$\partial_T\beta$ weakly to the right of $\gamma_t$; for $S=R$, it is the
uppermost reachable point of $\partial_R\beta$.  We compute the two sets of
answers separately.  On $\gamma_{m-1}=\partial_R\beta$, the $R$-answer is
the top endpoint itself, and the $T$-answer is the upper-right corner of
$\beta$.  Indeed, every $T$-alive point of $\partial_R\beta$ reaches
$\partial_T\beta$ at this corner.

Now process
$t=m-2,m-3,\ldots,0$.  For a fixed $S\in\{T,R\}$, pack the top endpoints of
the $S$-alive intervals on $\gamma_t$ in groups of $b$, and apply the new
table for $\upsilon_{t+1}$.  For every interval this yields a
leftmost top candidate in $\upsilon_{t+1}$ and an uppermost reachable point
$Y$ on $\gamma_{t+1}$.  For $S=T$, a nonempty top candidate is already the
leftmost suffix answer, since every point reached after $\gamma_{t+1}$ lies
weakly farther right.  If this candidate is empty, $T$-aliveness of the input
forces every witness path to cross $\gamma_{t+1}$ at a $T$-alive point; we then
apply the $T$-upper-successor structure, and Lemma~\ref{lem:upperalive}
identifies the interval whose stored $T$-answer should be copied.

For $S=R$, every witness path must cross $\gamma_{t+1}$ at an $R$-alive
point, so we always apply the $R$-upper-successor structure and copy the
stored $R$-answer of the returned interval.  Thus every successor query is
made under the premise of Lemma~\ref{lem:upperalive}.  All such queries at
one separator are batched.  Whenever an answer is copied from
$\gamma_{t+1}$, the batched conversions retain its block-relative door-endpoint ID. A corner tag is first interpreted at its old separator and then re-encoded at the new separator, so ``*'' always denotes the corner of the separator currently in use. Since there are
$O(B)$ intervals on a separator, each stage
costs $O(m\log^{O(1)}B)$ time, and both sweeps cost
$O(m^2\log^{O(1)}B)$ time per block.  We store each answer using the existing
output encoding for its separator; in particular, a shared mini-slab corner
is encoded by ``*'', whereas the upper-right corner of $\beta$ is encoded by
``**'' unless it is also the shared corner.

\paragraph{Breakpoint structures.}
We build one further universal table.  Its input consists of the horizontal
and vertical descriptions of a vertical mini-slab, a choice
$S\in\{T,R\}$, and a packed group of $b$ consecutive $S$-alive intervals on
its right side.  For each listed interval the input also identifies the
interval immediately above it; only one such interval lies outside the
group, and a sentinel is used above the topmost interval.  The table
enumerates the bottom boundary IDs, simulates the
definition of $\Psi_S$, including its reachability test, and returns every
maximal exact range that maps to the top endpoint of a listed interval.

Lemma~\ref{lem:criticalvalues} bounds the output by $O(b)$ records: there are
only $b$ possible nonempty values, and the $b$ bottom grid edges introduce
only $O(b)$ additional breaks.  We retain all maximal ranges, since an
undefined range can separate two occurrences of the same value.  The input
has $O(b\log B)$ bits, and enumerating the boundary IDs costs only a
polynomial factor in $B$.  Thus this table also has $B^{O(b)}$ entries and
can be constructed over all horizontal slabs in $n^{1+O(\eps)}$ time.

For a fixed $\upsilon_s$, we apply the table separately to the $T$-alive and
$R$-alive intervals on $\gamma_s$, in consecutive groups of $b$.  For either
side, the $O(m)$ calls produce $O(B)$ ranges.  We attach to every range
the suffix answer stored for its alive interval, sort the ranges by their
first bottom boundary ID, and store them in a small-universe predecessor
structure.  After a predecessor query, one comparison with the last ID of
the returned range tests membership.  Hence an ID in a gap, where
$\Psi_S=$``-'', is correctly reported as empty.

Each list occupies $O(m)$ packed words, so its construction takes
$O(m\log^{O(1)}B)$ time and its query time is constant. For example, use an $O(b)$-ary tree whose keys at each node fit in one word and are searched by a precomputed table. Its height is $O(\log_b B)=O(1)$, and it can be built from the sorted packed list in $O(B/b)$ time.  The inclusive
$T$-suffix invariant also handles the shared corner: sources that reach the
top-right corner of $\upsilon_s$ occur in the ordinary $\Psi_T$-ranges, whose
attached answer is encoded by ``*''.

\paragraph{Queries and running time.}
Finally consider a query segment
$X\subseteq([i,i+1]\times\{J_0\})\cap\mathcal F_{\tilde r}$ on the bottom
boundary of $\beta$, and let $\upsilon_s$ be the vertical mini-slab containing
this edge.  A constant-time comparison table intersects $X$ separately with
the stored $T$-alive and $R$-alive intervals on this bottom edge.  Let $c_S$
be the leftmost point of the intersection for $S\in\{T,R\}$, or ``-'' if the
intersection is empty.  Because $X$ lies in one free interval, $c_S$ can first
move right along $X$ to every later $S$-alive point in the intersection.
Consequently the extremum obtained from all of $X$ is the one obtained from
$c_S$.

If $c_R\ne$``-'', we query the $R$-range structure for $\upsilon_s$ with
the bottom boundary ID of $c_R$.  Operation~\ref{op:rightpos} returns the
attached suffix answer if the returned exact range contains $c_R$, and
returns ``-'' otherwise.  Operation~\ref{op:topnegbackslash} is identical,
using $c_T$ and the $T$-range structure.  In either case an empty
intersection immediately gives ``-''.  Constant-time tables perform the
comparisons and convert the temporary boundary IDs to the required output
IDs.

The side-specific scans and the construction of the separator descriptions
take $O(m^2)$ time per block.  The suffix sweeps take
$O(m^2\log^{O(1)}B)$ time.  For each of the $m$ starting mini-slabs, the two
breakpoint structures take $O(m\log^{O(1)}B)$ time, for another
$O(m^2\log^{O(1)}B)$ total.
Thus the preprocessing time per block for Operations~\ref{op:topnegbackslash}
and~\ref{op:rightpos} is
\[
    O\left(\frac{B^2}{b^2}\log^{O(1)}B\right)
    =O\left(\frac{B^2}{b^2}\log^{O(1)}\log n\right),
\]
and over all blocks this is $O\left((n/B)^2\cdot (B^2/b^2)\log^{O(1)}B\right)
    =O((n^2/\log^2 n)\log^{O(1)}\log n).$
The universal-table construction time $n^{1+O(\eps)}$ is smaller for
sufficiently small $\eps$.  Both operations have constant query time, giving
the following lemma.

\begin{lemma}\label{lem:topnegrightpos}
There is an algorithm running in time
$O((n^2/\log^2 n)\log^{O(1)}\log n)$ that outputs a data structure
implementing Operations~\ref{op:topnegbackslash} and~\ref{op:rightpos} with
$O(1)$ query time.
\end{lemma}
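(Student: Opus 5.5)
The proof assembles the construction described in the preceding paragraphs and checks two things: that each step is correct, and that each stays within the stated time and query bounds.

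\textbf{Correctness of queries.} Fix a bottom query segment $X$ on edge $e\subseteq\partial_B\upsilon_s$, and let $c_S$ be the leftmost point of $X\cap\mathcal L^{(\beta)}_{\tilde r,S}$. Any point of $X$ that reaches $\partial_S\beta$ is $S$-alive. Moreover $c_S$ can move right along the free interval to every later $S$-alive point of $X$. Hence the leftmost top point and the uppermost right point reachable from $X$ coincide with those reachable from $c_S$.

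Next, apply Lemma~\ref{lem:upperalive} to $\upsilon_s$ with $s=c_S$. Whenever $c_S$ reaches some $S$-alive point of $\gamma_s$, the point $\Psi_S(c_S)$ is reachable. Replacing all reached $S$-alive crossings by $\Psi_S(c_S)$ preserves the target extremum: the uppermost point on $\partial_R\beta$ for $S=R$, and the leftmost point on $\partial_T\beta$ weakly right of $\gamma_s$ for $S=T$.

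For $S=T$ we must still account for top points inside $\upsilon_s$ itself. These are handled by the inclusive convention together with the ``*'' corner tag: the top-right corner of $\upsilon_s$ belongs to $\text{top-span}^{(\beta\backslash\upsilon)}$.

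It then suffices to show that the stored suffix answer at the top endpoint of each $S$-alive interval on $\gamma_t$ is correct. I would prove this by downward induction on $t$.
\begin{itemize}
\item The base case is $t=m-1$, where the answers are immediate.
\item In the inductive step, the table propagates the top endpoint across $\upsilon_{t+1}$. For $T$, a nonempty top candidate inside $\upsilon_{t+1}$ is already optimal, since all later top points lie further right. Otherwise $T$-aliveness, respectively $R$-aliveness, forces a crossing of $\gamma_{t+1}$ at an alive point. Lemma~\ref{lem:upperalive} then justifies copying the answer stored at $\operatorname{up}_{\gamma_{t+1},S}(Y)$, which is correct by the induction hypothesis.
\end{itemize}
Finally, Lemma~\ref{lem:criticalvalues} shows that $\Psi_S$ is constant on bottom-boundary-ID classes and has $O(B)$ maximal ranges. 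So the breakpoint table's output exactly represents $\Psi_S$, and the predecessor lookup with a membership check returns the correct attached answer, including ``-'' on gaps.

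\textbf{Running time.} The universal tables are the left-to-right and breakpoint tables, plus the conversion and comparison tables. Each is indexed by $O(b\log B)=O(\eps\log n)$ bits per horizontal slab, so altogether they cost $n^{1+O(\eps)}$.

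Per block, with $m=B/b$:
\begin{itemize}
\item the forward and reverse scans take $O(m^2)$ time;
\item each of the $m$ suffix-sweep stages uses $O(m)$ table lookups plus one batched upper-successor query via packed sorting, costing $O(m\log^{O(1)}B)$;
\item for each of the $m$ starting mini-slabs, building the two range structures takes $O(m)$ lookups plus packed sorting and an $O(b)$-ary tree, again $O(m\log^{O(1)}B)$.
\end{itemize}
This gives $O(m^2\log^{O(1)}B)$ per block. Over $(n/B)^2$ blocks the total is $O((n^2/b^2)\log^{O(1)}\log n)=O((n^2/\log^2n)\log^{O(1)}\log n)$, using $b\approx\eps\log n/\log\log n$.

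Queries take $O(1)$ time: one comparison lookup, one predecessor search in a constant-height tree, one membership comparison, and one ID conversion. Left-boundary queries are handled symmetrically.

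\textbf{Main obstacle.} The delicate step is the encoding bookkeeping. Answers copied across separators must be re-encoded exactly: ``*'' at the old separator is reinterpreted there and re-tagged at the new one, and ``**'' marks the corner of $\beta$. Relatedly, the range endpoints must be expressible in existing boundary IDs. I would handle this by arguing, as in the Alt--Godau DP induction, that every produced endpoint is a door endpoint, a pass-through input, or a corner. This ensures that the IDs and the $O(b)$-records-per-group bound are valid.
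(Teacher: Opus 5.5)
Your proposal is correct and follows the paper's construction essentially step for step. You reduce a query to the leftmost alive point $c_S$, justify the stored answers via Lemma~\ref{lem:upperalive} and the right-to-left suffix sweep (which you make explicit as a downward induction on $t$), represent $\Psi_S$ by its $O(B)$ exact ranges via Lemma~\ref{lem:criticalvalues}, and your accounting of $O(m^2\log^{O(1)}B)$ per block plus $n^{1+O(\eps)}$ for the universal tables matches the paper's.
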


\subsection{Putting Everything Together}

We now describe how to use the data structures from Section~\ref{sec:preprocessgood} to perform the continuous Fr\'echet DP more efficiently. First apply the algorithm in Lemma~\ref{lem:prune}.  If it returns $r^*$, we store this value and answer every decision query by testing whether $r\ge r^*$; the optimization problem is already solved in this case.  Otherwise it identifies an open interval $(r^-, r^+)$ containing $r^*$ and classifies every block exactly as good or bad with respect to this interval.  With probability $1 - n^{-\omega(1)}$, at most a $1/B$ fraction of the blocks are bad. Lemmas~\ref{lem:makeop6},~\ref{lem:minislabtopspan},~\ref{lem:topposrightneg}, and~\ref{lem:topnegrightpos} now give Operation~\ref{op:localreachIDs}, Operations~\ref{op:xtopnegminislab}--\ref{op:rightpos}, and symmetric versions of the latter operations for query intervals on the left boundary of a block. The total preprocessing time is $O((n^2/\log^2 n)\log^{O(1)}\log n)$.

Suppose that we are given a query value $r$.  If $r\le r^-$, then we can immediately
return ``no'', and if $r\ge r^+$, then we can immediately return ``yes'', since we know that $r^*\in(r^-,r^+)$.  It remains to handle the case that $r\in(r^-,r^+)$.

First run the standard continuous Fr\'echet DP over the blocks, in
increasing order of their lower-left corners. Initialize the reaches on the bottom and left boundaries of the full diagram in $O(n)$ time, as in Section~\ref{sec:DPfordecision}.  When the current block
$\beta=[I_0,I_0+B]\times[J_0,J_0+B]$ is bad, we do not try to use any of the
precomputed data structures: we run the ordinary Alt--Godau DP inside
$\beta$ in $O(B^2)$ time, using the door endpoints corresponding to the query value $r$, and record the resulting reaches on $\partial_T\beta$ and $\partial_R\beta$.

Now suppose that $\beta$ is good.  The input to the block consists of at most
$B$ reachable intervals on $\partial_B\beta$ and at most $B$ reachable intervals
on $\partial_L\beta$.  Consider first a nonempty reachable interval
$X\subseteq [i,i+1]\times\{J_0\}$ on the bottom boundary.  We compute
$\text{X-ID}^{-(\tau)}_i$ and $\text{X-ID}^{+(\tau)}_i$ by comparing the two
endpoints of $X$ with the sorted list of horizontal door endpoints on the same
edge, at the query value $r$; this takes $O(\log B)$ time by binary search.  The binary search uses the door ordering stored for the good block; by Lemma~\ref{lem:goodstructuresallr}, this ordering is the same at $r$ as at the representative value $\tilde r$ used in preprocessing.

Using these two $X$-IDs, Lemma~\ref{lem:goodstructuresallr} allows us to query Operations~\ref{op:xtopnegminislab} and~\ref{op:xtopposminislab} to obtain the two endpoint IDs of
$\text{top-span}_r^{(\upsilon)}(X)$, where $\upsilon$ is the vertical mini-slab
whose bottom boundary contains $X$.  We also query
Operations~\ref{op:topnegbackslash} and~\ref{op:topposbackslash} to obtain the
two endpoint IDs of $\text{top-span}_r^{(\beta\backslash\upsilon)}(X)$, and
query Operations~\ref{op:rightneg} and~\ref{op:rightpos} to obtain the two
endpoint IDs of $\text{right-span}_r^{(\beta)}(X)$.  By
Lemma~\ref{lem:goodstructuresallr}, we decode all returned IDs at $r$ and
combine the two top spans in constant time.  We perform the symmetric
procedure for each nonempty reachable interval on $\partial_L\beta$.  In this
way we obtain $O(B)$ candidate intervals on $\partial_T\beta$ and $O(B)$
candidate intervals on $\partial_R\beta$.

It remains to intersect these intervals with the locally reachable portions of $\partial_R\beta$ and $\partial_T\beta$. We obtain the lower local-reach IDs for all grid edges on
$\partial_T\beta$ and $\partial_R\beta$ by reading the entries stored for
Operation~\ref{op:localreachIDs} in the $O(B/b)$ mini-blocks adjacent to the top
and right sides of $\beta$.  By Lemma~\ref{lem:goodstructuresallr}, we decode these IDs by evaluating the indicated door
endpoints at $r$; the other endpoint of each nonempty local reach is the upper endpoint of its vertical door, or the right endpoint of its horizontal door, at $r$.  By Lemma~\ref{lem:intervalspan}, for every
input segment $X$, the portion of $\partial_T\beta$ reachable from $X$ is
exactly the intersection of $\partial_T\mathcal L_r^{(\beta)}$ with
$\text{top-span}_r^{(\beta)}(X)$, and analogously for the right boundary.  Thus
the output reaches of $\beta$ are obtained by taking the union of the
$O(B)$ candidate span intervals and intersecting this union with the $O(B)$
local-reach intervals on the top and right boundaries.  Sorting the endpoints
and sweeping along the two sides of $\beta$ takes $O(B\log B)$ time.  This
produces the exact reaches on $\partial_T\beta$ and $\partial_R\beta$.

After processing all blocks, we simply answer ``yes'' if and only if $(n,n)$ is reachable.  A good block costs $O(B\log B)$ time, and
there are $O(n^2/B^2)$ blocks.  With high probability, at most a $1/B$ fraction
of the blocks are bad, and each bad block costs $O(B^2)$ time.  With high probability, the total time per query is
\[
    O\left(n+\frac{n^2}{B^2}\cdot B\log B+
    \frac{1}{B}\cdot\frac{n^2}{B^2}\cdot B^2\right)
    =O\left(\frac{n^2\log B}{B}\right).
\]
Choosing the constant $c$ in $B=\log^c n$ sufficiently large, the query time is upper bounded by $O(n^2/\log^C n)$ for any fixed constant $C$ of our choice.  The DP computation above is exact for every outcome of the random sample; the high-probability event is used only to bound the number of bad blocks.

The following lemma summarizes the construction:

\begin{lemma}\label{lem:fastdecisionoracle}
    For every constant $C$, there is a Las Vegas randomized preprocessing algorithm that always outputs a correct decision oracle for the continuous Fr\'echet distance problem.  The preprocessing time is
    $O((n^2/\log^2 n)\log^{O(1)}\log n)$, and with high probability the oracle has query time $O(n^2/\log^C n)$.
\end{lemma}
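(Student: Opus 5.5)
The proof assembles the pieces already established in this section. We begin by running the algorithm of Lemma~\ref{lem:prune}. If it returns $r^*$, the oracle just stores this value and answers a query $r$ by testing $r\ge r^*$. This is trivially correct, has $O(1)$ query time, and the preprocessing cost is that of Lemma~\ref{lem:prune}. Otherwise we obtain an open interval $(r^-,r^+)$ containing $r^*$, with every block classified exactly as good or bad. We then fix $\tilde r\in(r^-,r^+)$ and invoke Lemmas~\ref{lem:makeop6}, \ref{lem:minislabtopspan}, \ref{lem:topposrightneg}, and~\ref{lem:topnegrightpos}, together with their left-boundary symmetric versions. Each runs in $O((n^2/\log^2 n)\log^{O(1)}\log n)$ time, so the total preprocessing matches the stated bound deterministically; only the interval found, and hence the bad-block count, is random. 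This is why the algorithm is Las Vegas.

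For correctness of a query $r$, the cases $r\le r^-$ and $r\ge r^+$ are immediate because $r^*\in(r^-,r^+)$. For $r\in(r^-,r^+)$ we argue block by block, by induction over blocks in lexicographic order, that the reaches computed on $\partial_T\beta\cup\partial_R\beta$ equal $\RRR_r\cap(\partial_T\beta\cup\partial_R\beta)$.
\begin{itemize}
\item For a bad block this holds because we run the exact Alt--Godau DP at $r$.
\item For a good block we use that, since every path from $(0,0)$ enters $\beta$ through its bottom or left side, the reach on the output boundary is the union over input reach segments $X$ (each lying in one grid edge) of $\partial_T\text{span}_r^{(\beta)}(X)$ and $\partial_R\text{span}_r^{(\beta)}(X)$.
\end{itemize}
Lemma~\ref{lem:intervalspan} rewrites each such set as $\partial_T\mathcal L_r^{(\beta)}\cap\text{top-span}_r^{(\beta)}(X)$, and analogously on the right. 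Lemma~\ref{lem:goodstructuresallr} guarantees that the tables built at $\tilde r$ return IDs which, decoded at $r$, give exactly the span endpoints and the lower local-reach endpoints at $r$. The upper local-reach endpoints are the door endpoints, as noted after Operation~\ref{op:localreachIDs}. Combining the split top spans over $\upsilon$ and $\beta\backslash\upsilon$ recovers $\text{top-span}_r^{(\beta)}(X)$, since no reachable top point lies left of $\upsilon$. Taking the union over $X$ and intersecting with the local reaches is then exact. The final answer is whether $(n,n)$ is reachable.

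For the running time, each good block does $O(B)$ binary searches of cost $O(\log B)$, $O(B)$ constant-time operation queries, $O(B/b)$ reads for Operation~\ref{op:localreachIDs}, and one sort-and-sweep of $O(B)$ endpoints. This gives $O(B\log B)$ per good block, and $O((n^2/B)\log B)$ over all $O(n^2/B^2)$ blocks. Each bad block costs $O(B^2)$. On the event of Lemma~\ref{lem:prune}, which holds with probability $1-n^{-\omega(1)}$, there are at most $n^2/B^3$ bad blocks, contributing $O(n^2/B)$. The total query time is therefore $O(n+n^2\log B/B)$. Choosing $B=\log^c n$ with $c>C$ (say $c=C+1$) makes this $O(n^2/\log^C n)$. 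The larger $c$ only enlarges the $\log^{O(1)}\log n$ factors and leaves the $n^{1+O(\eps)}$ table costs negligible, so the preprocessing bound is unaffected.

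The main subtlety is the correctness of the good-block combination step. It relies on the IDs produced at $\tilde r$ decoding consistently at an arbitrary $r$, including the special tags ``!'', ``*'', and ``**'', and on the $X$-IDs being computed at $r$ against an ordering that is unchanged because the block is good. Lemma~\ref{lem:goodstructuresallr} is stated exactly to cover this. The remaining care is to check that the union-then-intersect step matches Lemma~\ref{lem:intervalspan} even when spans from different $X$ overlap; this holds because intersection distributes over union.
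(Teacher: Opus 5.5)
Your proposal is correct and follows essentially the same route as the paper. You run Lemma~\ref{lem:prune}, build the structures of Lemmas~\ref{lem:makeop6}, \ref{lem:minislabtopspan}, \ref{lem:topposrightneg}, and~\ref{lem:topnegrightpos} at $\tilde r$, handle bad blocks by the naive DP, and handle good blocks by decoding span IDs at $r$ (Lemma~\ref{lem:goodstructuresallr}) and intersecting with the local reaches (Lemma~\ref{lem:intervalspan}), which gives query time $O(n^2\log B/B)$. Your explicit block-by-block induction, together with the observation that every monotone path from $(0,0)$ enters a block through its bottom or left side, only spells out correctness details that the paper leaves implicit.
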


Finally we apply the reduction in Lemma~\ref{lem:reduce}
with $T_D=O(n^2/\log^C n)$ and $\delta=1/n^C$ for an arbitrarily large constant $C$.  The decision oracle above accesses the query parameter only through comparisons needed to locate door endpoints in the symbolic orderings, to compare two such endpoints, to test whether a door appears or disappears, or to test whether a door endpoint hits a grid-edge endpoint; these are exactly comparisons with critical values of the type used in Lemma~\ref{lem:reduce}.  The table lookups themselves are purely symbolic, and Lemma~\ref{lem:goodstructuresallr} justifies decoding the tables built at $\tilde r$ at any query value $r\in(r^-,r^+)$.
Consequently, we obtain a Las Vegas randomized algorithm that always returns the exact continuous Fr\'echet distance, and whose running time is
$O(n^2/\log^{C/2-3/2} n)$ with high probability, after preprocessing in $O((n^2/\log^2n)\log^{O(1)}\log n)$ time. 
We conclude:

\begin{theorem} \label{thm:continuousoptimization}
The continuous Fr\'echet distance problem in any constant dimension can be solved by a Las Vegas randomized algorithm that always returns the correct distance and whose running time is $O((n^2/\log^2n)\log^{O(1)}\log n)$
with high probability.
\end{theorem}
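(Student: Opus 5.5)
The plan is to combine the fast decision oracle of Lemma~\ref{lem:fastdecisionoracle} with the general reduction of Lemma~\ref{lem:reduce}.

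First run the preprocessing of Lemma~\ref{lem:fastdecisionoracle} with a large constant $C$ (say $C\ge 10$). This calls Lemma~\ref{lem:prune} and then builds the structures for the good blocks. It costs $O((n^2/\log^2n)\log^{O(1)}\log n)$ time. The output is either $r^*$ itself, in which case we are done, or an oracle whose query time is $T_D=O(n^2/\log^C n)$ with probability $1-n^{-\omega(1)}$. The oracle is always correct, since the randomness only affects how many blocks are bad.

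Next, instantiate Lemma~\ref{lem:reduce} with the setup described after that lemma:
\begin{itemize}
\item $H$ is the set of $O(n)$ edge-supporting 2-flats;
\item $S$ is the set of $O(n)$ distance cones;
\item $P,Q$ are the vertex sets of the two curves;
\item general position is obtained by the infinitesimal perturbation stated there.
\end{itemize}
The hypothesis that must be checked is that the oracle accesses the query parameter $r$ only through comparisons with critical values in $U$. Bad blocks run the ordinary Alt--Godau DP, which only compares door endpoints, tests door appearance, and tests grid-edge incidences. Each of these reduces to comparing $r$ with a local-extremum or triple-intersection value on some flat, or with a pairwise distance. For good blocks, the only $r$-dependent steps are:
\begin{itemize}
\item binary searching the endpoints of $X$ in the stored door orderings;
\item decoding returned IDs as door endpoints at $r$ (Lemma~\ref{lem:goodstructuresallr});
\item the final union/intersection sweep on $\partial_T\beta\cup\partial_R\beta$.
\end{itemize}
All of these compare door endpoints or reach endpoints, and reach endpoints are themselves door endpoints. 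Table lookups are purely symbolic. The checks $r\le r^-$ and $r\ge r^+$ compare $r$ with values that are either critical values or $\pm\infty$.

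Then set $\delta=n^{-C}$ and plug into the bound of Lemma~\ref{lem:reduce}. The term $n^{2-1/O(d)}$ is negligible, and so is $n^{3/2}\log(1/\delta)\log n=O(n^{3/2}\log^2n)$. The remaining terms are
\[
T_D\log n=O(n^2/\log^{C-1}n), \qquad n\sqrt{T_D\log(1/\delta)}\log n=O(n^2/\log^{C/2-3/2}n).
\]
Both are $o(n^2/\log^2n)$ once $C\ge 8$. A union bound over the failure events leaves the total running time at $O((n^2/\log^2n)\log^{O(1)}\log n)$ with high probability. These failure events are the bad-block count from Lemma~\ref{lem:prune} and the $O(\delta)$ events inside Lemma~\ref{lem:reduce}. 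The output is always correct, because both components are Las Vegas, so the algorithm is Las Vegas.

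The main obstacle is the comparison-only condition. The subtle point is the decoding of the "*" and "**" corner tags and the "!" pass-through, since these are not door endpoints. They denote grid vertices or endpoints of $X$, which are already determined, so comparisons against them reduce to comparisons among door endpoints or to fixed integer coordinates. The integer comparisons do not involve $r$ at all. I would verify this case by case for each encoding, relying on the equality-class convention in the footnotes, so that every $r$-dependent branch is a sign test of a critical value.
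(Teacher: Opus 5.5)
Your proposal is correct and follows the paper's argument. Like the paper, you preprocess via Lemma~\ref{lem:fastdecisionoracle}, check that the resulting oracle accesses $r$ only through comparisons with critical values of the type in Lemma~\ref{lem:reduce}, and then apply that lemma with $T_D=O(n^2/\log^C n)$ and $\delta=n^{-C}$ to get the dominant $O(n^2/\log^{C/2-3/2}n)$ term. Your extra remarks spell out details the paper leaves implicit: that $r^\pm$ are themselves critical values, how ``*'', ``**'' and ``!'' are decoded, and the union bound over the two independent failure events.
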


\section*{Acknowledgments}
All ideas in this paper are solely due to the human authors, and the human authors are fully responsible for the contents of this paper. After writing a full draft of the paper, the authors used ChatGPT 5.5 Pro to improve the presentation of some of the proofs, which were then edited again by the authors.

{\small
\bibliographystyle{alphaurl}
\bibliography{refs}

@inproceedings{ChanH25a,
  author       = {Timothy M. Chan and
                  Zhengcheng Huang},
  _editor       = {Oswin Aichholzer and
                  Haitao Wang},
  title        = {Faster Algorithms for Reverse Shortest Path in Unit-Disk Graphs and
                  Related Geometric Optimization Problems: Improving the Shrink-And-Bifurcate
                  Technique},
  booktitle    = {Proc. 41st International Symposium on Computational Geometry (SoCG)},
  _series       = {LIPIcs},
  _volume       = {332},
  pages        = {32:1--32:14},
  _publisher    = {Schloss Dagstuhl - Leibniz-Zentrum f{\"{u}}r Informatik},
  year         = {2025},
  _url          = {https://doi.org/10.4230/LIPIcs.SoCG.2025.32},
  doi          = {10.4230/LIPICS.SOCG.2025.32},
  timestamp    = {Fri, 21 Nov 2025 14:06:54 +0100},
  biburl       = {https://dblp.org/rec/conf/compgeom/ChanH25a.bib},
  bibsource    = {dblp computer science bibliography, https://dblp.org}
}

@article{BuchinBMM17,
  author       = {Kevin Buchin and
                  Maike Buchin and
                  Wouter Meulemans and
                  Wolfgang Mulzer},
  title        = {Four {S}oviets Walk the Dog: Improved Bounds for Computing the {F}r{\'{e}}chet
                  Distance},
  journal      = {Discret. Comput. Geom.},
  volume       = {58},
  number       = {1},
  pages        = {180--216},
  year         = {2017},
  _url          = {https://doi.org/10.1007/s00454-017-9878-7},
  doi          = {10.1007/S00454-017-9878-7},
  timestamp    = {Thu, 12 Mar 2020 17:20:46 +0100},
  biburl       = {https://dblp.org/rec/journals/dcg/BuchinBMM17.bib},
  bibsource    = {dblp computer science bibliography, https://dblp.org},
note 	= {Preliminary version in SODA 2014}
}

@inproceedings{ChengH25,
  author       = {Siu{-}Wing Cheng and
                  Haoqiang Huang},
  _editor       = {Yossi Azar and
                  Debmalya Panigrahi},
  title        = {Fr{\'{e}}chet Distance in Subquadratic Time},
  booktitle    = {Proc. 36th Annual {ACM-SIAM} Symposium on Discrete Algorithms (SODA)},
  pages        = {5100--5113},
  _publisher    = {{SIAM}},
  year         = {2025},
  _url          = {https://doi.org/10.1137/1.9781611978322.173},
  doi          = {10.1137/1.9781611978322.173},
  timestamp    = {Tue, 28 Jan 2025 14:38:41 +0100},
  biburl       = {https://dblp.org/rec/conf/soda/ChengH25.bib},
  bibsource    = {dblp computer science bibliography, https://dblp.org}
}

@article{AgarwalAKS14,
  author       = {Pankaj K. Agarwal and
                  Rinat {Ben Avraham} and
                  Haim Kaplan and
                  Micha Sharir},
  title        = {Computing the Discrete {F}r{\'{e}}chet Distance in Subquadratic
                  Time},
  journal      = {{SIAM} J. Comput.},
  volume       = {43},
  number       = {2},
  pages        = {429--449},
  year         = {2014},
  _url          = {https://doi.org/10.1137/130920526},
  doi          = {10.1137/130920526},
  timestamp    = {Sat, 27 May 2017 14:22:58 +0200},
  biburl       = {https://dblp.org/rec/journals/siamcomp/AgarwalAKS14.bib},
  bibsource    = {dblp computer science bibliography, https://dblp.org},
note 	= {Preliminary version in SODA 2013}
}

@inproceedings{BuchinOS19,
  author       = {Kevin Buchin and
                  Tim Ophelders and
                  Bettina Speckmann},
  _editor       = {Timothy M. Chan},
  title        = {{SETH} Says: Weak {F}r{\'{e}}chet Distance is Faster, but only
                  if it is Continuous and in One Dimension},
  booktitle    = {Proc. 30th Annual {ACM-SIAM} Symposium on Discrete
                  Algorithms (SODA)},
  pages        = {2887--2901},
  _publisher    = {{SIAM}},
  year         = {2019},
  _url          = {https://doi.org/10.1137/1.9781611975482.179},
  doi          = {10.1137/1.9781611975482.179},
  timestamp    = {Tue, 07 May 2024 20:08:16 +0200},
  biburl       = {https://dblp.org/rec/conf/soda/BuchinOS19.bib},
  bibsource    = {dblp computer science bibliography, https://dblp.org}
}

@article{BringmannM16,
  author       = {Karl Bringmann and
                  Wolfgang Mulzer},
  title        = {Approximability of the discrete {F}r{\'{e}}chet distance},
  journal      = {J. Comput. Geom.},
  volume       = {7},
  number       = {2},
  pages        = {46--76},
  year         = {2016},
  url          = {https://doi.org/10.20382/jocg.v7i2a4},
  doi          = {10.20382/JOCG.V7I2A4},
  bibsource    = {dblp computer science bibliography, https://dblp.org},
note = {Preliminary version in SoCG 2015}
}

@inproceedings{Bringmann14,
  author       = {Karl Bringmann},
  title        = {Why Walking the Dog Takes Time: {F}rechet Distance Has No Strongly Subquadratic
                  Algorithms Unless {SETH} Fails},
  booktitle    = {Proc. 55th {IEEE} Annual Symposium on Foundations of Computer Science (FOCS)},
  pages        = {661--670},
  _publisher    = {{IEEE} Computer Society},
  year         = {2014},
  _url          = {https://doi.org/10.1109/FOCS.2014.76},
  doi          = {10.1109/FOCS.2014.76},
  timestamp    = {Tue, 08 Jul 2025 16:37:45 +0200},
  biburl       = {https://dblp.org/rec/conf/focs/Bringmann14.bib},
  bibsource    = {dblp computer science bibliography, https://dblp.org}
}

@inproceedings{BringmannDNP22,
  author       = {Karl Bringmann and
                  Anne Driemel and
                  Andr{\'{e}} Nusser and
                  Ioannis Psarros},
  _editor       = {Joseph (Seffi) Naor and
                  Niv Buchbinder},
  title        = {Tight Bounds for Approximate Near Neighbor Searching for Time Series
                  under the {F}r{\'{e}}chet Distance},
  booktitle    = {Proc. 33rd {ACM-SIAM} Symposium on Discrete Algorithms (SODA)},
  pages        = {517--550},
  _publisher    = {{SIAM}},
  year         = {2022},
  _url          = {https://doi.org/10.1137/1.9781611977073.25},
  doi          = {10.1137/1.9781611977073.25},
  timestamp    = {Tue, 07 May 2024 20:08:16 +0200},
  biburl       = {https://dblp.org/rec/conf/soda/BringmannDNP22.bib},
  bibsource    = {dblp computer science bibliography, https://dblp.org}
}

@inproceedings{ChengHZ25,
  author       = {Siu{-}Wing Cheng and
                  Haoqiang Huang and
                  Shuo Zhang},
  _editor       = {Michal Kouck{\'{y}} and
                  Nikhil Bansal},
  title        = {Constant Approximation of {F}r{\'{e}}chet Distance in Strongly
                  Subquadratic Time},
  booktitle    = {Proc. 57th Annual {ACM} Symposium on Theory of Computing (STOC)},
  pages        = {2329--2340},
  _publisher    = {{ACM}},
  year         = {2025},
  _url          = {https://doi.org/10.1145/3717823.3718157},
  doi          = {10.1145/3717823.3718157},
  timestamp    = {Sun, 02 Nov 2025 21:27:35 +0100},
  biburl       = {https://dblp.org/rec/conf/stoc/ChengHZ25.bib},
  bibsource    = {dblp computer science bibliography, https://dblp.org}
}

@article{ChanR18,
  author       = {Timothy M. Chan and
                  Zahed Rahmati},
  title        = {An improved approximation algorithm for the discrete {F}r{\'{e}}chet
                  distance},
  journal      = {Inf. Process. Lett.},
  volume       = {138},
  pages        = {72--74},
  year         = {2018},
  _url          = {https://doi.org/10.1016/j.ipl.2018.06.011},
  doi          = {10.1016/J.IPL.2018.06.011},
  timestamp    = {Sat, 11 Aug 2018 00:57:39 +0200},
  biburl       = {https://dblp.org/rec/journals/ipl/ChanR18.bib},
  bibsource    = {dblp computer science bibliography, https://dblp.org}
}

@article{Chan20,
  author       = {Timothy M. Chan},
  title        = {More Logarithmic-factor Speedups for {3SUM}, (median, +)-convolution,
                  and Some Geometric {3SUM}-hard Problems},
  journal      = {{ACM} Trans. Algorithms},
  volume       = {16},
  number       = {1},
  pages        = {7:1--7:23},
  year         = {2020},
  _url          = {https://doi.org/10.1145/3363541},
  doi          = {10.1145/3363541},
  timestamp    = {Sat, 08 Jan 2022 02:22:06 +0100},
  biburl       = {https://dblp.org/rec/journals/talg/Chan20.bib},
  bibsource    = {dblp computer science bibliography, https://dblp.org},
note = {Preliminary version in SODA 2018}
}

@article{Chan99,
  author       = {Timothy M. Chan},
  title        = {Geometric Applications of a Randomized Optimization Technique},
  journal      = {Discret. Comput. Geom.},
  volume       = {22},
  number       = {4},
  pages        = {547--567},
  year         = {1999},
  _url          = {https://doi.org/10.1007/PL00009478},
  doi          = {10.1007/PL00009478},
  timestamp    = {Thu, 12 Mar 2020 17:21:21 +0100},
  biburl       = {https://dblp.org/rec/journals/dcg/Chan99.bib},
  bibsource    = {dblp computer science bibliography, https://dblp.org}
}

@article{Chan10,
  author       = {Timothy M. Chan},
  title        = {More Algorithms for All-Pairs Shortest Paths in Weighted Graphs},
  journal      = {{SIAM} J. Comput.},
  volume       = {39},
  number       = {5},
  pages        = {2075--2089},
  year         = {2010},
  _url          = {https://doi.org/10.1137/08071990X},
  doi          = {10.1137/08071990X},
  timestamp    = {Wed, 14 Nov 2018 10:45:05 +0100},
  biburl       = {https://dblp.org/rec/journals/siamcomp/Chan10.bib},
  bibsource    = {dblp computer science bibliography, https://dblp.org},
note = {Preliminary version in STOC 2007}
}

@article{Chan12,
  author       = {Timothy M. Chan},
  title        = {All-pairs shortest paths for unweighted undirected graphs in {$o(mn)$}
                  time},
  journal      = {{ACM} Trans. Algorithms},
  volume       = {8},
  number       = {4},
  pages        = {34:1--34:17},
  year         = {2012},
  url          = {https://doi.org/10.1145/2344422.2344424},
  doi          = {10.1145/2344422.2344424},
  bibsource    = {dblp computer science bibliography, https://dblp.org},
note = {Preliminary version in SODA 2006}
}

@inproceedings{AbboudHWW16,
  author       = {Amir Abboud and
                  Thomas Dueholm Hansen and
                  Virginia {Vassilevska Williams} and
                  Ryan Williams},
  _editor       = {Daniel Wichs and
                  Yishay Mansour},
  title        = {Simulating branching programs with edit distance and friends: or:
                  a polylog shaved is a lower bound made},
  booktitle    = {Proc. 48th Annual ACM Symposium on Theory
                  of Computing (STOC)},
  pages        = {375--388},
  _publisher    = {{ACM}},
  year         = {2016},
  _url          = {https://doi.org/10.1145/2897518.2897653},
  doi          = {10.1145/2897518.2897653},
  timestamp    = {Mon, 30 Mar 2026 15:37:35 +0200},
  biburl       = {https://dblp.org/rec/conf/stoc/AbboudHWW16.bib},
  bibsource    = {dblp computer science bibliography, https://dblp.org}
}

@article{AltG95,
  author       = {Helmut Alt and
                  Michael Godau},
  title        = {Computing the {F}r{\'{e}}chet distance between two polygonal curves},
  journal      = {Int. J. Comput. Geom. Appl.},
  volume       = {5},
  pages        = {75--91},
  year         = {1995},
  url          = {https://doi.org/10.1142/S0218195995000064},
  doi          = {10.1142/S0218195995000064},
  bibsource    = {dblp computer science bibliography, https://dblp.org},
note = {Preliminary version in SoCG 1992}
}

@inproceedings{HorstKOS23,
  author       = {Thijs van der Horst and
                  Marc J. van Kreveld and
                  Tim Ophelders and
                  Bettina Speckmann},
  editor       = {Nikhil Bansal and
                  Viswanath Nagarajan},
  title        = {A Subquadratic {$n^\varepsilon$}-approximation
                  for the Continuous {F}r{\'{e}}chet Distance},
  booktitle    = {Proc. 34th {ACM-SIAM} Symposium on Discrete Algorithms (SODA)},
  pages        = {1759--1776},
  _publisher    = {{SIAM}},
  year         = {2023},
  _url          = {https://doi.org/10.1137/1.9781611977554.ch67},
  doi          = {10.1137/1.9781611977554.CH67},
  timestamp    = {Thu, 01 May 2025 20:26:30 +0200},
  biburl       = {https://dblp.org/rec/conf/soda/HorstKOS23.bib},
  bibsource    = {dblp computer science bibliography, https://dblp.org}
}

@inproceedings{BuchinOS17,
  author       = {Kevin Buchin and
                  Tim Ophelders and
                  Bettina Speckmann},
  _editor       = {Philip N. Klein},
  title        = {Computing the {F}r{\'{e}}chet Distance between Real-Valued Surfaces},
  booktitle    = {Proc. 28th Annual {ACM-SIAM} Symposium on Discrete
                  Algorithms (SODA)},
  pages        = {2443--2455},
  _publisher    = {{SIAM}},
  year         = {2017},
  _url          = {https://doi.org/10.1137/1.9781611974782.162},
  doi          = {10.1137/1.9781611974782.162},
  timestamp    = {Tue, 07 May 2024 20:08:16 +0200},
  biburl       = {https://dblp.org/rec/conf/soda/BuchinOS17.bib},
  bibsource    = {dblp computer science bibliography, https://dblp.org}
}

@inproceedings{BringmannKN19,
  author       = {Karl Bringmann and
                  Marvin K{\"{u}}nnemann and
                  Andr{\'{e}} Nusser},
  _editor       = {Gill Barequet and
                  Yusu Wang},
  title        = {Walking the Dog Fast in Practice: Algorithm Engineering of the {F}r{\'{e}}chet
                  Distance},
  booktitle    = {Proc. 35th International Symposium on Computational Geometry (SoCG)},
  _series       = {LIPIcs},
  volume       = {129},
  pages        = {17:1--17:21},
  _publisher    = {Schloss Dagstuhl - Leibniz-Zentrum f{\"{u}}r Informatik},
  year         = {2019},
  _url          = {https://doi.org/10.4230/LIPIcs.SoCG.2019.17},
  doi          = {10.4230/LIPICS.SOCG.2019.17},
  timestamp    = {Fri, 21 Nov 2025 14:06:54 +0100},
  biburl       = {https://dblp.org/rec/conf/compgeom/BringmannKN19.bib},
  bibsource    = {dblp computer science bibliography, https://dblp.org}
}

@inproceedings{Har-PeledRR25,
  author       = {Sariel Har{-}Peled and
                  Benjamin Raichel and
                  Eliot W. Robson},
  _editor       = {Oswin Aichholzer and
                  Haitao Wang},
  title        = {The {F}r{\'{e}}chet Distance Unleashed: Approximating a Dog with
                  a Frog},
  booktitle    = {Proc. 41st International Symposium on Computational Geometry (SoCG)},
  _series       = {LIPIcs},
  volume       = {332},
  pages        = {54:1--54:13},
  _publisher    = {Schloss Dagstuhl - Leibniz-Zentrum f{\"{u}}r Informatik},
  year         = {2025},
  _url          = {https://doi.org/10.4230/LIPIcs.SoCG.2025.54},
  doi          = {10.4230/LIPICS.SOCG.2025.54},
  timestamp    = {Fri, 21 Nov 2025 14:06:54 +0100},
  biburl       = {https://dblp.org/rec/conf/compgeom/Har-PeledRR25.bib},
  bibsource    = {dblp computer science bibliography, https://dblp.org}
}

@article{Har-PeledR14,
  author       = {Sariel Har{-}Peled and
                  Benjamin Raichel},
  title        = {The {F}r{\'{e}}chet distance revisited and extended},
  journal      = {{ACM} Trans. Algorithms},
  volume       = {10},
  number       = {1},
  pages        = {3:1--3:22},
  year         = {2014},
  _url          = {https://doi.org/10.1145/2532646},
  doi          = {10.1145/2532646},
  timestamp    = {Tue, 06 Nov 2018 12:51:20 +0100},
  biburl       = {https://dblp.org/rec/journals/talg/Har-PeledR14.bib},
  bibsource    = {dblp computer science bibliography, https://dblp.org},
note = {Preliminary version in SoCG 2011}
}

@inproceedings{ConradiHR26,
  author       = {Jacobus Conradi and
                  Ivor van der Hoog and
                  Eva Rotenberg},
  _editor       = {Hee{-}Kap Ahn and
                  Michael Hoffmann and
                  Amir Nayyeri},
  title        = {On Computing the (Exact) {F}r{\'{e}}chet Distance with a Frog},
  booktitle    = {Proc. 42nd International Symposium on Computational Geometry (SoCG)},
  _series       = {LIPIcs},
  _volume       = {367},
  pages        = {35:1--35:20},
  publisher    = {Schloss Dagstuhl - Leibniz-Zentrum f{\"{u}}r Informatik},
  year         = {2026},
  _url          = {https://doi.org/10.4230/LIPIcs.SoCG.2026.35},
  doi          = {10.4230/LIPICS.SOCG.2026.35},
  timestamp    = {Thu, 28 May 2026 16:24:28 +0200},
  biburl       = {https://dblp.org/rec/conf/compgeom/ConradiHR26.bib},
  bibsource    = {dblp computer science bibliography, https://dblp.org}
}

@inproceedings{ConradiHHO26,
  author       = {Jacobus Conradi and
                  Ivor van der Hoog and
                  Thijs van der Horst and
                  Tim Ophelders},
  _editor       = {Sepehr Assadi and
                  Eva Rotenberg},
  title        = {Computing the {F}r{\'{e}}chet Distance When Just One Curve is {$c$-packed}:
                  {A} Simple Almost-Tight Algorithm},
  booktitle    = {Proc. 9th SIAM Symposium on Simplicity in Algorithms (SOSA)},
  pages        = {43--55},
  _publisher    = {{SIAM}},
  year         = {2026},
  _url          = {https://doi.org/10.1137/1.9781611978964.4},
  doi          = {10.1137/1.9781611978964.4},
  timestamp    = {Tue, 07 Apr 2026 20:23:41 +0200},
  biburl       = {https://dblp.org/rec/conf/sosa/ConradiHHO26.bib},
  bibsource    = {dblp computer science bibliography, https://dblp.org}
}

@inproceedings{Blank26,
  author       = {Lotte Blank},
  _editor       = {Hee{-}Kap Ahn and
                  Michael Hoffmann and
                  Amir Nayyeri},
  title        = {Fr{\'{e}}chet Distance in the Imbalanced Case},
  booktitle    = {Proc. 42nd International Symposium on Computational Geometry (SoCG)},
  _series       = {LIPIcs},
  _volume       = {367},
  pages        = {17:1--17:17},
  _publisher    = {Schloss Dagstuhl - Leibniz-Zentrum f{\"{u}}r Informatik},
  year         = {2026},
  _url          = {https://doi.org/10.4230/LIPIcs.SoCG.2026.17},
  doi          = {10.4230/LIPICS.SOCG.2026.17},
  timestamp    = {Thu, 28 May 2026 16:24:28 +0200},
  biburl       = {https://dblp.org/rec/conf/compgeom/Blank26.bib},
  bibsource    = {dblp computer science bibliography, https://dblp.org}
}

@inproceedings{ColombeF21,
  author       = {Connor Colombe and
                  Kyle Fox},
  _editor       = {Kevin Buchin and
                  {\'{E}}ric Colin de Verdi{\`{e}}re},
  title        = {Approximating the (Continuous) {F}r{\'{e}}chet Distance},
  booktitle    = {Proc. 37th International Symposium on Computational Geometry (SoCG)},
  _series       = {LIPIcs},
  _volume       = {189},
  pages        = {26:1--26:14},
  _publisher    = {Schloss Dagstuhl - Leibniz-Zentrum f{\"{u}}r Informatik},
  year         = {2021},
  _url          = {https://doi.org/10.4230/LIPIcs.SoCG.2021.26},
  doi          = {10.4230/LIPICS.SOCG.2021.26},
  timestamp    = {Fri, 21 Nov 2025 14:06:54 +0100},
  biburl       = {https://dblp.org/rec/conf/compgeom/ColombeF21.bib},
  bibsource    = {dblp computer science bibliography, https://dblp.org}
}

@article{BringmannKN21,
  author       = {Karl Bringmann and
                  Marvin K{\"{u}}nnemann and
                  Andr{\'{e}} Nusser},
  title        = {Discrete {F}r{\'{e}}chet Distance under Translation: Conditional
                  Hardness and an Improved Algorithm},
  journal      = {{ACM} Trans. Algorithms},
  volume       = {17},
  number       = {3},
  pages        = {25:1--25:42},
  year         = {2021},
  _url          = {https://doi.org/10.1145/3460656},
  doi          = {10.1145/3460656},
  timestamp    = {Thu, 12 Aug 2021 08:58:06 +0200},
  biburl       = {https://dblp.org/rec/journals/talg/BringmannKN21.bib},
  bibsource    = {dblp computer science bibliography, https://dblp.org},
  note = {Preliminary version in SODA 2019}
}

@article{DriemelHW12,
  author       = {Anne Driemel and
                  Sariel Har{-}Peled and
                  Carola Wenk},
  title        = {Approximating the {F}r{\'{e}}chet Distance for Realistic Curves
                  in Near Linear Time},
  journal      = {Discret. Comput. Geom.},
  volume       = {48},
  number       = {1},
  pages        = {94--127},
  year         = {2012},
  _url          = {https://doi.org/10.1007/s00454-012-9402-z},
  doi          = {10.1007/S00454-012-9402-Z},
  timestamp    = {Tue, 07 May 2024 20:26:22 +0200},
  biburl       = {https://dblp.org/rec/journals/dcg/DriemelHW12.bib},
  bibsource    = {dblp computer science bibliography, https://dblp.org},
  note = {Preliminary version in SoCG 2010}
}

@article{DriemelH13,
  author       = {Anne Driemel and
                  Sariel Har{-}Peled},
  title        = {Jaywalking Your Dog: Computing the {F}r{\'{e}}chet Distance with
                  Shortcuts},
  journal      = {{SIAM} J. Comput.},
  volume       = {42},
  number       = {5},
  pages        = {1830--1866},
  year         = {2013},
  _url          = {https://doi.org/10.1137/120865112},
  doi          = {10.1137/120865112},
  timestamp    = {Tue, 07 May 2024 20:21:44 +0200},
  biburl       = {https://dblp.org/rec/journals/siamcomp/DriemelH13.bib},
  bibsource    = {dblp computer science bibliography, https://dblp.org},
  note = {Preliminary version in SODA 2012}
}

@article{Har-PeledNSS16,
  author       = {Sariel Har{-}Peled and
                  Amir Nayyeri and
                  Mohammad R. Salavatipour and
                  Anastasios Sidiropoulos},
  title        = {How to Walk Your Dog in the Mountains with No Magic Leash},
  journal      = {Discret. Comput. Geom.},
  volume       = {55},
  number       = {1},
  pages        = {39--73},
  year         = {2016},
  _url          = {https://doi.org/10.1007/s00454-015-9737-3},
  doi          = {10.1007/S00454-015-9737-3},
  timestamp    = {Thu, 12 Mar 2020 17:20:54 +0100},
  biburl       = {https://dblp.org/rec/journals/dcg/Har-PeledNSS16.bib},
  bibsource    = {dblp computer science bibliography, https://dblp.org},
  note = {Preliminary version in SoCG 2012}
}

@inproceedings{HorstKOS25,
  author       = {Thijs van der Horst and
                  Marc J. van Kreveld and
                  Tim Ophelders and
                  Bettina Speckmann},
  _editor       = {Anne Benoit and
                  Haim Kaplan and
                  Sebastian Wild and
                  Grzegorz Herman},
  title        = {The Geodesic {F}r{\'{e}}chet Distance Between Two Curves Bounding
                  a Simple Polygon},
  booktitle    = {Proc. 33rd Annual European Symposium on Algorithms (ESA)},
  _series       = {LIPIcs},
  volume       = {351},
  pages        = {35:1--35:16},
  _publisher    = {Schloss Dagstuhl - Leibniz-Zentrum f{\"{u}}r Informatik},
  year         = {2025},
  _url          = {https://doi.org/10.4230/LIPIcs.ESA.2025.35},
  doi          = {10.4230/LIPICS.ESA.2025.35},
  timestamp    = {Fri, 21 Nov 2025 23:44:11 +0100},
  biburl       = {https://dblp.org/rec/conf/esa/HorstKOS25.bib},
  bibsource    = {dblp computer science bibliography, https://dblp.org}
}

@article{AvrahamFKKS15,
  author       = {Rinat {Ben Avraham} and
                  Omrit Filtser and
                  Haim Kaplan and
                  Matthew J. Katz and
                  Micha Sharir},
  title        = {The Discrete and Semicontinuous {F}r{\'{e}}chet Distance with Shortcuts
                  via Approximate Distance Counting and Selection},
  journal      = {{ACM} Trans. Algorithms},
  volume       = {11},
  number       = {4},
  pages        = {29:1--29:29},
  year         = {2015},
  _url          = {https://doi.org/10.1145/2700222},
  doi          = {10.1145/2700222},
  timestamp    = {Sun, 02 Oct 2022 15:49:59 +0200},
  biburl       = {https://dblp.org/rec/journals/talg/AvrahamFKKS15.bib},
  bibsource    = {dblp computer science bibliography, https://dblp.org},
  note = {Preliminary version in SoCG 2014}
}

@article{FiltserGMP24,
  author       = {Omrit Filtser and
                  Mayank Goswami and
                  Joseph S. B. Mitchell and
                  Valentin Polishchuk},
  title        = {On Flipping the {F}r{\'{e}}chet Distance},
  journal      = {Algorithmica},
  volume       = {86},
  number       = {12},
  pages        = {3629--3652},
  year         = {2024},
  _url          = {https://doi.org/10.1007/s00453-024-01267-8},
  doi          = {10.1007/S00453-024-01267-8},
  timestamp    = {Wed, 04 Dec 2024 22:33:38 +0100},
  biburl       = {https://dblp.org/rec/journals/algorithmica/FiltserGMP24.bib},
  bibsource    = {dblp computer science bibliography, https://dblp.org},
  note = {Preliminary version in ITCS 2023}
}

@inproceedings{AfshaniD18,
  author       = {Peyman Afshani and
                  Anne Driemel},
  _editor       = {Artur Czumaj},
  title        = {On the complexity of range searching among curves},
  booktitle    = {Proc. 29th Annual {ACM-SIAM} Symposium on Discrete
                  Algorithms (SODA)},
  pages        = {898--917},
  publisher    = {{SIAM}},
  year         = {2018},
  _url          = {https://doi.org/10.1137/1.9781611975031.58},
  doi          = {10.1137/1.9781611975031.58},
  timestamp    = {Tue, 07 May 2024 20:08:16 +0200},
  biburl       = {https://dblp.org/rec/conf/soda/AfshaniD18.bib},
  bibsource    = {dblp computer science bibliography, https://dblp.org}
}

@article{MasekP80,
  author       = {William J. Masek and
                  Mike Paterson},
  title        = {A Faster Algorithm Computing String Edit Distances},
  journal      = {J. Comput. Syst. Sci.},
  volume       = {20},
  number       = {1},
  pages        = {18--31},
  year         = {1980},
  _url          = {https://doi.org/10.1016/0022-0000(80)90002-1},
  doi          = {10.1016/0022-0000(80)90002-1},
  timestamp    = {Tue, 16 Feb 2021 14:04:42 +0100},
  biburl       = {https://dblp.org/rec/journals/jcss/MasekP80.bib},
  bibsource    = {dblp computer science bibliography, https://dblp.org}
}

@article{GronlundP18,
  author       = {Allan Gr{\o}nlund and
                  Seth Pettie},
  title        = {Threesomes, Degenerates, and Love Triangles},
  journal      = {J. {ACM}},
  volume       = {65},
  number       = {4},
  pages        = {22:1--22:25},
  year         = {2018},
  _url          = {https://doi.org/10.1145/3185378},
  doi          = {10.1145/3185378},
  timestamp    = {Wed, 21 Nov 2018 12:44:29 +0100},
  biburl       = {https://dblp.org/rec/journals/jacm/GronlundP18.bib},
  bibsource    = {dblp computer science bibliography, https://dblp.org},
  note = {Preliminary version in FOCS 2014}
}

@article{BilleF08,
  author       = {Philip Bille and
                  Martin Farach{-}Colton},
  title        = {Fast and compact regular expression matching},
  journal      = {Theor. Comput. Sci.},
  volume       = {409},
  number       = {3},
  pages        = {486--496},
  year         = {2008},
  _url          = {https://doi.org/10.1016/j.tcs.2008.08.042},
  doi          = {10.1016/J.TCS.2008.08.042},
  timestamp    = {Fri, 09 Apr 2021 18:27:58 +0200},
  biburl       = {https://dblp.org/rec/journals/tcs/BilleF08.bib},
  bibsource    = {dblp computer science bibliography, https://dblp.org}
}

@article{Grabowski16,
  author       = {Szymon Grabowski},
  title        = {New tabulation and sparse dynamic programming based techniques for
                  sequence similarity problems},
  journal      = {Discret. Appl. Math.},
  volume       = {212},
  pages        = {96--103},
  year         = {2016},
  _url          = {https://doi.org/10.1016/j.dam.2015.10.040},
  doi          = {10.1016/J.DAM.2015.10.040},
  timestamp    = {Thu, 11 Feb 2021 23:26:28 +0100},
  biburl       = {https://dblp.org/rec/journals/dam/Grabowski16.bib},
  bibsource    = {dblp computer science bibliography, https://dblp.org}
}

@article{BackursI18,
  author       = {Arturs Backurs and
                  Piotr Indyk},
  title        = {Edit Distance Cannot Be Computed in Strongly Subquadratic Time (Unless
                  {SETH} is False)},
  journal      = {{SIAM} J. Comput.},
  volume       = {47},
  number       = {3},
  pages        = {1087--1097},
  year         = {2018},
  _url          = {https://doi.org/10.1137/15M1053128},
  doi          = {10.1137/15M1053128},
  timestamp    = {Sun, 19 Jan 2025 14:37:01 +0100},
  biburl       = {https://dblp.org/rec/journals/siamcomp/BackursI18.bib},
  bibsource    = {dblp computer science bibliography, https://dblp.org},
  note = {Preliminary version in STOC 2015}
}

@inproceedings{AbboudBW15,
  author       = {Amir Abboud and
                  Arturs Backurs and
                  Virginia {Vassilevska Williams}},
  _editor       = {Venkatesan Guruswami},
  title        = {Tight Hardness Results for {LCS} and Other Sequence Similarity Measures},
  booktitle    = {Proc. 56th Annual {IEEE} Symposium on Foundations of Computer Science (FOCS)},
  pages        = {59--78},
  _publisher    = {{IEEE} Computer Society},
  year         = {2015},
  _url          = {https://doi.org/10.1109/FOCS.2015.14},
  doi          = {10.1109/FOCS.2015.14},
  timestamp    = {Tue, 08 Jul 2025 16:38:00 +0200},
  biburl       = {https://dblp.org/rec/conf/focs/AbboudBW15.bib},
  bibsource    = {dblp computer science bibliography, https://dblp.org}
}

@inproceedings{BringmannK15,
  author       = {Karl Bringmann and
                  Marvin K{\"{u}}nnemann},
  _editor       = {Venkatesan Guruswami},
  title        = {Quadratic Conditional Lower Bounds for String Problems and Dynamic
                  Time Warping},
  booktitle    = {Proc. 56th Annual {IEEE} Symposium on Foundations of Computer Science (FOCS)},
  pages        = {79--97},
  _publisher    = {{IEEE} Computer Society},
  year         = {2015},
  _url          = {https://doi.org/10.1109/FOCS.2015.15},
  doi          = {10.1109/FOCS.2015.15},
  timestamp    = {Tue, 08 Jul 2025 16:38:00 +0200},
  biburl       = {https://dblp.org/rec/conf/focs/BringmannK15.bib},
  bibsource    = {dblp computer science bibliography, https://dblp.org}
}

@article{FederM95,
  author       = {Tom{\'{a}}s Feder and
                  Rajeev Motwani},
  title        = {Clique Partitions, Graph Compression and Speeding-Up Algorithms},
  journal      = {J. Comput. Syst. Sci.},
  volume       = {51},
  number       = {2},
  pages        = {261--272},
  year         = {1995},
  _url          = {https://doi.org/10.1006/jcss.1995.1065},
  doi          = {10.1006/JCSS.1995.1065},
  timestamp    = {Tue, 16 Feb 2021 14:04:13 +0100},
  biburl       = {https://dblp.org/rec/journals/jcss/FederM95.bib},
  bibsource    = {dblp computer science bibliography, https://dblp.org},
  note = {Preliminary version in STOC 1991}
}

@article{Fredman76,
  author       = {Michael L. Fredman},
  title        = {New Bounds on the Complexity of the Shortest Path Problem},
  journal      = {{SIAM} J. Comput.},
  volume       = {5},
  number       = {1},
  pages        = {83--89},
  year         = {1976},
  _url          = {https://doi.org/10.1137/0205006},
  doi          = {10.1137/0205006},
  timestamp    = {Sat, 27 May 2017 14:22:58 +0200},
  biburl       = {https://dblp.org/rec/journals/siamcomp/Fredman76.bib},
  bibsource    = {dblp computer science bibliography, https://dblp.org},
  note = {Preliminary version in FOCS 1975}
}

@inproceedings{Chan15,
  author       = {Timothy M. Chan},
  _editor       = {Piotr Indyk},
  title        = {Speeding up the {F}our {R}ussians Algorithm by About One More Logarithmic
                  Factor},
  booktitle    = {Proc. 26th Annual {ACM-SIAM} Symposium on Discrete
                  Algorithms (SODA)},
  pages        = {212--217},
  _publisher    = {{SIAM}},
  year         = {2015},
  _url          = {https://doi.org/10.1137/1.9781611973730.16},
  doi          = {10.1137/1.9781611973730.16},
  timestamp    = {Tue, 02 Feb 2021 17:07:43 +0100},
  biburl       = {https://dblp.org/rec/conf/soda/Chan15.bib},
  bibsource    = {dblp computer science bibliography, https://dblp.org}
}

@article{Yu18,
  author       = {Huacheng Yu},
  title        = {An improved combinatorial algorithm for {B}oolean matrix multiplication},
  journal      = {Inf. Comput.},
  volume       = {261},
  pages        = {240--247},
  year         = {2018},
  url          = {https://doi.org/10.1016/j.ic.2018.02.006},
  doi          = {10.1016/J.IC.2018.02.006},
  bibsource    = {dblp computer science bibliography, https://dblp.org},
  note = {Preliminary version in ICALP 2015}
}

@article{ArlazarovDKF70,
author = {V. Z. Arlazarov and E. A. Dinic and M. A. Kronrod and I. A. Faradzhev},
title = {On economical construction of the transitive closure of a directed graph},
journal = {Sov. Math. Dokl.},
volume = {11},
number = {5},
pages = {1209--1210},
year = {1970}
}

@article{BarbaCILOS19,
  author       = {Luis Barba and
                  Jean Cardinal and
                  John Iacono and
                  Stefan Langerman and
                  Aur{\'{e}}lien Ooms and
                  Noam Solomon},
  title        = {Subquadratic Algorithms for Algebraic {3SUM}},
  journal      = {Discret. Comput. Geom.},
  volume       = {61},
  number       = {4},
  pages        = {698--734},
  year         = {2019},
  _url          = {https://doi.org/10.1007/s00454-018-0040-y},
  doi          = {10.1007/S00454-018-0040-Y},
  timestamp    = {Tue, 14 Oct 2025 19:43:25 +0200},
  biburl       = {https://dblp.org/rec/journals/dcg/BarbaCILOS19.bib},
  bibsource    = {dblp computer science bibliography, https://dblp.org},
  note = {Preliminary version in SoCG 2017}
}

@article{EzraHKS20,
  author       = {Esther Ezra and
                  Sariel Har{-}Peled and
                  Haim Kaplan and
                  Micha Sharir},
  title        = {Decomposing Arrangements of Hyperplanes: {VC}-Dimension, Combinatorial
                  Dimension, and Point Location},
  journal      = {Discret. Comput. Geom.},
  volume       = {64},
  number       = {1},
  pages        = {109--173},
  year         = {2020},
  _url          = {https://doi.org/10.1007/s00454-019-00141-7},
  doi          = {10.1007/S00454-019-00141-7},
  timestamp    = {Tue, 07 May 2024 20:26:22 +0200},
  biburl       = {https://dblp.org/rec/journals/dcg/EzraHKS20.bib},
  bibsource    = {dblp computer science bibliography, https://dblp.org}
}

@article{Meiser93,
  author       = {Stefan Meiser},
  title        = {Point Location in Arrangements of Hyperplanes},
  journal      = {Inf. Comput.},
  volume       = {106},
  number       = {2},
  pages        = {286--303},
  year         = {1993},
  _url          = {https://doi.org/10.1006/inco.1993.1057},
  doi          = {10.1006/INCO.1993.1057},
  timestamp    = {Fri, 12 Feb 2021 22:16:57 +0100},
  biburl       = {https://dblp.org/rec/journals/iandc/Meiser93.bib},
  bibsource    = {dblp computer science bibliography, https://dblp.org}
}

@article{AgarwalASS93,
  author       = {Pankaj K. Agarwal and
                  Boris Aronov and
                  Micha Sharir and
                  Subhash Suri},
  title        = {Selecting Distances in the Plane},
  journal      = {Algorithmica},
  volume       = {9},
  number       = {5},
  pages        = {495--514},
  year         = {1993},
  url          = {https://doi.org/10.1007/BF01187037},
  doi          = {10.1007/BF01187037},
  bibsource    = {dblp computer science bibliography, https://dblp.org}
}

@article{ChanZ24,
  author       = {Timothy M. Chan and
                  Da Wei Zheng},
  title        = {Hopcroft's Problem, Log$^*$ Shaving, Two-dimensional Fractional
                  Cascading, and Decision Trees},
  journal      = {{ACM} Trans. Algorithms},
  volume       = {20},
  number       = {3},
  pages        = {24},
  year         = {2024},
  _url          = {https://doi.org/10.1145/3591357},
  doi          = {10.1145/3591357},
  timestamp    = {Fri, 19 Jul 2024 23:17:24 +0200},
  biburl       = {https://dblp.org/rec/journals/talg/ChanZ24.bib},
  bibsource    = {dblp computer science bibliography, https://dblp.org},
  note = {Preliminary version in SODA 2022}
}

@incollection{AgarwalE98,
author = {Pankaj K. Agarwal and Jeff Erickson},
title = {Geometric range searching and its relatives},
editor = {Bernard Chazelle and Jacob E. Goodman and Richard Pollack},
booktitle = {Advances in Discrete and Computational Geometry},
publisher = {AMS Press},
year = {1998},
pages = {1--56},
url = {https://jeffe.cs.illinois.edu/pubs/pdf/survey.pdf}
}

@incollection{AgarwalS00a,
  author       = {Pankaj K. Agarwal and
                  Micha Sharir},
  editor       = {J{\"{o}}rg{-}R{\"{u}}diger Sack and
                  Jorge Urrutia},
  title        = {Arrangements and Their Applications},
  booktitle    = {Handbook of Computational Geometry},
  pages        = {49--119},
  publisher    = {North Holland / Elsevier},
  year         = {2000},
  _url          = {https://doi.org/10.1016/b978-044482537-7/50003-6},
  doi          = {10.1016/B978-044482537-7/50003-6},
  timestamp    = {Fri, 28 Jun 2019 09:10:30 +0200},
  biburl       = {https://dblp.org/rec/books/el/00/AgarwalS00a.bib},
  bibsource    = {dblp computer science bibliography, https://dblp.org}
}

@article{ChazelleEGS91,
  author       = {Bernard Chazelle and
                  Herbert Edelsbrunner and
                  Leonidas J. Guibas and
                  Micha Sharir},
  title        = {A Singly Exponential Stratification Scheme for Real Semi-Algebraic
                  Varieties and its Applications},
  journal      = {Theor. Comput. Sci.},
  volume       = {84},
  number       = {1},
  pages        = {77--105},
  year         = {1991},
  _url          = {https://doi.org/10.1016/0304-3975(91)90261-Y},
  doi          = {10.1016/0304-3975(91)90261-Y},
  timestamp    = {Wed, 17 Feb 2021 22:00:06 +0100},
  biburl       = {https://dblp.org/rec/journals/tcs/ChazelleEGS91.bib},
  bibsource    = {dblp computer science bibliography, https://dblp.org}
}

@article{Milnor64,
author = {John W. Milnor},
title = {On the {B}etti numbers of real algebraic varieties},
journal = {Proc. Amer. Math. Soc.},
article = {15},
year = {1964},
pages = {275--280}
}

@incollection{Thom65,
author = {Ren\'e Thom},
title = {Sur l'homologie des vari\'et\'es alg\'ebriques re\'elles},
booktitle = {Differential and Combinatorial Topology},
editor = {S. S. Cairns},
publisher = {Princeton Univ. Press},
year = {1965}
}

@article{Megiddo83,
  author       = {Nimrod Megiddo},
  title        = {Applying Parallel Computation Algorithms in the Design of Serial Algorithms},
  journal      = {J. {ACM}},
  volume       = {30},
  number       = {4},
  pages        = {852--865},
  year         = {1983},
  _url          = {https://doi.org/10.1145/2157.322410},
  doi          = {10.1145/2157.322410},
  timestamp    = {Sun, 02 Jun 2019 20:59:12 +0200},
  biburl       = {https://dblp.org/rec/journals/jacm/Megiddo83.bib},
  bibsource    = {dblp computer science bibliography, https://dblp.org}
}

@article{BentleyO79,
  author       = {Jon Louis Bentley and
                  Thomas Ottmann},
  title        = {Algorithms for Reporting and Counting Geometric Intersections},
  journal      = {{IEEE} Trans. Computers},
  volume       = {28},
  number       = {9},
  pages        = {643--647},
  year         = {1979},
  _url          = {https://doi.org/10.1109/TC.1979.1675432},
  doi          = {10.1109/TC.1979.1675432},
  timestamp    = {Sun, 02 Jun 2019 21:05:11 +0200},
  biburl       = {https://dblp.org/rec/journals/tc/BentleyO79.bib},
  bibsource    = {dblp computer science bibliography, https://dblp.org}
}

@article{AlbersH97,
  author       = {Susanne Albers and
                  Torben Hagerup},
  title        = {Improved Parallel Integer Sorting without Concurrent Writing},
  journal      = {Inf. Comput.},
  volume       = {136},
  number       = {1},
  pages        = {25--51},
  year         = {1997},
  _url          = {https://doi.org/10.1006/inco.1997.2632},
  doi          = {10.1006/INCO.1997.2632},
  timestamp    = {Fri, 12 Feb 2021 22:16:36 +0100},
  biburl       = {https://dblp.org/rec/journals/iandc/AlbersH97.bib},
  bibsource    = {dblp computer science bibliography, https://dblp.org}
}

@article{BenAvrahamKS15,
  author = {Rinat {Ben Avraham} and Haim Kaplan and Micha Sharir},
  title = {A Faster Algorithm for the Discrete {F}r{\'{e}}chet Distance under Translation},
  journal = {CoRR},
  volume = {abs/1501.03724},
  year = {2015},
  url = {https://arxiv.org/abs/1501.03724},
  eprint = {1501.03724},
  archivePrefix = {arXiv}
}
}

\end{document}